\newtheorem{theorem}{Theorem}[section]
\newtheorem{lem}[theorem]{Lemma}
\newtheorem{prop}[theorem]{Proposition}
\newtheorem{cor}[theorem]{Corollary}
\theoremstyle{definition}
\newtheorem{definition}{Definition}[section]
\theoremstyle{remark}
\newtheorem*{remark}{Remark}
\begin{document}



\SetAuthorBlock{Mihails Milehins\CorrespondingAuthor}{%
	Department of Mechanical Engineering,\\
	Auburn University,\\
	Auburn, AL 36849,\\
    email: mzm0390@auburn.edu} 

\SetAuthorBlock{Dan B. Marghitu}{
	Department of Mechanical Engineering,\\
	Auburn University,\\
	Auburn, AL 36849,\\
    email: marghdb@auburn.edu}

\title{Asymptotic Behavior of an Unforced Duhem-Type Hysteretic Oscillator}

\keywords{Multibody System Dynamics, Nonlinear Dynamical Systems}

\begin{abstract}
The article describes fundamental analytical properties of an unforced mechanical oscillator with a Duhem-type viscoelastoplastic hysteretic element. These properties include global existence of solutions, uniqueness of solutions, and convergence of each solution to an equilibrium point.
\end{abstract}

\date{\today} 

\maketitle 

\section{Introduction}\label{sec:introduction}

The Duhem models constitute a class of differential models of hysteresis that are suitable for the description of a variety of physical phenomena, including dry friction, elastoplastic materials, and magnetization. The models are usually attributed to Pierre Duhem \cite[Duhem (1896-1902), as cited in][]{ikhouane_survey_2018}. However, the interest in these models has surged only in the second half of the twentieth century \cite{everett_general_1954, chua_mathematical_1971, chua_generalized_1972, pokrovskii_theory_1973, visintin_continuity_1983, coleman_constitutive_1986, coleman_class_1987, krasnoselskii_systems_1989, visintin_differential_1994, oh_modeling_2003, oh_analysis_2003, oh_semilinear_2005, padthe_counterclockwise_2005, jayawardhana_sufficient_2009, jayawardhana_dissipativity_2011, jayawardhana_stability_2012, ouyang_stability_2012, naser_consistency_2013, naser_characterization_2013, vasquez-beltran_modeling_2023}. Specializations of the models include the Dahl friction model \cite{dahl_solid_1968}, the LuGre friction model \cite{canudas_de_wit_new_1995}, and the Bouc-Wen model \cite{bouc_forced_1967, bouc_modemathematique_1971, wen_method_1976}. Refs. \cite{visintin_differential_1994} and \cite{ikhouane_survey_2018, ikhouane_erratum_2018} provide further general information about the Duhem models. Refs. \cite{krasnoselskii_systems_1989, mayergoyz_mathematical_1991, visintin_differential_1994, brokate_hysteresis_1996, mielke_rate-independent_2015} provide further information about differential models of hysteresis.

The following form of a Duhem model is essentially equivalent\footnote{Apart from syntactic discrepancies, a different set of regularity conditions for $f$, $g$, and $h$ is employed in Ref. \cite{oh_analysis_2003}.} to the form employed in Ref. \cite{oh_analysis_2003}:\footnote{See Appendix \ref{sec:NCF} for notation and conventions.}
\begin{equation}\label{eq:Duhem_OB}
\begin{cases}
\dot{x} = u \\
\dot{z} = f(x, z) g(u) \\
y = - h(x, z) \\
\begin{matrix}
x(0) = x_0 & z(0) = z_0 
\end{matrix}
\end{cases}
\end{equation}
where $x \in \mathbb{R}$ and $z \in \mathbb{R}^n$ are state variables, $u \in \mathbb{R}$ is an input variable, $y \in \mathbb{R}$  is an output variable, $x_0, z_0 \in \mathbb{R}$ are parameters, $f : \mathbb{R} \times \mathbb{R}^n \longrightarrow \mathbb{R}^{n \times p}$, $g : \mathbb{R} \longrightarrow \mathbb{R}^p$, $h : \mathbb{R} \times \mathbb{R}^n \longrightarrow \mathbb{R}$ are continuous functions, and $n, p \in \mathbb{Z}_{\geq 1}$.

In this article, the following specialization of the model given by Eq. \eqref{eq:Duhem_OB} will be considered:
\begin{equation}\label{eq:Duhem_SOB}
\begin{cases}
\dot{x} = u\\
\dot{z} = u + f_1(z) g_1(u) + f_2(z) g_2(u) \\
y = - h_1(x) - h_2(z)\\
\begin{matrix}
x(0) = x_0 & z(0) = z_0
\end{matrix}
\end{cases}
\end{equation}
Here, $x, z \in \mathbb{R}$ are state variables, $u \in \mathbb{R}$ is an input variable, $y \in \mathbb{R}$ is an output variable, $f_1, f_2, g_1, g_2 : \mathbb{R} \longrightarrow \mathbb{R}$ are locally Lipschitz continuous functions, and $h_1, h_2 : \mathbb{R} \longrightarrow \mathbb{R}$ are strictly increasing locally Lipschitz continuous functions (and, therefore, homeomorphisms) such that
\begin{equation}
f_1(0) = f_2(0) = g_1(0) = g_2(0) = h_1(0) = h_2(0) = 0
\end{equation}
\begin{equation}
\forall z \in \mathbb{R}_{< 0}. \; 0 < f_1(z) \wedge f_2(z) < 0
\end{equation}
\begin{equation}
\forall z \in \mathbb{R}_{> 0}. \; f_1(z) < 0 \wedge 0 < f_2(z)
\end{equation}
\begin{equation}
\forall v \in \mathbb{R}_{> 0}. \; 0 < g_1(v) \wedge g_2(v) = 0
\end{equation}
\begin{equation}
\forall v \in \mathbb{R}_{< 0}. \; g_1(v) = 0 \wedge g_2(v) < 0
\end{equation}

The following model represents an unforced mechanical oscillator (e.g., see Ref. \cite{mickens_truly_2010}) with a viscoelastoplastic Duhem-type element and (optionally) additional viscous damping:\footnote{For simplicity, the parameter that represents the inertia of the system is ignored. The inertial properties of the system can be taken into account by rescaling or making appropriate adjustments to $h_1$, $h_2$, and $c$.} 
\begin{equation}\label{eq:main}
\begin{cases}
\dot{x} = v\\
\dot{z} = v + f_1(z) g_1(v) + f_2(z) g_2(v) \\
\dot{v} = - h_1(x) - h_2(z) - c(x, z, v)\\
\begin{matrix}
x(0) = x_0 & z(0) = z_0 & v(0) = v_0
\end{matrix}
\end{cases}
\end{equation}
Here, $x, z, v \in \mathbb{R}$ are state variables, and $x_0, z_0, v_0 \in \mathbb{R}$ are parameters, $c : \mathbb{R}^3 \longrightarrow \mathbb{R}$ is a locally Lipschitz continuous function such that
\begin{equation}
\forall x, z \in \mathbb{R}. \; c(x, z, 0) = 0
\end{equation}
\begin{equation}
\forall x, z, v \in \mathbb{R}. \; 0 \leq v c(x, z, v)
\end{equation}

The article addresses the problem of the asymptotic behavior of solutions of the system given by Eq. \eqref{eq:main}. More specifically, it is shown that every trajectory of the system converges to an equilibrium point. This result is a generalization of a result about the asymptotic behavior of the Bouc-Wen model of class I that was previously presented in Refs. \cite{ikhouane_dynamic_2007, ikhouane_systems_2007}. It should be noted that other related problems were considered in \cite{yakubovich_method_1965, barabanov_absolute_1979, anderssen_global_1998, lacy_hysteretic_2000, logemann_systems_2003, logemann_class_2008, jayawardhana_stability_2012, ouyang_stability_2012, ouyang_absolute_2014}. 

The remainder of the article is organized as follows:
\begin{compactitem}
\item Section \ref{sec:convergence} provides proofs of global existence, uniqueness and boundedness of solutions of Eq. \eqref{eq:main}.
\item Section \ref{sec:SC} provides a proof of convergence of each solution of Eq. \eqref{eq:main} to an equilibrium point. 
\item Section \ref{sec:BW} presents an application of the results developed in Section \ref{sec:SC} to a Bouc-Wen oscillator. 
\item Section \ref{sec:conclusions} provides conclusions and recommendations.
\item Appendix \ref{sec:NCF} describes the mathematical conventions and provides the proofs of secondary results.
\end{compactitem}

\section{Global Existence, Uniqueness, Boundedness}\label{sec:convergence}

Define 
\begin{equation}
\mathcal{E} \triangleq \left\{ (x, z, v) \in \mathbb{R}^3 : h_1(x) = -h_2(z) \wedge v = 0 \right\}
\end{equation}
Then, 
\begin{prop}
$\mathcal{E}$ is the set of all equilibrium points of the system described by Eq. \eqref{eq:main}.
\end{prop}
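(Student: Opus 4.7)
\medskip

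The plan is to prove a straightforward double inclusion, using the fact that a point $(x,z,v) \in \mathbb{R}^3$ is an equilibrium of \eqref{eq:main} if and only if the three expressions on the right-hand side of \eqref{eq:main} all vanish at $(x,z,v)$.

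For the inclusion $\mathcal{E}$ is contained in the equilibrium set, suppose $(x,z,v)$ satisfies $v = 0$ and $h_1(x) = -h_2(z)$. Then the first component is $v = 0$; the second component is $v + f_1(z)g_1(v) + f_2(z)g_2(v) = 0 + f_1(z)g_1(0) + f_2(z)g_2(0) = 0$ by the hypothesis $g_1(0) = g_2(0) = 0$; and the third component is $-h_1(x) - h_2(z) - c(x,z,0) = 0 - 0 = 0$ by the hypothesis $h_1(x) = -h_2(z)$ together with $c(x,z,0) = 0$.

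For the reverse inclusion, suppose $(x,z,v)$ is an equilibrium. From $\dot{x} = v = 0$ one immediately obtains $v = 0$. Substituting $v = 0$ into the third equation and using $c(x,z,0) = 0$ yields $h_1(x) + h_2(z) = 0$, so $(x,z,v) \in \mathcal{E}$. The second equation is automatically satisfied, as shown above, and therefore imposes no further constraint.

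There is no genuine obstacle to this argument; the only points worth stressing are the two normalization hypotheses $g_1(0) = g_2(0) = 0$ and $c(x,z,0) = 0$, which are precisely what make the $\dot{z}$ and $\dot{v}$ equations collapse to trivial identities at $v = 0$. I expect the write-up to be at most a few lines.
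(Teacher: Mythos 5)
Your proposal is correct and follows essentially the same argument as the paper: both directions of the double inclusion use exactly the same facts ($g_1(0)=g_2(0)=0$ to kill the $\dot z$ equation, $c(x,z,0)=0$, and the first equation to force $v=0$ before extracting $h_1(x)=-h_2(z)$ from the third). No substantive difference.
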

\begin{proof}
Suppose that $(x_e, z_e, v_e) \in \mathcal{E}$. Note that $v_e = 0$. Then, since $g_1(v_e) = g_2(v_e) = 0$, $\dot{z} = 0$ and $\dot{x} = 0$. Furthermore, $c(x_e, z_e, v_e) = 0$ and $-h_1(x_e) - h_2(z_e) = 0$. Therefore, $\dot{v} = 0$. Thus, $(x_e, z_e, v_e)$ is an equilibrium point of the system described by Eq. \eqref{eq:main}. Suppose that $(x_e, z_e, v_e)$ is an equilibrium point of the system described by Eq. \eqref{eq:main}. Then, since $\dot{x} = 0$, $v_e = 0$. Thus, $c(x_e, z_e, v_e) = 0$. Therefore, since $\dot{v} = 0$, $-h_1(x_e) - h_2(z_e) = 0$. Hence, $(x_e, z_e, v_e) \in \mathcal{E}$.
\end{proof}

Consider the following Lyapunov function candidate:
\begin{equation}
\mathcal{V} (x, z, v) = \int_0^x h_1(s) ds + \int_0^z h_2(s) ds + \frac{1}{2} v^2
\end{equation}
The following lemmas show that $\mathcal{V}$ is positive definite and radially unbounded:
\begin{lem}\label{thm:pd}
$0 \leq \mathcal{V}(x, z, v)$ for all $x,z,v \in \mathbb{R}$; $\mathcal{V}(x, z, v) = 0$ if and only if $x = z = v = 0$. 
\end{lem}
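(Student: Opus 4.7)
The plan is to split $\mathcal{V}$ into its three summands $I_1(x)\triangleq\int_0^x h_1(s)\,ds$, $I_2(z)\triangleq\int_0^z h_2(s)\,ds$, and $K(v)\triangleq \tfrac12 v^2$, and to show that each term is non-negative and vanishes only at $0$. The claim then follows at once, since a sum of non-negative quantities is zero iff each summand is zero.

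For $K$, non-negativity with equality iff $v=0$ is immediate. For $I_1$, I would exploit the two given properties of $h_1$: it is strictly increasing, and $h_1(0)=0$. These together imply the sign dichotomy $h_1(s)>0$ for $s>0$ and $h_1(s)<0$ for $s<0$. Case $x>0$: the integrand is strictly positive on $(0,x]$ and continuous there, so $I_1(x)>0$. Case $x<0$: rewrite $I_1(x)=-\int_x^0 h_1(s)\,ds$; the integrand is strictly negative on $[x,0)$, so the integral being subtracted is strictly negative, and hence $I_1(x)>0$. Case $x=0$: trivially $I_1(0)=0$. The argument for $I_2$ is verbatim the same with $h_1$ replaced by $h_2$ and $x$ by $z$.

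Combining these, $\mathcal{V}(x,z,v)=I_1(x)+I_2(z)+K(v)\geq 0$, with equality iff $I_1(x)=I_2(z)=K(v)=0$, iff $x=z=v=0$. The only mildly technical point is the strictness of the inequality in the non-trivial cases, which requires knowing that a non-negative continuous function that is strictly positive somewhere on a non-degenerate interval has strictly positive integral; this is a standard fact and I expect no real obstacle. All integrals are well defined since $h_1,h_2$ are continuous.
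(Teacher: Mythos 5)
Your proposal is correct and follows essentially the same route as the paper: decompose $\mathcal{V}$ into the three summands, note each is nonnegative, and use the sign dichotomy of the strictly increasing $h_1,h_2$ with $h_1(0)=h_2(0)=0$ to conclude each integral is strictly positive for a nonzero argument. The only difference is that you spell out the positivity of the integrals case by case, whereas the paper simply asserts it; your added detail is harmless and fine.
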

\begin{proof}
Note that $0 \leq \int_0^u h_1(s) ds$, $0 \leq \int_0^u h_2(s) ds$, and $0 \leq (1/2)u^2$ for all $u \in \mathbb{R}$. Thus, $0 \leq \mathcal{V}(x, z, v)$. It can be verified by substitution that $\mathcal{V}(0) = 0$. Suppose that $\mathcal{V}(x, z, v) = 0$. Note that $\int_0^u h_1(s) ds > 0$ and $\int_0^u h_2(s) ds > 0$ for all $u \in \mathbb{R} \setminus \{ 0 \}$. Thus, $\int_0^x h_1(s) ds = 0$, $\int_0^z h_2(s) ds = 0$ and $\frac{1}{2} v^2 = 0$. Therefore, $x = z = v = 0$. 
\end{proof}
\begin{lem}\label{thm:radubd}
$\mathcal{V}$ is radially unbounded, that is, $\mathcal{V}(x,z,v) \rightarrow +\infty$ as $\lVert (x,z,v) \rVert_{\infty} \rightarrow +\infty$. 
\end{lem}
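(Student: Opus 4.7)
The plan is to decompose $\mathcal{V}$ into its three nonnegative summands (as established in Lemma \ref{thm:pd}, each is $\geq 0$) and show that each summand, viewed as a function of one coordinate, tends to $+\infty$ as the magnitude of that coordinate tends to $+\infty$. Since $\|(x,z,v)\|_\infty \to +\infty$ forces at least one of $\lvert x\rvert$, $\lvert z\rvert$, $\lvert v\rvert$ to be arbitrarily large, and any one of the three terms becoming large is enough to make $\mathcal{V}$ large, radial unboundedness follows.

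The term $\frac{1}{2}v^2$ is immediate. For $\int_0^x h_i(s)\,ds$ with $i \in \{1,2\}$, I would proceed as follows. Since $h_i$ is strictly increasing with $h_i(0) = 0$, any fixed $s_0 > 0$ gives $h_i(s_0) > 0$. For $x > s_0$, monotonicity of $h_i$ together with $h_i \geq 0$ on $[0,s_0]$ yields
\begin{equation*}
\int_0^x h_i(s)\,ds \;\geq\; \int_{s_0}^x h_i(s)\,ds \;\geq\; (x - s_0)\, h_i(s_0),
\end{equation*}
which tends to $+\infty$ as $x \to +\infty$. Symmetrically, pick any $s_0' < 0$; then $h_i(s_0') < 0$, and for $x < s_0'$ one has
\begin{equation*}
\int_0^x h_i(s)\,ds \;=\; -\int_x^0 h_i(s)\,ds \;\geq\; -\int_x^{s_0'} h_i(s)\,ds \;\geq\; (x - s_0')\,h_i(s_0'),
\end{equation*}
and the right-hand side tends to $+\infty$ as $x \to -\infty$ since both factors are negative. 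Thus $\int_0^x h_i(s)\,ds \to +\infty$ as $\lvert x\rvert \to +\infty$.

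To assemble the conclusion, given any $M > 0$, choose $R_1, R_2, R_3 > 0$ such that $\lvert x\rvert \geq R_1$ implies $\int_0^x h_1(s)\,ds \geq M$, $\lvert z\rvert \geq R_2$ implies $\int_0^z h_2(s)\,ds \geq M$, and $\lvert v\rvert \geq R_3 \triangleq \sqrt{2M}$ implies $\tfrac{1}{2}v^2 \geq M$. Setting $R \triangleq \max(R_1, R_2, R_3)$, any $(x,z,v)$ with $\|(x,z,v)\|_\infty \geq R$ must have at least one coordinate of magnitude $\geq R$, whose associated term in $\mathcal{V}$ is then at least $M$; since the remaining two terms are nonnegative by Lemma \ref{thm:pd}, $\mathcal{V}(x,z,v) \geq M$.

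There is no real obstacle here; the only mild point of care is handling the $x < 0$ branch of $\int_0^x h_i(s)\,ds$ with the correct signs, which the monotone lower bound above resolves cleanly. The argument uses only strict monotonicity of $h_1, h_2$ and the normalization $h_i(0) = 0$; surjectivity of the $h_i$ is not needed.
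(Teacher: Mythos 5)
Your proof is correct and follows essentially the same route as the paper: show each of the three nonnegative summands of $\mathcal{V}$ tends to $+\infty$ in its own coordinate, then take the maximum of the three thresholds so that $\lVert (x,z,v) \rVert_{\infty}$ large forces at least one term (hence $\mathcal{V}$, by nonnegativity of the others) above any given $M$. The only difference is that you spell out the elementary lower bound $(x-s_0)h_i(s_0)$ proving divergence of $\int_0^x h_i(s)\,ds$, a step the paper simply asserts from strict monotonicity and continuity of $h_1,h_2$.
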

\begin{proof}
Since $h_1$ and $h_2$ are strictly increasing and continuous, $\lim_{x \rightarrow \pm \infty} \int_0^x h_1(s) ds = +\infty$ and $\lim_{z \rightarrow \pm \infty} \int_0^z h_2(s) ds = +\infty$. Note also that $\lim_{v \rightarrow \pm \infty} \frac{1}{2} v^2 = +\infty$. Fix $M \in \mathbb{R}_{>0}$. Then, obtain $\delta_1, \delta_2, \delta_3 \in \mathbb{R}_{>0}$ such that $\abs{x} > \delta_1$ implies $\int_0^x h_1(s) ds > M$, $\abs{z} > \delta_2$ implies $\int_0^z h_2(s) ds > M$, and $\abs{v} > \delta_3$ implies $\frac{1}{2} v^2 > M$. Define $\delta \triangleq \max (\delta_1, \delta_2, \delta_3)$ and suppose that $\lVert (x, z, v) \rVert_{\infty} > \delta$. Then, either $\abs{x} > \delta_1$ or   $\abs{z} > \delta_2$ or $\abs{v} > \delta_3$. Thus, $\mathcal{V}(x, z, v) > M$. 
\end{proof}

It can be verified that 
\begin{equation}
\dot{\mathcal{V}}(x, z, v) = h_2(z) f_1(z) g_1(v) + h_2(z) f_2(z) g_2(v) - v c(x, z, v)
\end{equation}

\begin{lem}\label{thm:dV_leq_0}
$\dot{\mathcal{V}}(x, z, v) \leq 0$ for all $x, z, v \in \mathbb{R}$.
\end{lem}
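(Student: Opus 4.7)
The plan is a direct sign analysis of each of the three summands in the given expression for $\dot{\mathcal{V}}(x,z,v)$. The third term handles itself immediately: by the standing assumption $0 \leq v\, c(x,z,v)$, one has $-v\, c(x,z,v) \leq 0$.

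For the first two terms, I would first observe that $h_2$ is strictly increasing with $h_2(0) = 0$, so $h_2(z)$ has the same sign as $z$ (and vanishes iff $z = 0$). Combined with the sign assumptions on $f_1$ and $f_2$, this gives the two auxiliary sign facts
\begin{equation}
\forall z \in \mathbb{R}.\ h_2(z) f_1(z) \leq 0 \quad\text{and}\quad h_2(z) f_2(z) \geq 0,
\end{equation}
each verified by splitting on $z < 0$, $z = 0$, $z > 0$ (the case $z = 0$ is trivial because $h_2(0) = 0$; the nonzero cases use that $h_2(z)$ and $f_1(z)$ are of opposite signs while $h_2(z)$ and $f_2(z)$ are of the same sign).

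Next I would combine these with the sign assumptions on $g_1$ and $g_2$. From the hypotheses, $g_1(v) \geq 0$ for all $v$ (positive when $v > 0$, zero otherwise) and $g_2(v) \leq 0$ for all $v$ (negative when $v < 0$, zero otherwise). Multiplying the first inequality above by $g_1(v) \geq 0$ and the second by $g_2(v) \leq 0$ yields
\begin{equation}
h_2(z) f_1(z) g_1(v) \leq 0 \quad\text{and}\quad h_2(z) f_2(z) g_2(v) \leq 0.
\end{equation}
Adding these to $-v\, c(x,z,v) \leq 0$ completes the proof.

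There is no substantive obstacle here; the statement is a case-chase. The only thing to be careful about is not to miss the boundary cases ($z = 0$ or $v = 0$), which is why I would phrase the argument via the two "for all $z$" sign inequalities on $h_2 f_i$ rather than writing out a $3 \times 3$ table of cases in $(\operatorname{sgn} z, \operatorname{sgn} v)$.
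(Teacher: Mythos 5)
Your proposal is correct and follows essentially the same route as the paper's proof: bound each of the three summands separately using the sign facts $h_2(z)f_1(z)\leq 0$, $h_2(z)f_2(z)\geq 0$, $g_1(v)\geq 0$, $g_2(v)\leq 0$, and $v\,c(x,z,v)\geq 0$, then conclude that $\dot{\mathcal{V}}$ is a sum of nonpositive terms. The only difference is that you spell out the case split on the sign of $z$ explicitly, which the paper leaves implicit.
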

\begin{proof}
Fix $x, z, v \in \mathbb{R}$. Note that $-v c(x, z, v) \leq 0$. Note also that $h_2(z) f_1(z) \leq 0$ and $0 \leq h_2(z) f_2(z)$ while $0 \leq g_1(v)$ and $g_2(v) \leq 0$. Thus, $h_2(z) f_1(z) g_1(v) \leq 0$ and $h_2(z) f_2(z) g_2(v) \leq 0$. Thus, $\dot{\mathcal{V}}(x, z, v) \leq 0$ as a sum of nonpositive terms. 
\end{proof}

\begin{prop}\label{thm:eqb}
The solutions of the system given by Eq. \eqref{eq:main} are equibounded.
\end{prop}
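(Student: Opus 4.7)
The plan is to combine the three lemmas just established --- positive definiteness (Lemma \ref{thm:pd}), radial unboundedness (Lemma \ref{thm:radubd}), and non-positivity of $\dot{\mathcal{V}}$ (Lemma \ref{thm:dV_leq_0}) --- in the standard Lyapunov argument for equiboundedness. I interpret equiboundedness in the usual (Yoshizawa) sense: for every $r > 0$ there is a radius $R(r) > 0$ such that any solution issuing from initial data satisfying $\lVert (x_0, z_0, v_0) \rVert_\infty \leq r$ remains in the sup-norm ball of radius $R(r)$ throughout its maximal interval of existence.

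First I would fix $r > 0$ and consider an arbitrary initial condition with $\lVert (x_0, z_0, v_0) \rVert_\infty \leq r$. Let $(x(t), z(t), v(t))$ denote the corresponding maximal solution on some interval $[0, t_{\max})$ (guaranteed to exist, at least locally, by the local Lipschitz hypotheses on $f_1, f_2, g_1, g_2, h_1, h_2, c$). Since $\mathcal{V}$ is continuous on $\mathbb{R}^3$ and the closed sup-norm ball of radius $r$ is compact, the quantity $M_r \triangleq \sup_{\lVert (x_0, z_0, v_0) \rVert_\infty \leq r} \mathcal{V}(x_0, z_0, v_0)$ is finite. By Lemma \ref{thm:dV_leq_0}, $t \mapsto \mathcal{V}(x(t), z(t), v(t))$ is non-increasing, so $\mathcal{V}(x(t), z(t), v(t)) \leq \mathcal{V}(x_0, z_0, v_0) \leq M_r$ for every $t \in [0, t_{\max})$.

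Next I would invoke Lemma \ref{thm:radubd}: there exists $R(r) > 0$ such that $\lVert (x, z, v) \rVert_\infty > R(r)$ implies $\mathcal{V}(x, z, v) > M_r$. Taking the contrapositive on the trajectory gives $\lVert (x(t), z(t), v(t)) \rVert_\infty \leq R(r)$ for all $t \in [0, t_{\max})$, which establishes equiboundedness. (As a byproduct one gets global existence, since a maximally extended solution that stays in a fixed compact set cannot have finite escape time, but the proposition only claims equiboundedness, so I would leave that remark for the forthcoming global existence statement.)

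The only real obstacle is formal rather than conceptual: making sure $M_r$ is genuinely finite, which reduces to the continuity of $\mathcal{V}$ (clear, since it is built from integrals of continuous functions and a quadratic term), and interpreting "equibounded" in a consistent way throughout the section. Everything else is a routine application of the sublevel-set argument that radial unboundedness plus monotonicity of $\mathcal{V}$ along trajectories yields.
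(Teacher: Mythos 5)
Your proof is correct and takes essentially the same route as the paper: the paper simply invokes Theorem 8.7 of Yoshizawa together with Lemmas \ref{thm:pd}, \ref{thm:radubd}, and \ref{thm:dV_leq_0}, and the sublevel-set argument you spell out (monotonicity of $\mathcal{V}$ along trajectories plus radial unboundedness) is precisely the content of that cited theorem. The only cosmetic difference is your use of the sup-norm where the paper's definition of equiboundedness uses the $2$-norm, which is immaterial by equivalence of norms on $\mathbb{R}^3$.
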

\begin{proof}
Noting that $\mathcal{V}$ is continuously differentiable, the proof follows from Lemmas \ref{thm:pd}, \ref{thm:radubd}, and \ref{thm:dV_leq_0} by, for example, Theorem 8.7 in Ref. \cite{yoshizawa_stability_1975}.
\end{proof}
\begin{remark}
See, e.g., Ref. \cite{kellett_compendium_2014} for a description of a relationship between the radially unbounded positive definite functions and class $\mathcal{K}_{\infty}$ functions.
\end{remark}

\begin{theorem}
There exists a unique solution of the IVP given by Eq. \eqref{eq:main} for every initial condition with $x_0, z_0, v_0 \in \mathbb{R}$ on any time interval $[0, T)$ with $T \in \mathbb{R}_{>0} \cup \left\{ + \infty \right\}$. 
\end{theorem}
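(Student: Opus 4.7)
The plan is to combine a standard Picard--Lindel\"of local existence and uniqueness argument with the equiboundedness result already established in Proposition \ref{thm:eqb} so as to rule out finite-time blow-up and thereby extend the local solution to the whole half-line.

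First, I would observe that the right-hand side of Eq. \eqref{eq:main}, namely the vector field
\begin{equation*}
F(x, z, v) = \bigl(v,\; v + f_1(z) g_1(v) + f_2(z) g_2(v),\; -h_1(x) - h_2(z) - c(x, z, v)\bigr),
\end{equation*}
is locally Lipschitz continuous on $\mathbb{R}^3$, being built from sums and products of locally Lipschitz continuous functions (of locally Lipschitz continuous arguments). Hence the classical Picard--Lindel\"of theorem produces, for each initial condition $(x_0, z_0, v_0) \in \mathbb{R}^3$, a unique maximal forward solution defined on an interval $[0, T_{\max})$ with $T_{\max} \in \mathbb{R}_{>0} \cup \{+\infty\}$.

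Next, I would argue by contradiction that $T_{\max} = +\infty$. Assume that $T_{\max} < +\infty$. By the standard maximal-extension principle, a maximal solution with finite right endpoint must exit every compact subset of $\mathbb{R}^3$, so in particular $\lVert (x(t), z(t), v(t)) \rVert_{\infty}$ must be unbounded as $t \to T_{\max}^-$. However, Proposition \ref{thm:eqb} states that solutions are equibounded, which provides an a priori bound on $\lVert (x(t), z(t), v(t)) \rVert_{\infty}$ on the whole interval of existence. This contradicts unboundedness, so $T_{\max} = +\infty$ and the unique solution is defined on any $[0, T)$ as claimed.

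The only delicate point is to confirm that Proposition \ref{thm:eqb} indeed supplies the a priori bound on the \emph{open} maximal interval $[0, T_{\max})$ rather than presupposing global existence. This is immediate from the Lyapunov structure already in hand: by Lemma \ref{thm:dV_leq_0}, $t \mapsto \mathcal{V}(x(t), z(t), v(t))$ is nonincreasing wherever the solution is defined, so the trajectory remains confined to the sublevel set $\{\mathcal{V} \leq \mathcal{V}(x_0, z_0, v_0)\}$, which is compact by Lemmas \ref{thm:pd} and \ref{thm:radubd}. This is the crux of the argument, and no further obstacle is expected.
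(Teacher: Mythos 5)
Your proposal is correct and follows essentially the same route as the paper: local Lipschitz continuity of the state function gives a unique maximal solution, and Proposition \ref{thm:eqb} together with the standard extendability (no escape to infinity in finite time) argument yields globality. Your extra check that the a priori bound holds on the open maximal interval is harmless but not strictly needed, since the paper's definition of equiboundedness already applies to solutions on arbitrary intervals $[0, T)$.
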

\begin{proof}
Taking into account that the state function of the system described by Eq. \eqref{eq:main} is locally Lipschitz continuous, the solutions are unique and exist on a non-empty maximal interval of existence (e.g., see Theorem 54 in Ref. \cite{sontag_mathematical_1998}). Therefore, taking into account Proposition \ref{thm:eqb}, by the theorem on the extendability of the solutions (e.g., see Proposition C.3.6 in Ref. \cite{sontag_mathematical_1998}), each solution can be extended to a unique solution on $[0, +\infty)$.
\end{proof}

\section{Stability and Convergence}\label{sec:SC}

\begin{prop}\label{thm:LS0}
The origin $0 \in \mathbb{R}^{3}$ is a Lyapunov stable equilibrium point of the system given by Eq. \eqref{eq:main}. 
\end{prop}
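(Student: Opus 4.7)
The plan is to apply the standard Lyapunov direct-method theorem, relying entirely on the three properties of $\mathcal{V}$ that have already been established: positive definiteness (Lemma \ref{thm:pd}), continuity, and $\dot{\mathcal{V}} \leq 0$ along trajectories (Lemma \ref{thm:dV_leq_0}). Radial unboundedness (Lemma \ref{thm:radubd}) is not strictly needed here, since Lyapunov stability is a local property, but it was already used in Proposition \ref{thm:eqb} to guarantee the global existence of solutions being referenced.

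First, I would verify that the origin is an equilibrium point at all: since $h_1(0) = h_2(0) = 0$ and $v = 0$, the triple $(0,0,0)$ lies in $\mathcal{E}$, which by the first proposition of Section \ref{sec:convergence} is the equilibrium set of the system.

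Next comes the core argument. Fix $\varepsilon \in \mathbb{R}_{>0}$. The set $S_{\varepsilon} \triangleq \{(x,z,v) \in \mathbb{R}^3 : \lVert(x,z,v)\rVert_{\infty} = \varepsilon\}$ is compact and does not contain the origin, so by continuity of $\mathcal{V}$ and Lemma \ref{thm:pd} the quantity $m \triangleq \min_{S_{\varepsilon}} \mathcal{V}$ is strictly positive. By continuity of $\mathcal{V}$ at the origin together with $\mathcal{V}(0,0,0) = 0$, there exists $\delta \in (0, \varepsilon)$ such that $\lVert(x_0, z_0, v_0)\rVert_{\infty} < \delta$ implies $\mathcal{V}(x_0, z_0, v_0) < m$. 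Let $(x(\cdot), z(\cdot), v(\cdot))$ denote the unique global solution starting from such an initial condition; by Lemma \ref{thm:dV_leq_0}, $t \mapsto \mathcal{V}(x(t), z(t), v(t))$ is non-increasing, so it remains strictly below $m$ for all $t \geq 0$. If the trajectory ever reached $S_{\varepsilon}$ at some time $t^{\star}$, then $\mathcal{V}(x(t^{\star}), z(t^{\star}), v(t^{\star})) \geq m$, a contradiction. Hence the trajectory stays inside the open $\varepsilon$-ball in the supremum norm, establishing Lyapunov stability.

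No step here poses a genuine obstacle; the whole proposition is a textbook application of Lyapunov's direct method, and in line with the style of Proposition \ref{thm:eqb} I expect the authors to simply cite the corresponding theorem (for instance, Theorem 8.1 in Ref. \cite{yoshizawa_stability_1975}) rather than reprove it in detail. The only mildly subtle point is choosing the sublevel-set argument with the $\min_{S_{\varepsilon}} \mathcal{V}$ construction so as to convert $\dot{\mathcal{V}} \leq 0$ into a genuine containment in the $\varepsilon$-ball rather than only in a sublevel set of $\mathcal{V}$.
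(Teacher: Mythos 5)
Your proposal is correct and follows essentially the same route as the paper: the paper verifies that the origin is an equilibrium by substitution and then invokes the standard Lyapunov direct-method theorem (Theorem 4.1 in Ref. \cite{khalil_nonlinear_2002}) with $\mathcal{V}$, Lemmas \ref{thm:pd} and \ref{thm:dV_leq_0} (citing Lemma \ref{thm:radubd} as well, though, as you note, it is not needed for local stability). You merely inline the textbook sublevel-set argument that the cited theorem encapsulates, which is exactly what you anticipated the authors would do by citation.
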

\begin{proof}
That the origin is an equilibrium point can be verified by substitution into Eq. \eqref{eq:main}. Noting that $\mathcal{V}$ is continuously differentiable, Lyapunov stability of the origin follows from Lemmas \ref{thm:pd}, \ref{thm:radubd}, and \ref{thm:dV_leq_0} by, for example, Theorem 4.1 in Ref. \cite{khalil_nonlinear_2002}. 
\end{proof}

Define 
\begin{equation}
\mathcal{M}_z \triangleq \left\{ (x, z, v) \in \mathbb{R}^3 : z = 0 \right\}
\end{equation}
\begin{equation}
\mathcal{M}_v \triangleq \left\{ (x, z, v) \in \mathbb{R}^3 : v = 0 \right\}
\end{equation}
Considering the proof of Lemma \ref{thm:dV_leq_0}, $\dot{\mathcal{V}}(0) = 0$ if and only if $h_2(z) f_1(z) g_1(v) = 0$,  $h_2(z) f_2(z) g_2(v) = 0$, and $v c(x, z, v) = 0$. The following lemma can be inferred immediately:
\begin{lem}\label{thm:E_iV_Mz_Mv}
$\mathcal{E} \subseteq \mathcal{M}_v \subseteq \dot{\mathcal{V}}^{-1}(0) \subseteq \mathcal{M}_z \cup \mathcal{M}_v$
\end{lem}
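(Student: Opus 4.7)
The lemma asserts three inclusions, each of which I expect to reduce to a direct verification from the definitions. The first, $\mathcal{E} \subseteq \mathcal{M}_v$, is immediate, because every element of $\mathcal{E}$ has $v = 0$ by construction. For the second, $\mathcal{M}_v \subseteq \dot{\mathcal{V}}^{-1}(0)$, I would substitute $v = 0$ into the formula for $\dot{\mathcal{V}}$ displayed just before Lemma \ref{thm:dV_leq_0}: the hypotheses $g_1(0) = g_2(0) = 0$ wipe out the first two summands, and $c(x, z, 0) = 0$ kills $v\,c(x, z, v)$, so the sum is $0$.

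For the third inclusion, $\dot{\mathcal{V}}^{-1}(0) \subseteq \mathcal{M}_z \cup \mathcal{M}_v$, I would argue by contrapositive, leveraging the characterization recorded immediately before the statement of the lemma: since each summand of $\dot{\mathcal{V}}$ is non-positive (Lemma \ref{thm:dV_leq_0}), $\dot{\mathcal{V}}(x,z,v) = 0$ forces $h_2(z) f_1(z) g_1(v) = 0$ and $h_2(z) f_2(z) g_2(v) = 0$ separately. Assume $z \neq 0$ and $v \neq 0$. Since $h_2$ is strictly increasing with $h_2(0) = 0$, we have $h_2(z) \neq 0$; and the strict sign hypotheses on $f_1, f_2$ give $f_1(z) \neq 0$ and $f_2(z) \neq 0$ whenever $z \neq 0$. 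If $v > 0$, then $g_1(v) > 0$, so the first summand is a product of three nonzero factors, contradicting its vanishing; if $v < 0$, then $g_2(v) < 0$, and the same contradiction appears in the second summand. Either way, we conclude $z = 0$ or $v = 0$.

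The main point to watch is simply keeping the sign bookkeeping straight in the third inclusion: outside of the origin none of $h_2$, $f_1$, $f_2$ vanishes, and the sign of $v$ dictates which of the pairs $(f_1, g_1)$ or $(f_2, g_2)$ is forced to contribute a nonzero term. I do not anticipate any genuine obstacle; the whole argument is a repackaging of the sign analysis already carried out in the proof of Lemma \ref{thm:dV_leq_0}, which is why the authors describe the conclusion as immediate.
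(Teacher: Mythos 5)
Your proposal is correct and follows the same route the paper intends: the paper gives no explicit proof, stating the lemma is immediate from the observation (drawn from the proof of Lemma \ref{thm:dV_leq_0}) that $\dot{\mathcal{V}}(x,z,v)=0$ exactly when each of the three nonpositive summands vanishes, and your three inclusions — $v=0$ by definition of $\mathcal{E}$, the vanishing of $g_1(0)$, $g_2(0)$, $c(x,z,0)$, and the contrapositive sign analysis using $h_2(z)\neq 0$, $f_1(z)\neq 0$, $f_2(z)\neq 0$ for $z\neq 0$ together with the sign of $v$ selecting a nonzero $g_1(v)$ or $g_2(v)$ — are precisely the details being elided.
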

\begin{lem}\label{thm:LIS}
With reference to the system given by Eq. \eqref{eq:main}, the largest invariant set that is contained in $\dot{\mathcal{V}}^{-1}(0)$ is $\mathcal{E}$.
\end{lem}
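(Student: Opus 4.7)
The plan is to establish the two set inclusions separately. First I will observe that $\mathcal{E}$ is itself an invariant subset of $\dot{\mathcal{V}}^{-1}(0)$: the inclusion $\mathcal{E}\subseteq\dot{\mathcal{V}}^{-1}(0)$ is immediate from Lemma \ref{thm:E_iV_Mz_Mv}, and invariance is immediate because every point of $\mathcal{E}$ is an equilibrium. It then remains to prove the maximality statement: any trajectory $(x(\cdot),z(\cdot),v(\cdot))$ that remains in $\dot{\mathcal{V}}^{-1}(0)$ for all $t\ge 0$ must start (and therefore stay) in $\mathcal{E}$.

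To argue this, I would invoke the chain of inclusions from Lemma \ref{thm:E_iV_Mz_Mv} to conclude that such a trajectory lies in $\mathcal{M}_z\cup\mathcal{M}_v$, so at every instant $t$ either $z(t)=0$ or $v(t)=0$. The key step is to rule out the possibility that $v$ is ever nonzero. Assume for contradiction that $v(t_0)\neq 0$ for some $t_0\ge 0$. By continuity of $v$, there is an open interval $I$ containing $t_0$ on which $v$ does not vanish, and on $I$ the dichotomy forces $z\equiv 0$, hence $\dot{z}\equiv 0$ on $I$. Substituting $z=0$ into the second equation of Eq. \eqref{eq:main} and using the hypothesis $f_1(0)=f_2(0)=0$ collapses the hysteretic dynamics to $\dot{z}=v$ on $I$, which yields $v\equiv 0$ on $I$ — a contradiction.

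Consequently $v(t)=0$ for all $t\ge 0$. This immediately yields $\dot{x}=0$ (so $x$ is constant) and, using $g_1(0)=g_2(0)=0$, also $\dot{z}=0$ (so $z$ is constant). Plugging $v\equiv 0$ and $c(x,z,0)=0$ into the third equation of Eq. \eqref{eq:main}, the identity $\dot{v}\equiv 0$ forces $h_1(x)+h_2(z)=0$, so $(x(0),z(0),v(0))\in\mathcal{E}$, as required.

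I expect the main subtlety to be the contradiction step: it relies crucially on the specific structural hypothesis $f_1(0)=f_2(0)=0$, which is what makes the dynamics of $z$ degenerate precisely on the set where the $\mathcal{M}_z\cup\mathcal{M}_v$ dichotomy can still allow $v\neq 0$. The remaining bookkeeping (continuity, the reduction of $\dot v$ using $c(x,z,0)=0$ and $g_i(0)=0$) is routine.
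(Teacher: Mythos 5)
Your proposal is correct and follows essentially the same route as the paper's proof: reduce via Lemma \ref{thm:E_iV_Mz_Mv} to trajectories in $\mathcal{M}_z \cup \mathcal{M}_v$, rule out $v \neq 0$ by the continuity argument on an interval where $z \equiv 0$ forces $\dot{z} = v = 0$ (using $f_1(0)=f_2(0)=0$), and then conclude from $v \equiv 0$ and $\dot{v} \equiv 0$ that the trajectory lies in $\mathcal{E}$. Your write-up even makes explicit the role of $f_1(0)=f_2(0)=0$, which the paper leaves implicit.
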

\begin{proof}
Due to Lemma \ref{thm:E_iV_Mz_Mv}, it suffices to show that the largest invariant set that is contained in $\mathcal{M}_z \cup \mathcal{M}_v$ is $\mathcal{E}$. Suppose that $X = (x, z, v) : \mathbb{R}_{\geq 0} \longrightarrow \mathcal{M}_z \cup \mathcal{M}_v$ is a solution of the IVP given by Eq. \eqref{eq:main}.

Suppose that there exists $t_s \in \mathbb{R}_{\geq 0}$ such that 
\[
X(t_s) = (x(t_s), z(t_s), v(t_s)) \in \mathcal{M}_z \setminus (\mathcal{M}_z \cap \mathcal{M}_v)
\]
Then, $z(t_s) = 0$ and $v(t_s) \neq 0$. By continuity of $X$, obtain $t_e \in \mathbb{R}_{>t_s}$ such that $X(t) \in \mathcal{M}_z \setminus (\mathcal{M}_z \cap \mathcal{M}_v)$ for all $t \in [t_s, t_e)$. In this case, $z(t) = \dot{z}(t) = 0$ for all $t \in [t_s, t_e)$. Therefore, $v(t) = 0$ for all $t \in [t_s, t_e)$. However, in this case, $X(t) \in \mathcal{M}_z \cap \mathcal{M}_v$ for all $t \in [t_s, t_e)$, which results in a contradiction. Thus, $X(t) \subseteq \mathcal{M}_v$ for all $t \in \mathbb{R}_{\geq 0}$. Therefore, $v(t) = 0$ for all $t \in \mathbb{R}_{\geq 0}$. Thus, $\dot{v}(t) = 0$ for all $t \in \mathbb{R}_{\geq 0}$. Therefore, $h_1(x(t)) + h_2(z(t)) = 0$ for all $t \in \mathbb{R}_{\geq 0}$. Thus, $X(t) \in \mathcal{E} \subseteq \mathcal{M}_z \cup \mathcal{M}_v$ for all $t \in \mathbb{R}_{\geq 0}$ or $X : \mathbb{R}_{\geq 0} \longrightarrow \mathcal{E}$. Since $\mathcal{E}$ is an invariant set, it is also the largest invariant set contained in $\mathcal{M}_z \cup \mathcal{M}_v$. 
\end{proof}

Define the set $\mathcal{S}_{X}$ as
\begin{equation}
\mathcal{S}_{X} \triangleq \{ Y \in \mathbb{R}^3 : \mathcal{V} (Y) \leq \mathcal{V} (X) \}
\end{equation}

\begin{prop}\label{thm:convergence_E}
Suppose that $X : \mathbb{R}_{\geq 0} \longrightarrow \mathbb{R}^3$ is a solution of the IVP given by Eq. \eqref{eq:main} with $X_0 \triangleq X(0)$. Then, $\mathcal{O}_{X_0}^{+\infty}$ is a nonempty, compact, connected, invariant set such that $\lim_{t \rightarrow +\infty} X(t) = \mathcal{O}_{X_0}^{+\infty}$ and $\mathcal{O}_{X_0}^{+\infty} \subseteq \mathcal{S}_{X_0} \cap \mathcal{E}$.
\end{prop}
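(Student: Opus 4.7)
The plan is to derive the statement as an application of LaSalle's invariance principle together with the classical theory of positive limit sets of bounded trajectories. All the analytic work has in fact been done in the preceding lemmas; what remains is to package it correctly.

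First, I would record the setup: by Proposition \ref{thm:eqb} the solution $X$ is bounded on $\mathbb{R}_{\geq 0}$, so its forward orbit has compact closure in $\mathbb{R}^{3}$. This is the standing hypothesis under which the positive limit set $\mathcal{O}_{X_0}^{+\infty}$ is guaranteed to be nonempty, compact, connected, invariant, and attracting for $X$, i.e., $\lim_{t\to+\infty} \mathrm{dist}(X(t),\mathcal{O}_{X_0}^{+\infty}) = 0$. I would simply cite the standard limit-set theorem (for example, Lemma 4.1 in Ref.~\cite{khalil_nonlinear_2002}) to obtain all of these topological and dynamical properties at once.

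Next, I would verify the inclusion $\mathcal{O}_{X_0}^{+\infty}\subseteq \mathcal{S}_{X_0}$. By Lemma \ref{thm:dV_leq_0}, the composition $t \mapsto \mathcal{V}(X(t))$ is non-increasing, so $\mathcal{V}(X(t))\le \mathcal{V}(X_0)$ for every $t\ge 0$. The sublevel set $\mathcal{S}_{X_0}=\mathcal{V}^{-1}\bigl((-\infty,\mathcal{V}(X_0)]\bigr)$ is closed by continuity of $\mathcal{V}$, so it contains every cluster point of $X$; that is, $\mathcal{O}_{X_0}^{+\infty}\subseteq \mathcal{S}_{X_0}$.

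Then I would establish $\mathcal{O}_{X_0}^{+\infty}\subseteq \mathcal{E}$ via the LaSalle argument. Since $\mathcal{V}\circ X$ is non-increasing and bounded below by $0$ (Lemma \ref{thm:pd}), it converges to some $L\in\mathbb{R}_{\ge 0}$. By continuity of $\mathcal{V}$, every $Y\in\mathcal{O}_{X_0}^{+\infty}$ satisfies $\mathcal{V}(Y)=L$, so $\mathcal{V}$ is constant on the positive limit set. Because $\mathcal{O}_{X_0}^{+\infty}$ is invariant, any solution launched from a point of the limit set stays inside it; along such a solution $\dot{\mathcal{V}}\equiv 0$, so $\mathcal{O}_{X_0}^{+\infty}$ is an invariant subset of $\dot{\mathcal{V}}^{-1}(0)$. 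By Lemma \ref{thm:LIS}, the largest such invariant set is $\mathcal{E}$, and the inclusion follows.

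There is no serious obstacle here: the work of building a proper Lyapunov function, proving radial unboundedness, establishing equiboundedness, and identifying the largest invariant set in $\dot{\mathcal{V}}^{-1}(0)$ has already been done in Lemmas \ref{thm:pd}--\ref{thm:dV_leq_0} and Lemma \ref{thm:LIS}. The only care required is to phrase the conclusion in terms of the positive limit set $\mathcal{O}_{X_0}^{+\infty}$ rather than the more common formulation ``$X(t)$ approaches the largest invariant set'', and to cite a version of LaSalle's invariance principle that yields both the attraction statement and the structural properties of $\mathcal{O}_{X_0}^{+\infty}$ in one step.
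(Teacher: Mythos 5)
Your argument is correct and follows essentially the same route as the paper: boundedness from Proposition~\ref{thm:eqb}, the standard limit-set theorem for the structural properties and the attraction statement, and Lemma~\ref{thm:LIS} to place the limit set inside $\mathcal{E}$; the only difference is that you establish the inclusions $\mathcal{O}_{X_0}^{+\infty}\subseteq\mathcal{S}_{X_0}\cap\mathcal{E}$ directly (closedness of the sublevel set plus the classical ``$\mathcal{V}$ is constant on the invariant limit set, hence the limit set lies in $\dot{\mathcal{V}}^{-1}(0)$'' argument), whereas the paper obtains them by citing Proposition~5.3 of Ref.~\cite{bhat_nontangency-based_2003}. One minor caveat: Khalil's Lemma~4.1 does not explicitly assert connectedness of the positive limit set, so for that property you should cite a statement that includes it (e.g., Proposition~5.1 of Ref.~\cite{bhat_nontangency-based_2003} or Theorem~2.41 of Ref.~\cite{haddad_nonlinear_2011}, as the paper does).
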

\begin{proof}
Note that $\mathcal{O}_{X_0}^{+}$ is bounded by Proposition \ref{thm:eqb}. Thus, $\mathcal{O}_{X_0}^{+\infty}$ is a nonempty, compact, connected, invariant set, and $\lim_{t \rightarrow +\infty} X(t) = \mathcal{O}_{X_0}^{+\infty}$ (e.g, see Proposition 5.1 in Ref. \cite{bhat_nontangency-based_2003} or Theorem 2.41 in Ref. \cite{haddad_nonlinear_2011}). Note that, by Lemma \ref{thm:LIS}, the largest invariant set contained in $\dot{\mathcal{V}}^{-1}(0)$ is $\mathcal{E}$. Suppose that $\mathcal{P}$ is the largest invariant set contained in $\mathcal{S}_{X_0}$. Then, $\mathcal{O}_{X_0}^{+\infty} \subseteq \mathcal{P} \cap \mathcal{E} \subseteq \mathcal{S}_{X_0} \cap \mathcal{E}$ (e.g., see Proposition 5.3 in Ref. \cite{bhat_nontangency-based_2003}).
\end{proof}
Define the set $\mathcal{E}_{xz}$ as
\begin{equation}
\mathcal{E}_{xz} \triangleq \{ (x, z) \in \mathbb{R}^2 : h_1(x) = -h_2(z) \}
\end{equation}
The following corollary is an immediate consequence of Proposition \ref{thm:convergence_E}:
\begin{cor}\label{thm:convergence_Exz_V0}
Suppose that $x, z, v : \mathbb{R}_{\geq 0} \longrightarrow \mathbb{R}$ form a solution of the IVP given by Eq. \eqref{eq:main}. Then, $\lim_{t \rightarrow +\infty} (x(t), z(t)) = \mathcal{E}_{xz}$ and $\lim_{t \rightarrow +\infty} v(t) = 0$.
\end{cor}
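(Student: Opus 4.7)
The plan is to deduce the corollary directly from Proposition \ref{thm:convergence_E} by projecting the set-valued limit onto the appropriate coordinate subspaces. First I would invoke the proposition to obtain that $\mathcal{O}_{X_0}^{+\infty}$ is nonempty, $X(t)$ converges to it in the set-distance sense (i.e., $\mathrm{dist}(X(t), \mathcal{O}_{X_0}^{+\infty}) \to 0$ as $t \to +\infty$), and $\mathcal{O}_{X_0}^{+\infty} \subseteq \mathcal{E}$.

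Next I would rewrite $\mathcal{E}$ in product form by noting that
\begin{equation}
\mathcal{E} = \mathcal{E}_{xz} \times \{ 0 \}
\end{equation}
under the canonical identification $\mathbb{R}^3 \cong \mathbb{R}^2 \times \mathbb{R}$. Letting $\pi_{xz} : \mathbb{R}^3 \longrightarrow \mathbb{R}^2$ and $\pi_v : \mathbb{R}^3 \longrightarrow \mathbb{R}$ denote the coordinate projections (which are $1$-Lipschitz under the chosen norms), I would use the elementary fact that $\mathrm{dist}(\pi(X(t)), \pi(\mathcal{A})) \leq \mathrm{dist}(X(t), \mathcal{A})$ for any nonempty set $\mathcal{A} \subseteq \mathbb{R}^3$. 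Applying this with $\mathcal{A} = \mathcal{O}_{X_0}^{+\infty}$ yields
\begin{equation}
\mathrm{dist}\bigl((x(t), z(t)), \pi_{xz}(\mathcal{O}_{X_0}^{+\infty})\bigr) \longrightarrow 0, \quad \mathrm{dist}\bigl(v(t), \pi_v(\mathcal{O}_{X_0}^{+\infty})\bigr) \longrightarrow 0
\end{equation}
as $t \to +\infty$.

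Finally, since $\mathcal{O}_{X_0}^{+\infty} \subseteq \mathcal{E} = \mathcal{E}_{xz} \times \{0\}$, the projections satisfy $\pi_{xz}(\mathcal{O}_{X_0}^{+\infty}) \subseteq \mathcal{E}_{xz}$ and $\pi_v(\mathcal{O}_{X_0}^{+\infty}) \subseteq \{0\}$. Monotonicity of set-distance in the second argument then gives $\mathrm{dist}((x(t), z(t)), \mathcal{E}_{xz}) \to 0$ and $|v(t)| \to 0$, which are exactly the two claimed limits. There is no substantive obstacle here; the only subtlety is being explicit about the fact that ``$\lim_{t \to +\infty} X(t) = \mathcal{O}_{X_0}^{+\infty}$'' denotes set-distance convergence, so that the same notation in the statement of the corollary inherits a consistent interpretation via the Lipschitz projections.
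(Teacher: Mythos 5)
Your proposal is correct and is essentially the argument the paper intends: the paper states the corollary as an immediate consequence of Proposition \ref{thm:convergence_E}, and your write-up simply makes explicit the projection step (identifying $\mathcal{E} = \mathcal{E}_{xz} \times \{0\}$ and using $1$-Lipschitz coordinate projections together with monotonicity of the set distance). No gap; you have just spelled out the routine details the paper leaves implicit.
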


\begin{lem}\label{thm:HJz}
Suppose that $H : \mathbb{R} \longrightarrow \mathbb{R}$ is a strictly decreasing continuous function such that $H(0) = 0$. Define $J : \mathbb{R} \longrightarrow \mathbb{R}_{\geq 0}$ as $J(y) \triangleq d ((0, y), \mathcal{G}(H))$ for all $y \in \mathbb{R}$. Then, $J$ is a continuous function such that $J(0) = 0$ and $J(y) > 0$ for all $y \in \mathbb{R} \setminus \{ 0 \}$. Moreover, for every $y \in \mathbb{R}$, $J(y) = d((0, y), (a, b))$ for some $(a, b) \in \mathcal{G}(H)$ such that $a \leq 0$ and $0 \leq b$ if $0 \leq y$, and $0 \leq a$ and $b \leq 0$ if $y \leq 0$. Furthermore, $J \restriction \mathbb{R}_{\geq 0}$ is nondecreasing and $J \restriction \mathbb{R}_{\leq 0}$ is nonincreasing. 
\end{lem}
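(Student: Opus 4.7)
The plan is to first clear the straightforward claims on continuity and the values of $J$ at and away from $0$, then establish existence of a sign-constrained minimizer, and finally use that minimizer to deduce monotonicity.

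First, since $H$ is continuous, $\mathcal{G}(H)$ is closed in $\mathbb{R}^2$. The map $J$ is the distance from a variable point to the fixed nonempty set $\mathcal{G}(H)$, so it is $1$-Lipschitz and hence continuous. Since $(0, H(0)) = (0, 0) \in \mathcal{G}(H)$, we have $J(0) = 0$. For $y \neq 0$, the point $(0, y)$ lies outside $\mathcal{G}(H)$ (otherwise $y = H(0) = 0$); closedness then yields $J(y) > 0$.

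Second, by symmetry it suffices to treat $y \geq 0$. For any $x > 0$, $H(x) < 0$, so
\[
d\bigl((0, y), (x, H(x))\bigr) \geq y - H(x) > y = d\bigl((0, y), (0, 0)\bigr),
\]
so any minimizer lies on the branch $\{(x, H(x)) : x \leq 0\}$, on which $H(x) \geq 0$. On this branch, $x \mapsto d((0, y), (x, H(x)))$ is continuous and coercive (since $x^2 \to \infty$ as $x \to -\infty$), so a minimizer $(a, b)$ with $a \leq 0$ and $b \geq 0$ exists by compactness. To obtain further $b \leq y$, note that $b \in H(\mathbb{R}_{\leq 0}) = [0, \beta)$ where $\beta \triangleq \sup_{\mathbb{R}} H \in (0, +\infty]$. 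If $y \geq \beta$, then $b < \beta \leq y$ immediately. If $0 \leq y < \beta$, then $H^{-1}(y) \in \mathbb{R}_{\leq 0}$ is well-defined, and assuming $b > y$ for contradiction gives $a = H^{-1}(b) < H^{-1}(y) \leq 0$ by strict monotonicity of $H^{-1}$, whence
\[
d\bigl((0, y), (a, b)\bigr) \geq |a| > \abs{H^{-1}(y)} = d\bigl((0, y), (H^{-1}(y), y)\bigr),
\]
contradicting minimality.

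Third, given $0 \leq y_1 < y_2$, let $(a_2, b_2)$ be such a minimizer for $y_2$, so $a_2 \leq 0$ and $0 \leq b_2 \leq y_2$. If $b_2 \leq y_1$, then $(y_1 - b_2)^2 \leq (y_2 - b_2)^2$, whence
\[
J(y_1)^2 \leq a_2^2 + (y_1 - b_2)^2 \leq a_2^2 + (y_2 - b_2)^2 = J(y_2)^2.
\]
Otherwise $y_1 < b_2 < \beta$, so $H^{-1}(y_1)$ exists with $a_2 < H^{-1}(y_1) \leq 0$, giving $\abs{H^{-1}(y_1)} \leq \abs{a_2}$ and
\[
J(y_1) \leq \abs{H^{-1}(y_1)} \leq \abs{a_2} \leq J(y_2).
\]
The case $y \leq 0$ follows by applying the above to $\tilde H(x) \triangleq -H(-x)$, which is strictly decreasing with $\tilde H(0) = 0$ and whose graph is the reflection of $\mathcal{G}(H)$ through the origin. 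I expect the monotonicity step to be the main obstacle; the enabling ingredient is the a priori bound $b_2 \leq y_2$ on the minimizer, itself secured by the $H^{-1}(y)$ comparison in the preceding step, and the argument must accommodate the case where $H$ is not surjective.
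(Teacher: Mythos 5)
Your proof is correct, and it reaches the monotonicity claim by a genuinely different route than the paper. The paper localizes the infimum to the compact interval $[H^{-1}(y),0]$ by comparison estimates and then invokes Danskin's theorem to differentiate $J$ and show $\partial J(y)\geq 0$ on $(0,b)$, concluding that $J$ is nondecreasing on $\mathbb{R}_{\geq 0}$; you instead avoid calculus entirely: you establish existence of a sign-constrained minimizer $(a,b)$ with $a\leq 0$, $b\geq 0$ by coercivity and closedness of the graph, and then prove monotonicity by a direct two-case comparison ($b_2\leq y_1$ versus $b_2>y_1$, the latter handled through the point $(H^{-1}(y_1),y_1)$). Your argument is more elementary and also slightly more general: the paper's proof freely uses $H^{-1}(y)$, $H^{-1}(y/2)$, and $H^{-1}(b)$, which tacitly assumes these values lie in the range of $H$ (harmless in the paper's application, where $H=h_2^{-1}\circ(-h_1)$ is a homeomorphism of $\mathbb{R}$, but not guaranteed by the lemma's hypotheses), whereas your closedness argument for $J(y)>0$ and your explicit treatment of $\beta=\sup H$ cover a non-surjective $H$ as well. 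Your reduction of the case $y\leq 0$ via $\tilde H(x)=-H(-x)$ is a clean substitute for the paper's "similar arguments" remark. Two minor observations: the a priori bound $b\leq y$ that you flag as the enabling ingredient is in fact not needed in your monotonicity step (the case split on $b_2\leq y_1$ already suffices), and what Danskin's theorem buys the paper — an explicit formula for $\partial J$ — is simply not required for the lemma, so nothing is lost by your elementary approach.
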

\begin{proof}
Define the function $G : \mathbb{R} \times \mathbb{R} \longrightarrow \mathbb{R}_{\geq 0}$ as
\[
G(y, x) \triangleq d((0, y), (x, H(x))) = \sqrt{x^2 + (y - H(x))^2}
\]
and note that $J(y) = \inf_{x \in \mathbb{R}} G(y, x)$. $J$ is continuous due to continuity of $d(\cdot, S)$ for any non-empty set $S$. Since $H(0) = 0$, $(0, 0) \in \mathcal{G}(H)$, and, therefore, $J(0) = d((0, 0), \mathcal{G}(H)) = 0$. 

Suppose to the contrary that $J(y) = 0$ for some $y \neq 0$. Without the loss of generality assume that $y > 0$. Define 
\[
\varepsilon \triangleq \min \left( - H^{-1} \left( \frac{y}{2} \right), \frac{y}{2} \right)
\]
Since $J(y) = 0$, obtain $x \in \mathbb{R}$ such that $\sqrt{x^2 + (y - H(x))^2} < \varepsilon$. Then, $\abs{x} < - H^{-1} \left( y/2 \right)$ and $\abs{y - H(x)} < y/2$. Then, $H(x) < y/2$ follows from the former and $y/2 < H(x)$ follows from the latter. Thus, by contradiction, $J(y) \neq 0$.

For every $b \in \mathbb{R}_{>0}$, define $f_b^{+} : \mathbb{R}_{>0} \longrightarrow \mathbb{R}_{\geq 0}$ as
\[
f_b^{+} (y) \triangleq \min_{x \in [H^{-1}(b), 0]} G(y, x)
\]
for every $y \in \mathbb{R}_{>0}$. 

It will now be shown that $J(y) = f_y^{+}(y)$ for all $y \in \mathbb{R}_{>0}$. Fix $y \in \mathbb{R}_{>0}$. Note that $f_y^{+} (y) \leq y = G(y, 0)$. Suppose that $x \in \mathbb{R}_{>0}$. Then, since $H(x) < 0$,
\[
G(y, x) = \sqrt{x^2 + (y - H(x))^2} > \abs{y - H(x)} > y
\]
Therefore, 
\[
f_y^{+} (y) \leq y \leq \inf_{x \in (0, +\infty)} G(y, x)
\]
Note that $f_y^{+} (y) \leq \abs{H^{-1}(y)} = G(y, H^{-1}(y))$. Fix $x \in \mathbb{R}_{<H^{-1}(y)}$. Then,
\[
G(y, x) = \sqrt{x^2 + (y - H(x))^2} \geq \abs{x} > \abs{H^{-1}(y)}
\]
Thus, 
\[
f_y^{+} (y) \leq \abs{H^{-1}(y)} \leq \inf_{x \in \left(-\infty, H^{-1}(y) \right)} G(y, x)
\]
Since $f_y^{+} (y)$, $\inf_{x \in (0, +\infty)} G(y, x)$, and $\inf_{x \in \left( -\infty, H^{-1}(y) \right)} G(y, x)$ are all bounded from below, $J(y) = f_y^{+} (y)$ by Lemma \ref{thm:inf_A_U_B}. This also implies that 
\[
J(y) = f_b^{+}(y) = \min_{x \in [H^{-1}(b), 0]} (G \restriction (0, b) \times [H^{-1}(b), 0]) (y, x)
\]
for all $y \in (0, b)$ for all $b \in \mathbb{R}_{>0}$. Furthermore, this implies that, for every $y \in \mathbb{R}_{\geq 0}$, there exists $a \in [H^{-1}(y), 0]$ such that $J(y) = d((0, y), (a, H(a)))$ (with $a \leq 0$ and $0 \leq H(a)$). 

Note that $J(y) < G(y, x)$ for all $y \in \mathbb{R}_{>0}$ and $x \in \mathbb{R}_{<H^{-1}(y)}$. To show this, fix $y \in \mathbb{R}_{>0}$ and $x \in \mathbb{R}_{<H^{-1}(y)}$ and, by the properties of $H$, obtain $z \in \mathbb{R}_{>0}$ such that $y < z$ and $x = H^{-1}(z)$. Then, 
\[
J(y) \leq \abs{H^{-1}(y)} < \abs{H^{-1}(z)} < G(y, H^{-1}(z)) = G(y, x)
\]

Fix $b \in \mathbb{R}_{>0}$. Noting that $G \restriction (0, b) \times [H^{-1}(b), 0]$ is continuous and continuously differentiable with respect to the first argument, $(0, b)$ is open, $[H^{-1}(b), 0]$ is compact, by the Danskin's Theorem (\cite{danskin_theory_1966}, see also Theorem 1.29 in Ref. \cite{guler_foundations_2010}), the derivative of $J$ exists for every $y \in (0, b)$, and is given by
\[
\partial J(y) = \min_{x \in S(y)} \partial_1 G (y, x) =  \min_{x \in S(y)} \frac{y - H(x)}{\sqrt{x^2 + (y - H(x))^2}}
\]
where
\[
S(y) \triangleq \{ x \in [H^{-1}(b), 0] : J(y) = G(y, x) \}
\]
Fix $y \in (0, b)$. Taking into account that $J(y) < G(y, x)$ for all $x \in \mathbb{R}_{<H^{-1}(y)}$, $0 \leq y - H(x)$ for all $x \in S(y)$. Thus, $\partial J(y) \geq 0$ for all $y \in \mathbb{R}_{>0}$. Therefore, $J$ is nondecreasing on $\mathbb{R}_{>0}$. Since $J(0) = 0$ and $J(y) > 0$ for all $y \in \mathbb{R}_{>0}$, $J$ is nondecreasing on $\mathbb{R}_{\geq 0}$. 

By similar arguments, it can be shown that for every $y \in \mathbb{R}_{\leq 0}$, there exists $a \in [0, H^{-1}(y)]$ such that $J(y) = d((0, y), (a, H(a)))$ (with $0 \leq a$ and $H(a) \leq 0$), as well as that $J$ is nonincreasing on $\mathbb{R}_{\leq 0}$.
\end{proof}

By similar arguments it is possible to prove the following lemma:

\begin{lem}\label{thm:HJx}
Suppose that $H : \mathbb{R} \longrightarrow \mathbb{R}$ is a strictly decreasing continuous function such that $H(0) = 0$. Define $J : \mathbb{R} \longrightarrow \mathbb{R}_{\geq 0}$ as $J(x) \triangleq d ((x, 0), \mathcal{G}(H))$ for all $x \in \mathbb{R}$. Then, $J$ is a continuous function such that $J(0) = 0$ and $J(x) > 0$ for all $x \in \mathbb{R} \setminus \{ 0 \}$. Moreover, for every $x \in \mathbb{R}$, $J(x) = d((x, 0), (a, b))$ for some $(a, b) \in \mathcal{G}(H)$ such that $a \leq 0$ and $0 \leq b$ if $x \leq 0$, and $0 \leq a$ and $b \leq 0$ if $0 \leq x$. Furthermore, $J \restriction \mathbb{R}_{\geq 0}$ is nondecreasing and $J \restriction \mathbb{R}_{\leq 0}$ is nonincreasing. 
\end{lem}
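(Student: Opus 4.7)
The plan is to mirror the proof of Lemma \ref{thm:HJz} with the roles of the two coordinates swapped. Define
\[
G(x, s) \triangleq d((x, 0), (s, H(s))) = \sqrt{(x - s)^2 + H(s)^2},
\]
so that $J(x) = \inf_{s \in \mathbb{R}} G(x, s)$. Continuity of $J$ and the identity $J(0) = 0$ are immediate, and $J(x) > 0$ for $x \neq 0$ can be established by the same $\varepsilon$-argument as in the source lemma (or, more directly, by observing that any sequence $s_n$ with $G(x, s_n) \rightarrow 0$ must satisfy $s_n \rightarrow x$ and $H(s_n) \rightarrow 0$, which forces $H(x) = 0$, and hence $x = 0$, by continuity of $H$).

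Next, I would localize the infimum on a compact set, imitating the trimming argument of Lemma \ref{thm:HJz}. For $x \in \mathbb{R}_{>0}$, observing that $G(x, 0) = x$, one has $G(x, s) > x$ whenever $s < 0$ (because $(x - s)^2 > x^2$) and $G(x, s) > x$ whenever $s > H^{-1}(-x)$ (because then $\abs{H(s)} > x$). Therefore the infimum is attained on the compact set $[0, H^{-1}(-x)]$, producing a minimizer $a$ with $0 \leq a$ and $H(a) \leq 0$. The mirror argument for $x \in \mathbb{R}_{<0}$ gives a minimizer $a$ with $a \leq 0$ and $H(a) \geq 0$. Performing this localization on any fixed interval $(0, b)$ with minimization taken over the fixed compact set $[0, H^{-1}(-b)]$ (which contains $[0, H^{-1}(-x)]$ for every $x \in (0, b)$) casts the problem in a form suitable for the application of Danskin's Theorem, exactly as in the source lemma.

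For the monotonicity of $J \restriction \mathbb{R}_{\geq 0}$, the step that plays the role of the inequality $0 \leq y - H(x)$ from the proof of Lemma \ref{thm:HJz} is the claim that every minimizer $s^{*} \in S(x)$ satisfies $s^{*} \leq x$ (for $x \in \mathbb{R}_{>0}$). This follows from the strict monotonicity of $\abs{H}$ on $\mathbb{R}_{\geq 0}$: for any $s > x > 0$,
\[
G(x, s) = \sqrt{(s - x)^2 + H(s)^2} > \abs{H(s)} > \abs{H(x)} = G(x, x) \geq J(x),
\]
so $s$ cannot be a minimizer. Consequently $\partial_1 G(x, s^{*}) = (x - s^{*})/G(x, s^{*}) \geq 0$ for every $s^{*} \in S(x)$, and Danskin's Theorem yields $\partial J(x) \geq 0$ on each $(0, b)$, hence $J \restriction \mathbb{R}_{>0}$ is nondecreasing; the extension to $\mathbb{R}_{\geq 0}$ follows from $J(0) = 0$ and $J(x) > 0$ for all $x \in \mathbb{R}_{>0}$. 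The symmetric argument on $\mathbb{R}_{\leq 0}$ closes the proof.

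The main obstacle is identifying the correct substitute for the inequality $0 \leq y - H(x)$ used in Lemma \ref{thm:HJz}, since the geometric setup is not invariant under swapping the two coordinates. Once the substitute condition $s^{*} \leq x$ is isolated, however, its verification is in fact slightly simpler than in the source lemma, reducing to strict monotonicity of $\abs{H}$ on a half-line rather than requiring a derivative comparison.
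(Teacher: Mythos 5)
Your proposal is correct and is essentially the paper's intended argument: the paper proves Lemma \ref{thm:HJx} only by appeal to "similar arguments" as in Lemma \ref{thm:HJz}, and your adaptation (localizing the infimum to $[0, H^{-1}(-x)]$, applying Danskin's Theorem, and replacing the inequality $0 \leq y - H(x)$ by the minimizer condition $s^{*} \leq x$, verified via strict monotonicity of $\abs{H}$) is exactly the correct mirroring of that proof. The minor variations (the sequential argument for $J(x) > 0$, the monotonicity-of-$\abs{H}$ comparison in place of the $H^{-1}$-based comparison) do not change the substance of the approach.
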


Define the sets
\begin{equation}
\mathcal{M}_x^{+} \triangleq \{ (x, z, v) \in \mathbb{R}^3 : z = 0 \wedge 0 \leq x \}
\end{equation}
\begin{equation}
\mathcal{M}_x^{-} \triangleq \{ (x, z, v) \in \mathbb{R}^3 : z = 0 \wedge x \leq 0 \}
\end{equation}
\begin{equation}
\mathcal{M}_z^{+} \triangleq \{ (x, z, v) \in \mathbb{R}^3 : x = 0 \wedge 0 \leq z \}
\end{equation}
\begin{equation}
\mathcal{M}_z^{-} \triangleq \{ (x, z, v) \in \mathbb{R}^3 : x = 0 \wedge z \leq 0  \}
\end{equation}
\begin{equation}
\mathcal{M}_{xz} \triangleq \mathcal{M}_x^{+} \cup \mathcal{M}_x^{-} \cup \mathcal{M}_z^{+} \cup \mathcal{M}_z^{-}
\end{equation}

\begin{lem}\label{thm:Mz_frequently}
Suppose that $X = (x,z,v): \mathbb{R}_{\geq 0} \longrightarrow \mathbb{R}^3$ is a solution of the IVP given by Eq. \eqref{eq:main}. Suppose that $X$ is in $\mathcal{M}_z^{+}$ frequently, that is, for all $T \in \mathbb{R}_{>0}$ there exists $t \in \mathbb{R}_{>T}$ such that $X(t) \in \mathcal{M}_z^{+}$. Then, $\lim_{t \rightarrow +\infty} X(t) = 0$. 
\end{lem}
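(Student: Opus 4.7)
The plan is to combine Corollary \ref{thm:convergence_Exz_V0}, Lemma \ref{thm:HJz}, and Lemma \ref{thm:pd}. The idea is that the frequent visits of $X$ to $\mathcal{M}_z^{+}$ will pin the $z$-coordinate down to zero at those visiting times, because the trajectory must simultaneously approach $\mathcal{E}_{xz}$. This will force $\mathcal{V}$ to tend to zero along a subsequence, and monotonicity of $\mathcal{V} \circ X$ will upgrade this to $\mathcal{V} \circ X \to 0$ globally; positive definiteness of $\mathcal{V}$ will then yield convergence of $X$ to the origin.

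First, I would identify $\mathcal{E}_{xz}$ with the graph of the map $H \triangleq h_2^{-1} \circ (-h_1)$. Because $h_1$ and $h_2$ are strictly increasing homeomorphisms with $h_1(0) = h_2(0) = 0$, $H$ is a strictly decreasing continuous function with $H(0) = 0$, so Lemma \ref{thm:HJz} applies.

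Next, I would take a sequence $t_n \to +\infty$ with $X(t_n) \in \mathcal{M}_z^{+}$, so that $x(t_n) = 0$ and $z(t_n) \geq 0$. By Corollary \ref{thm:convergence_Exz_V0},
\[
J(z(t_n)) = d\bigl((0, z(t_n)), \mathcal{G}(H)\bigr) = d\bigl((x(t_n), z(t_n)), \mathcal{E}_{xz}\bigr) \to 0
\]
and $v(t_n) \to 0$. Lemma \ref{thm:HJz} ensures that $J \restriction \mathbb{R}_{\geq 0}$ is continuous, nondecreasing, vanishes only at $0$, and is strictly positive on $\mathbb{R}_{>0}$; coupled with the boundedness of the trajectory from Proposition \ref{thm:eqb}, this forces $z(t_n) \to 0$ (any subsequential limit $z^{*} > 0$ would yield the contradiction $J(z^{*}) = 0$). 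Therefore $X(t_n) \to 0$, and continuity of $\mathcal{V}$ gives $\mathcal{V}(X(t_n)) \to 0$.

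Finally, $\mathcal{V} \circ X$ is nonincreasing by Lemma \ref{thm:dV_leq_0} and bounded below by $0$ via Lemma \ref{thm:pd}, hence convergent with limit equal to the subsequential limit $0$. By Proposition \ref{thm:convergence_E}, $\mathcal{O}_{X_0}^{+\infty}$ is nonempty and $X(t) \to \mathcal{O}_{X_0}^{+\infty}$; continuity of $\mathcal{V}$ forces $\mathcal{V} \equiv 0$ on $\mathcal{O}_{X_0}^{+\infty}$, and Lemma \ref{thm:pd} then pins $\mathcal{O}_{X_0}^{+\infty} = \{0\}$, yielding $\lim_{t \to +\infty} X(t) = 0$. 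The main technical step is the deduction $J(z(t_n)) \to 0 \Rightarrow z(t_n) \to 0$; this is precisely where the monotonicity and sign information supplied by Lemma \ref{thm:HJz} is essential, rather than mere continuity of the distance function.
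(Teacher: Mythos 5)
Your proof is correct, but it is organized differently from the paper's. The paper's argument is a direct $\varepsilon$--$\delta$ one: it invokes Lyapunov stability of the origin (Proposition \ref{thm:LS0}) to fix a $\delta$, then uses Corollary \ref{thm:convergence_Exz_V0} together with the \emph{monotonicity} of $J$ on $\mathbb{R}_{\geq 0}$ from Lemma \ref{thm:HJz} to show that at a single sufficiently late visiting time $t$ with $x(t)=0$, the bound $J(z(t)) < J(\delta/\sqrt{2})$ forces $z(t) \leq \delta/\sqrt{2}$, hence $X(t) \in \mathbb{B}(0,\delta)$, and stability then keeps $X$ in $\mathbb{B}(0,\varepsilon)$ forever after. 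You instead work with a sequence of visiting times, use only the continuity and strict positivity off zero of $J$ (plus boundedness from Proposition \ref{thm:eqb}) to get $z(t_n) \to 0$, and then route the conclusion through the Lyapunov function itself: $\mathcal{V}(X(t_n)) \to 0$, monotonicity of $\mathcal{V}\circ X$ (Lemma \ref{thm:dV_leq_0}) gives $\mathcal{V}(X(t)) \to 0$, and the limit-set machinery of Proposition \ref{thm:convergence_E} with positive definiteness (Lemma \ref{thm:pd}) pins $\mathcal{O}_{X_0}^{+\infty} = \{0\}$. Your route avoids citing Proposition \ref{thm:LS0} and needs less of Lemma \ref{thm:HJz}; the paper's route avoids the limit-set detour and gives a more quantitative, self-contained $\varepsilon$--$\delta$ chain that it can reuse verbatim in Lemmas \ref{thm:Mij_frequently} and \ref{thm:N_ne_sw}. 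One small correction to your closing remark: in \emph{your} argument the essential input from Lemma \ref{thm:HJz} is continuity together with $J(y)>0$ for $y \neq 0$, not monotonicity; monotonicity is what the paper's quantitative step $J(z(t)) < J(\delta/\sqrt{2}) \Rightarrow z(t) \leq \delta/\sqrt{2}$ requires.
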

\begin{proof}
Fix $\varepsilon \in \mathbb{R}_{>0}$. By Proposition \ref{thm:LS0}, obtain $\delta \in \mathbb{R}_{>0}$ such that if $X(T) \in \mathbb{B}(0, \delta)$ for some $T \in \mathbb{R}_{\geq 0}$, then $X(t) \in \mathbb{B}(0, \varepsilon)$ for all $t > T$. Define $H : \mathbb{R} \longrightarrow \mathbb{R}$ as $H(x) \triangleq h_2^{-1} (- h_1 (x))$ for all $x \in \mathbb{R}$. Note that $H$ is a strictly decreasing continuous function such that $H(0) = 0$. Note also that $\mathcal{E}_{xz} = \mathcal{G}(H)$. Define $J : \mathbb{R} \longrightarrow \mathbb{R}_{\geq 0}$ as $J(z) \triangleq d((0, z), \mathcal{E}_{xz})$. 

Using Corollary \ref{thm:convergence_Exz_V0}, obtain $T \in \mathbb{R}_{\geq 0}$ such that $\abs{v(t)} < \delta/\sqrt{2}$ and $d((x(t), z(t)), \mathcal{E}_{xz}) < J \left( \delta/\sqrt{2} \right)$ for all $t \in \mathbb{R}_{>T}$. Using the assumptions of the lemma, obtain $t \in \mathbb{R}_{>T}$ such that $X(t) \in \mathcal{M}_z^{+}$. Note that $x(t) = 0$, $0 \leq z(t)$, and $\abs{v(t)} < \delta/\sqrt{2}$. Therefore, 
\[
d((x(t), z(t)), \mathcal{E}_{xz}) = d((0, z(t)), \mathcal{E}_{xz}) = J(z(t)) < J \left( \delta/\sqrt{2} \right)
\]
Thus, $z(t) \leq \delta/\sqrt{2}$ by Lemma \ref{thm:HJz}. Therefore, 
\[
\lVert X(t) \rVert_2 = \sqrt{x(t)^2 + z(t)^2 + v(t)^2} < \sqrt{\frac{\delta^2}{2} + \frac{\delta^2}{2}} = \delta
\]
Thus, $X(t) \in \mathbb{B}(0, \delta)$. Therefore, $X(s) \in \mathbb{B}(0, \varepsilon)$ for all $s \in \mathbb{R}_{>t}$. By generalization, for every $\varepsilon \in \mathbb{R}_{>0}$, there is $T \in \mathbb{R}_{\geq 0}$ such that $X(t) \in \mathbb{B}(0, \varepsilon)$ for all $t > T$. Therefore, $\lim_{t \rightarrow +\infty} X(t) = 0$. 
\end{proof}
The argument that was used in the proof of Lemma \ref{thm:Mz_frequently} can be applied to other half-planes:
\begin{lem}\label{thm:Mij_frequently}
Suppose that $X : \mathbb{R}_{\geq 0} \longrightarrow \mathbb{R}^3$ is a solution of the IVP given by Eq. \eqref{eq:main}. Suppose that $i \in \{ x, z \}$ and $j \in \{ -, + \}$. Suppose also that $X$ is in $\mathcal{M}_i^j$ frequently. Then, $\lim_{t \rightarrow +\infty} X(t) = 0$. 
\end{lem}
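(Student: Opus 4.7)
The plan is to carry over the structure of the proof of Lemma \ref{thm:Mz_frequently} essentially verbatim, making only those adjustments dictated by the sign and the choice of axis. The same ingredients are in play: Lyapunov stability of the origin (Proposition \ref{thm:LS0}) to convert a single small-norm visit into permanent confinement; Corollary \ref{thm:convergence_Exz_V0} to force $(x(t),z(t))$ arbitrarily close to $\mathcal{E}_{xz}$ and $v(t)$ arbitrarily close to $0$ eventually; and the monotonicity of the distance functions from Lemmas \ref{thm:HJz} and \ref{thm:HJx} to invert a smallness estimate on distance-to-$\mathcal{E}_{xz}$ into a smallness estimate on a single coordinate.

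Concretely, I would fix $\varepsilon \in \mathbb{R}_{>0}$ and use Proposition \ref{thm:LS0} to produce $\delta$ with the forward-invariance property. I would again set $H(x) \triangleq h_2^{-1}(-h_1(x))$ so that $H$ is strictly decreasing and continuous with $H(0) = 0$ and $\mathcal{E}_{xz} = \mathcal{G}(H)$. Next I would select the distance function matched to the axis of the half-plane: for $i = z$, take $J(z) \triangleq d((0,z), \mathcal{E}_{xz})$ from Lemma \ref{thm:HJz}; for $i = x$, take $J(x) \triangleq d((x,0), \mathcal{E}_{xz})$ from Lemma \ref{thm:HJx}. In either instance the resulting $J$ is continuous, vanishes only at $0$, is nondecreasing on $\mathbb{R}_{\geq 0}$ and nonincreasing on $\mathbb{R}_{\leq 0}$.

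Using Corollary \ref{thm:convergence_Exz_V0} I would then find $T$ so that, for all $t > T$, $|v(t)| < \delta/\sqrt{2}$ and $d((x(t),z(t)), \mathcal{E}_{xz}) < J(\sigma\,\delta/\sqrt{2})$, where $\sigma = +1$ if $j = +$ and $\sigma = -1$ if $j = -$. By the frequent-recurrence hypothesis, pick $t > T$ with $X(t) \in \mathcal{M}_i^j$. At such a $t$, exactly one of $x(t)$, $z(t)$ vanishes and the other has sign $\sigma$, so $d((x(t),z(t)), \mathcal{E}_{xz})$ reduces to $J$ evaluated on the nonzero coordinate; the appropriate monotonicity clause from Lemma \ref{thm:HJz} or \ref{thm:HJx} then yields $|x(t)| \leq \delta/\sqrt{2}$ or $|z(t)| \leq \delta/\sqrt{2}$. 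Combining with the bound on $|v(t)|$ gives $\lVert X(t) \rVert_2 < \delta$, so forward invariance from Proposition \ref{thm:LS0} gives $X(s) \in \mathbb{B}(0, \varepsilon)$ for all $s > t$. Generalizing over $\varepsilon$ yields $\lim_{t \to +\infty} X(t) = 0$.

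The only mild obstacle is bookkeeping across the four sign/axis cases, which I would handle by organizing the proof around the single sign variable $\sigma \in \{-1,+1\}$ and a case split on $i \in \{x,z\}$ that merely swaps which of Lemma \ref{thm:HJz} or \ref{thm:HJx} is invoked, rather than reproducing the Lyapunov argument four separate times. No genuinely new estimates are required beyond what is already in place.
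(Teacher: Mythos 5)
Your proposal is correct and takes essentially the same route as the paper, which offers no separate argument for this lemma beyond the remark that the proof of Lemma \ref{thm:Mz_frequently} applies to the other half-planes. Your adaptation --- swapping in Lemma \ref{thm:HJx} for Lemma \ref{thm:HJz} when $i = x$ and using the matching monotonicity clause ($J$ nondecreasing on $\mathbb{R}_{\geq 0}$, nonincreasing on $\mathbb{R}_{\leq 0}$) for the sign $j$ --- is exactly the intended extension.
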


Define
\begin{equation}
\mathcal{N}_{ne} \triangleq \{ (x, z, v) \in \mathbb{R}^3 : x > 0 \wedge z > 0 \}
\end{equation}
\begin{equation}
\mathcal{N}_{nw} \triangleq \{ (x, z, v) \in \mathbb{R}^3 : x < 0 \wedge z > 0 \}
\end{equation}
\begin{equation}
\mathcal{N}_{sw} \triangleq \{ (x, z, v) \in \mathbb{R}^3 : x < 0 \wedge z < 0 \}
\end{equation}
\begin{equation}
\mathcal{N}_{se} \triangleq \{ (x, z, v) \in \mathbb{R}^3 : x > 0 \wedge z < 0 \}
\end{equation}
\begin{equation}
\mathcal{N}_{xz} \triangleq \mathcal{N}_{ne} \cup \mathcal{N}_{nw} \cup \mathcal{N}_{sw} \cup \mathcal{N}_{se}
\end{equation}

\begin{lem}\label{thm:N_ne_nw_sw_se}
Suppose that $X : \mathbb{R}_{\geq 0} \longrightarrow \mathbb{R}^3$ is a solution of the IVP given by Eq. \eqref{eq:main}. Suppose also that $X$ is not in $\mathcal{M}_{xz}$ eventually, that is, there exists $T \in \mathbb{R}_{\geq 0}$ such that $X(t) \not\in \mathcal{M}_{xz}$ for all $t \in \mathbb{R}_{>T}$. Then, there exists $i \in \{ ne, nw, sw, se \}$ such that $X$ is in $\mathcal{N}_i$ eventually.
\end{lem}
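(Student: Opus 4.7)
The plan is to rely on the fact that the four quadrant-like sets $\mathcal{N}_{ne}, \mathcal{N}_{nw}, \mathcal{N}_{sw}, \mathcal{N}_{se}$ are precisely the connected components of $\mathbb{R}^3 \setminus \mathcal{M}_{xz}$, so that a continuous trajectory which avoids $\mathcal{M}_{xz}$ on an interval cannot jump between them.

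First I would observe that, by the definition of $\mathcal{M}_{xz}$, a point $(x,z,v) \in \mathbb{R}^3$ lies in $\mathbb{R}^3 \setminus \mathcal{M}_{xz}$ if and only if $x \neq 0$ and $z \neq 0$. Splitting by the signs of $x$ and $z$ gives the decomposition
\[
\mathbb{R}^3 \setminus \mathcal{M}_{xz} = \mathcal{N}_{ne} \sqcup \mathcal{N}_{nw} \sqcup \mathcal{N}_{sw} \sqcup \mathcal{N}_{se}.
\]
Each of the four sets on the right is of the form $I \times J \times \mathbb{R}$, where $I$ and $J$ are open half-lines in $\mathbb{R}$, so each is open and connected (as a product of connected sets). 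Moreover, the four sets are pairwise disjoint, so they are the connected components of $\mathbb{R}^3 \setminus \mathcal{M}_{xz}$.

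Next I would use the hypothesis to extract a $T \in \mathbb{R}_{\geq 0}$ such that $X(t) \notin \mathcal{M}_{xz}$ for all $t \in \mathbb{R}_{>T}$, and consider the image $X\bigl((T, +\infty)\bigr)$. Since $(T, +\infty)$ is connected and $X$ is continuous, this image is a connected subset of $\mathbb{R}^3 \setminus \mathcal{M}_{xz}$. Because connected subsets of a space must lie entirely within a single connected component, there exists $i \in \{ne, nw, sw, se\}$ such that $X(t) \in \mathcal{N}_i$ for all $t \in \mathbb{R}_{>T}$, which is exactly the statement that $X$ is in $\mathcal{N}_i$ eventually.

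There is no real obstacle here; the only point requiring mild care is the routine topological claim that the four open sets $\mathcal{N}_i$ exhaust and partition $\mathbb{R}^3 \setminus \mathcal{M}_{xz}$ and are each connected, after which the conclusion is an immediate application of continuity of $X$ together with preservation of connectedness.
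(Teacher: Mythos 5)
Your proposal is correct and follows essentially the same route as the paper: identify $\mathbb{R}^3 \setminus \mathcal{M}_{xz}$ with $\mathcal{N}_{ne} \cup \mathcal{N}_{nw} \cup \mathcal{N}_{sw} \cup \mathcal{N}_{se}$, note that $X((T,+\infty))$ is connected by continuity, and conclude it must lie in a single one of the four sets. The only cosmetic difference is that you phrase the final topological step via connected components of the open complement, while the paper invokes that the four $\mathcal{N}_i$ are pairwise separated together with Lemma \ref{thm:con_sep}; these are equivalent formulations of the same fact.
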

\begin{proof}
Obtain $T \in \mathbb{R}_{>0}$ such that $X(t) \not \in \mathcal{M}_{xz}$ for all $t \in \mathbb{R}_{>T}$. Therefore, $X(t) \in \mathbb{R}^3 \setminus \mathcal{M}_{xz} = \mathcal{N}_{xz}$ for all $t \in \mathbb{R}_{>T}$. Note that $X(\mathbb{R}_{>T})$ is a connected set. Note also that $\mathcal{N}_{ne}$, $\mathcal{N}_{nw}$, $\mathcal{N}_{sw}$, and $\mathcal{N}_{se}$ are pairwise separated. Thus, by Lemma \ref{thm:con_sep}, there is $i \in \{ ne, nw, sw, se \}$ such that $X(t) \in \mathcal{N}_i$ for all $t \in \mathbb{R}_{>T}$.
\end{proof}

\begin{lem}\label{thm:N_ne_sw}
Suppose that $X : \mathbb{R}_{\geq 0} \longrightarrow \mathbb{R}^3$ is a solution of the IVP given by Eq. \eqref{eq:main}. Suppose also that $X$ is in $\mathcal{N}_{ne}$ eventually or $X$ is in $\mathcal{N}_{sw}$ eventually. Then, $\lim_{t \rightarrow +\infty} X(t) = 0$. 
\end{lem}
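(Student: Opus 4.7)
The plan is a direct subsequential-limit argument that exploits the fact that $\mathcal{E}_{xz}$ meets the closed first quadrant and the closed third quadrant of the $(x,z)$-plane only at the origin. By symmetry it suffices to treat the case in which $X$ is eventually in $\mathcal{N}_{ne}$; the case $\mathcal{N}_{sw}$ is handled by reversing all sign inequalities. The key elementary observation, which I would record first, is that if $(a,b) \in \mathcal{E}_{xz}$ satisfies $a \geq 0$ and $b \geq 0$, then $h_1(a) = -h_2(b)$, while the strict monotonicity of $h_1, h_2$ together with $h_1(0) = h_2(0) = 0$ force $h_1(a) \geq 0$ and $-h_2(b) \leq 0$; both sides must therefore equal $0$, whence $a = b = 0$.

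Given this, I would argue as follows. By Proposition \ref{thm:eqb}, $X$ is bounded, so every sequence $t_k \to +\infty$ has a subsequence along which $X(t_{k_j}) \to (a,b,c)$ in $\mathbb{R}^3$. Corollary \ref{thm:convergence_Exz_V0} yields $v(t) \to 0$ and hence $c = 0$, and it also yields $d((x(t),z(t)), \mathcal{E}_{xz}) \to 0$. Since $\mathcal{E}_{xz}$ is the zero set of the continuous function $(x,z) \mapsto h_1(x) + h_2(z)$ it is closed, and continuity of the distance function yields $(a,b) \in \mathcal{E}_{xz}$. The eventual containment of $X$ in $\mathcal{N}_{ne}$ gives $a \geq 0$ and $b \geq 0$, so by the observation above $(a,b) = (0,0)$. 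Every subsequential limit of $X$ is therefore $0$, and boundedness upgrades this to $\lim_{t \to +\infty} X(t) = 0$.

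I do not foresee a significant obstacle here. The full weight of Lemmas \ref{thm:HJz} and \ref{thm:HJx} is not required, because requiring both $x(t)$ and $z(t)$ to remain nonnegative (respectively, nonpositive) already excises all of $\mathcal{E}_{xz}$ except the origin; the sign hypotheses on $h_1$ and $h_2$ do the rest. The only care needed is in invoking \emph{set-valued} convergence from Corollary \ref{thm:convergence_Exz_V0} correctly, which is handled by the standard observation that $d(\cdot, S)$ is continuous for any nonempty $S$ and that the zero locus of a continuous function is closed.
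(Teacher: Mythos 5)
Your argument is correct, and it takes a genuinely different route from the paper. The paper's proof is an $\varepsilon$--$\delta$ argument built on Lyapunov stability of the origin (Proposition \ref{thm:LS0}): it uses Corollary \ref{thm:convergence_Exz_V0} together with the monotonicity of the distance functions $J_1$, $J_2$ from Lemmas \ref{thm:HJz} and \ref{thm:HJx} to force the trajectory into a small $\ell_\infty$-ball $\mathbb{B}_{\infty}(0,\delta)$ at some time, and then invokes stability to keep it within $\mathbb{B}(0,\varepsilon)$ thereafter. You instead argue by subsequential limits: boundedness (Proposition \ref{thm:eqb}), the set convergence $v(t)\to 0$ and $d((x(t),z(t)),\mathcal{E}_{xz})\to 0$ from Corollary \ref{thm:convergence_Exz_V0}, closedness of $\mathcal{E}_{xz}$, and the elementary observation that $\mathcal{E}_{xz}$ meets the closed first (resp.\ third) quadrant only at the origin (since $h_1(a)\geq 0$ and $-h_2(b)\leq 0$ there, strict monotonicity forces $a=b=0$) pin every limit point at $0$, and boundedness then gives $\lim_{t\to+\infty}X(t)=0$. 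Your route is shorter and bypasses both Proposition \ref{thm:LS0} and the machinery of Lemmas \ref{thm:HJz}/\ref{thm:HJx} entirely; equivalently, you could phrase it via Proposition \ref{thm:convergence_E} as $\mathcal{O}_{X_0}^{+\infty}\subseteq \mathcal{E}\cap \mathrm{cl}\,\mathcal{N}_{ne}=\{0\}$. The reason the subsequential argument works here, but not in Lemma \ref{thm:Mz_frequently}, is the ``eventually'' hypothesis: with only ``frequently'' one merely learns that the trajectory approaches $0$ along some sequence of times, and the stability step (and hence the $J$ lemmas, in the paper's quantitative formulation) is genuinely needed there to convert that into convergence. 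So your simplification is sound for this lemma, while the paper's heavier apparatus remains necessary elsewhere in the overall proof of Theorem \ref{thm:main}.
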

\begin{proof}
Fix $\varepsilon \in \mathbb{R}_{>0}$. By Proposition \ref{thm:LS0}, obtain $\delta \in \mathbb{R}_{>0}$ such that if $X(T) \in \mathbb{B}_{\infty}(0, \delta)$ for some $T \in \mathbb{R}_{\geq 0}$, then $X(t) \in \mathbb{B}(0, \varepsilon)$ for all $t > T$. Suppose that $X$ is in $\mathcal{N}_{ne}$ eventually. Obtain $T_1 \in \mathbb{R}_{\geq 0}$ such that $X(t) \in \mathcal{N}_{ne}$ for all $t \in \mathbb{R}_{>T_1}$. Define $H : \mathbb{R} \longrightarrow \mathbb{R}$ as $H(x) \triangleq h_2^{-1} (- h_1 (x))$ for all $x \in \mathbb{R}$. Note that $H$ is a strictly decreasing continuous function such that $H(0) = 0$. Note also that $\mathcal{E}_{xz} = \mathcal{G}(H)$. Define $J_1 : \mathbb{R} \longrightarrow \mathbb{R}_{\geq 0}$ as $J_1(z) \triangleq d((0, z), \mathcal{E}_{xz})$ and $J_2 : \mathbb{R} \longrightarrow \mathbb{R}_{\geq 0}$ as $J_2(x) \triangleq d((x, 0), \mathcal{E}_{xz})$. 

Using Corollary \ref{thm:convergence_Exz_V0}, obtain $T_2 \in \mathbb{R}_{\geq 0}$ such that $\abs{v(t)} < \delta$ and $d((x(t), z(t)), \mathcal{E}_{xz}) < \min ( \delta, J_1 \left( \delta/2 \right), J_2 \left( \delta/2 \right) )$ for all $t \in \mathbb{R}_{>T_2}$. Define $T \in \mathbb{R}_{>0}$ as $T \triangleq \max (T_1, T_2)$. Fix $t \in \mathbb{R}_{>T}$. Introduce the notation 
\[
\mathcal{E}_{xz}^{+} \triangleq \{ (x, H(x)) : x \in \mathbb{R}_{\leq 0} \} 
\]
\[
\mathcal{E}_{xz}^{-} \triangleq \{ (x, H(x)) : x \in \mathbb{R}_{\geq 0} \}
\]
and note that $\mathcal{E}_{xz} = \mathcal{E}_{xz}^{+} \cup \mathcal{E}_{xz}^{-}$. Then, by Lemma \ref{thm:inf_A_U_B}, either 
\[
d((x(t), z(t)), \mathcal{E}_{xz}) = d((x(t), z(t)), \mathcal{E}_{xz}^{+})
\] 
or 
\[
d((x(t), z(t)), \mathcal{E}_{xz}) = d((x(t), z(t)), \mathcal{E}_{xz}^{-})
\]
Suppose that $\delta \leq z(t)$. Suppose also that 
\[
d((x(t), z(t)), \mathcal{E}_{xz}) = d((x(t), z(t)), \mathcal{E}_{xz}^{-})
\]
Then,
\[
\begin{aligned}
d((x(t), z(t)), \mathcal{E}_{xz}) & = \inf_{(a, b) \in \mathcal{E}_{xz}^{-}} \sqrt{(x(t) - a)^2 + (z(t) - b)^2}\\
& \geq \inf_{b \in \mathbb{R}_{\leq 0}} \abs{z(t) - b} \geq z(t) \geq \delta \\
& > d((x(t), z(t)), \mathcal{E}_{xz})
\end{aligned}
\]
Therefore, by contradiction, 
\[
d((x(t), z(t)), \mathcal{E}_{xz}) = d((x(t), z(t)), \mathcal{E}_{xz}^{+})
\]
Noting that 
\[
d((x(t), z(t)), \mathcal{E}_{xz}^{+}) \geq d((0, z(t)), \mathcal{E}_{xz}^{+}) = d((0, z(t)), \mathcal{E}_{xz})
\]
and $d((x(t), z(t)), \mathcal{E}_{xz}) < J_1(\delta/2)$, $d((0, z(t)), \mathcal{E}_{xz}) < J_1(\delta/2)$. Then, by Lemma \ref{thm:HJz}, $\delta \leq z(t) \leq \delta/2 < \delta$, which results in a contradiction. Thus, $z(t) < \delta$. By a similar argument, $x(t) < \delta$. Recalling that $\abs{v(t)} < \delta$, $X(t) \in \mathbb{B}_{\infty}(0, \delta)$. Therefore, for all $s \in \mathbb{R}_{>t}$, $X(s) \in \mathbb{B}(0, \varepsilon)$. By generalization, for every $\varepsilon \in \mathbb{R}_{>0}$ there is $T \in \mathbb{R}_{>0}$ such that $X(t) \in \mathbb{B}(0, \varepsilon)$ for all $t > T$. Thence, $\lim_{t \rightarrow +\infty} X(t) = 0$.

The proof follows by a similar argument if $X$ is in $\mathcal{N}_{sw}$ eventually.
\end{proof}

Define $\mathcal{W} : \mathbb{R}^3 \longrightarrow \mathbb{R}$ as
\begin{equation}
\mathcal{W}(x, z, v) \triangleq (z - x)^2
\end{equation}
for all $x, z, v \in \mathbb{R}$. Note that 
\begin{equation}
\dot{\mathcal{W}}(x, z, v) = 2 (z - x)(f_1(z) g_1(v) + f_2(z) g_2(v))
\end{equation}
for all $x, z, v \in \mathbb{R}$. Define 
\begin{equation}
\mathcal{A}(L) \triangleq \{ (x, z, v) \in \mathbb{R}^3 : z = x + L \wedge v = 0 \}
\end{equation}
for all $L \in \mathbb{R}$.

\begin{lem}\label{thm:A_convergence}
Suppose that $X : \mathbb{R}_{\geq 0} \longrightarrow \mathbb{R}^3$ is a solution of the IVP given by Eq. \eqref{eq:main}. If $X$ is in $\mathcal{N}_{nw}$ eventually, then there exists $L \in \mathbb{R}_{\geq 0}$ such that $\lim_{t \rightarrow +\infty} X(t) = \mathcal{A}(L)$. If $X$ is in $\mathcal{N}_{se}$ eventually, then there exists $L \in \mathbb{R}_{\leq 0}$ such that $\lim_{t \rightarrow +\infty} X(t) = \mathcal{A}(L)$. 
\end{lem}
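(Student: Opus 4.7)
The plan is to exploit the auxiliary function $\mathcal{W}(x,z,v) = (z-x)^2$ introduced above. Its derivative along trajectories,
\[
\dot{\mathcal{W}}(x,z,v) = 2(z-x)\bigl(f_1(z)g_1(v) + f_2(z)g_2(v)\bigr),
\]
factors into a signed prefactor $(z-x)$ and a bracket whose sign is controlled by the hypothesized sign pattern of $f_1, f_2$ (as functions of the sign of $z$) together with the disjoint supports of $g_1$ and $g_2$. The key observation is that once $X$ settles into $\mathcal{N}_{nw}$ or $\mathcal{N}_{se}$, both factors have a definite sign, so $\mathcal{W}\circ X$ becomes monotone; combined with $v(t) \to 0$ from Corollary \ref{thm:convergence_Exz_V0}, this will force $X(t)$ to approach a line of the required form.

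Suppose first that $X$ is eventually in $\mathcal{N}_{nw}$; pick $T_0 \in \mathbb{R}_{\geq 0}$ with $x(t) < 0 < z(t)$ for all $t > T_0$. For any $z > 0$ the hypotheses give $f_1(z) < 0$ and $f_2(z) > 0$; a case split on the sign of $v$ (using $g_1 > 0$ and $g_2 \equiv 0$ on $\mathbb{R}_{>0}$, and $g_1 \equiv 0$ and $g_2 < 0$ on $\mathbb{R}_{<0}$) shows that $f_1(z)g_1(v) + f_2(z)g_2(v) \leq 0$ for every $v \in \mathbb{R}$. Since also $z(t) - x(t) > 0$ for $t > T_0$, this yields $\dot{\mathcal{W}}(X(t)) \leq 0$ on $(T_0, +\infty)$.

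Consequently $t \mapsto \mathcal{W}(X(t))$ is nonincreasing on $(T_0, +\infty)$ and bounded below by $0$, hence converges to some $M \in \mathbb{R}_{\geq 0}$. Because $z(t) - x(t) > 0$ throughout, extracting positive square roots gives $z(t) - x(t) \to L$ with $L \triangleq \sqrt{M} \in \mathbb{R}_{\geq 0}$. Combining this with $v(t) \to 0$ from Corollary \ref{thm:convergence_Exz_V0} yields $d(X(t), \mathcal{A}(L)) \to 0$, that is, $\lim_{t \to +\infty} X(t) = \mathcal{A}(L)$.

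The case $X$ eventually in $\mathcal{N}_{se}$ is symmetric: for $z < 0$ one has $f_1(z) > 0$ and $f_2(z) < 0$, and the analogous case split on $v$ gives $f_1(z)g_1(v) + f_2(z)g_2(v) \geq 0$; since $z(t) - x(t) < 0$ on $\mathcal{N}_{se}$, $\dot{\mathcal{W}}(X(t)) \leq 0$ again, and the persistent sign of $z(t)-x(t)$ forces its limit to equal $-\sqrt{M} =: L \in \mathbb{R}_{\leq 0}$. The only nontrivial step is the sign bookkeeping for $\dot{\mathcal{W}}$; once that is in place the conclusion reduces to monotone boundedness plus the already established vanishing of $v$, so I do not expect any substantive technical obstacle.
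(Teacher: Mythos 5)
Your proposal is correct and follows essentially the same route as the paper's proof: establish $\dot{\mathcal{W}}\leq 0$ on the relevant quadrant, use monotone boundedness of $\mathcal{W}\circ X$ together with the persistent sign of $z-x$ to get $z(t)-x(t)\to L$, and combine with $v(t)\to 0$ from Corollary \ref{thm:convergence_Exz_V0} to conclude $\lim_{t\to+\infty}X(t)=\mathcal{A}(L)$. The only difference is cosmetic: you spell out the sign case split for $\dot{\mathcal{W}}$, which the paper leaves implicit, while the paper spells out the explicit distance estimate $d(X(t),\mathcal{A}(L))=\sqrt{\tfrac{1}{2}\abs{z(t)-x(t)-L}^2+\abs{v(t)}^2}$, which you leave implicit.
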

\begin{proof}
Suppose that $X$ is in $\mathcal{N}_{nw}$ eventually. Obtain $T_1 \in \mathbb{R}_{>0}$ such that $X(t) \in \mathcal{N}_{nw}$ for all $t \in \mathbb{R}_{>T_1}$. Note that $\dot{\mathcal{W}}(X) \leq 0$ for all $X \in \mathcal{N}_{nw}$. Thus, $\mathcal{W} \circ (X \restriction \mathbb{R}_{>T_1})$ is nonincreasing. Since $\mathcal{W}$ is bounded from below by $0 \in \mathbb{R}$, obtain $L \in \mathbb{R}_{\geq 0}$ such that $\lim_{t \rightarrow +\infty} \mathcal{W}(X(t)) = L^2$. Then, $\lim_{t \rightarrow +\infty} (z(t) - x(t)) = L$. Fix $\varepsilon \in \mathbb{R}_{>0}$. Taking into account Corollary \ref{thm:convergence_Exz_V0}, obtain $T_2 \in \mathbb{R}_{>T_1}$ such that $\abs{z(t) - x(t) - L} < \varepsilon$ and $\abs{v(t)} < \varepsilon/\sqrt{2}$ for all $t \in \mathbb{R}_{>T_2}$. Fix $t \in \mathbb{R}_{>T_2}$. Then,
\[
\begin{aligned}
d((x(t), z(t), v(t)), \mathcal{A}(L)) & = \sqrt{\frac{1}{2}\abs{z(t) - x(t) - L}^2 + \abs{v(t)}^2} \\
& < \sqrt{\frac{1}{2} \varepsilon^2 + \frac{1}{2}\varepsilon^2} = \varepsilon
\end{aligned}
\]
By generalization, for all $\varepsilon \in \mathbb{R}_{>0}$, there exists $T \in \mathbb{R}_{>0}$ such that $d((x(t), z(t), v(t)), \mathcal{A}(L)) < \varepsilon$. Thus, $\lim_{t \rightarrow +\infty} X(t) = \mathcal{A}(L)$.

The proof is similar if $X$ is in $\mathcal{N}_{se}$ eventually.
\end{proof}

\begin{lem}\label{thm:N_nw_se}
Suppose that $X : \mathbb{R}_{\geq 0} \longrightarrow \mathbb{R}^3$ is a solution of the IVP given by Eq. \eqref{eq:main} with $X_0 \triangleq X(0)$. Suppose also that $X$ is in $\mathcal{N}_{nw}$ eventually or $X$ is in $\mathcal{N}_{se}$ eventually. Then, there exists $X^{*} \in \mathcal{E} \cap \mathcal{S}_{X_0}$ such that $\lim_{t \rightarrow +\infty} X(t) = X^{*}$. 
\end{lem}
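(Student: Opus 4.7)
The plan is to combine Lemma \ref{thm:A_convergence} with Proposition \ref{thm:convergence_E} and show that the $\omega$-limit set is forced to be a single equilibrium point. By Lemma \ref{thm:A_convergence}, I obtain $L \in \mathbb{R}$ (with the appropriate sign) such that $\lim_{t \rightarrow +\infty} X(t) = \mathcal{A}(L)$, meaning $d(X(t), \mathcal{A}(L)) \rightarrow 0$. By Proposition \ref{thm:convergence_E}, $\mathcal{O}_{X_0}^{+\infty}$ is a nonempty, compact, invariant set contained in $\mathcal{S}_{X_0} \cap \mathcal{E}$, and $\lim_{t \rightarrow +\infty} X(t) = \mathcal{O}_{X_0}^{+\infty}$.

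Next I would argue that $\mathcal{O}_{X_0}^{+\infty} \subseteq \mathcal{A}(L)$. Indeed, since $\mathcal{A}(L)$ is closed and $d(X(t), \mathcal{A}(L)) \rightarrow 0$, any cluster point $Y$ of $X$ at $+\infty$ must satisfy $d(Y, \mathcal{A}(L)) = 0$, hence $Y \in \mathcal{A}(L)$. Combining this with the previous paragraph yields
\[
\mathcal{O}_{X_0}^{+\infty} \subseteq \mathcal{A}(L) \cap \mathcal{E} \cap \mathcal{S}_{X_0}.
\]

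The key step is then to show $\mathcal{A}(L) \cap \mathcal{E}$ consists of a single point. A point in the intersection is of the form $(x, x+L, 0)$ with $h_1(x) + h_2(x+L) = 0$. Define $\phi : \mathbb{R} \longrightarrow \mathbb{R}$ by $\phi(x) \triangleq h_1(x) + h_2(x+L)$. Since $h_1$ and $h_2$ are strictly increasing homeomorphisms of $\mathbb{R}$, $\phi$ is strictly increasing, continuous, and satisfies $\lim_{x \rightarrow \pm \infty} \phi(x) = \pm \infty$. By the intermediate value theorem and strict monotonicity, $\phi$ has exactly one zero $x^{*} \in \mathbb{R}$, so $\mathcal{A}(L) \cap \mathcal{E} = \{X^{*}\}$ where $X^{*} \triangleq (x^{*}, x^{*}+L, 0)$.

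Since $\mathcal{O}_{X_0}^{+\infty}$ is nonempty and contained in the singleton $\{X^{*}\}$, we obtain $\mathcal{O}_{X_0}^{+\infty} = \{X^{*}\}$, and hence $\lim_{t \rightarrow +\infty} X(t) = X^{*}$. The inclusion $\mathcal{O}_{X_0}^{+\infty} \subseteq \mathcal{S}_{X_0} \cap \mathcal{E}$ furnishes $X^{*} \in \mathcal{E} \cap \mathcal{S}_{X_0}$, completing the argument. The case where $X$ is eventually in $\mathcal{N}_{se}$ is handled identically using the corresponding statement of Lemma \ref{thm:A_convergence}. The main obstacle is conceptually minor but worth stating carefully: verifying that convergence in the set-distance sense constrains the $\omega$-limit set to lie in the closed set $\mathcal{A}(L)$; beyond that, the proof is essentially a two-line application of strict monotonicity.
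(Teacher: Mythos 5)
Your proposal is correct and follows essentially the same route as the paper: apply Lemma \ref{thm:A_convergence} to obtain $\mathcal{A}(L)$, use closedness of $\mathcal{A}(L)$ together with the set-distance convergence (the paper spells this out with a triangle-inequality argument) to get $\mathcal{O}_{X_0}^{+\infty} \subseteq \mathcal{A}(L)$, and then conclude from Proposition \ref{thm:convergence_E} once $\mathcal{E} \cap \mathcal{A}(L)$ is shown to be a singleton. The only cosmetic difference is in that last step: the paper invokes the auxiliary Lemma \ref{thm:fgx} applied to $H = h_2^{-1}\circ(-h_1)$ and the line $z = x + L$, whereas you argue directly with the strictly increasing function $\phi(x) = h_1(x) + h_2(x+L)$ --- the same monotonicity argument in different packaging.
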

\begin{proof}
Define $H : \mathbb{R} \longrightarrow \mathbb{R}$ as $H(x) \triangleq h_2^{-1} (- h_1 (x))$ for all $x \in \mathbb{R}$. Note that $H$ is a strictly decreasing continuous function such that $H(0) = 0$. 

Suppose that $X$ is in $\mathcal{N}_{nw}$ eventually. Obtain $T_1 \in \mathbb{R}_{>0}$ such that $X(t) \in \mathcal{N}_{nw}$ for all $t \in \mathbb{R}_{>T_1}$. By Lemma \ref{thm:A_convergence}, obtain $L \in \mathbb{R}_{\geq 0}$ such that $\lim_{t \rightarrow +\infty} X(t) = \mathcal{A}(L)$. 

By Lemma \ref{thm:fgx}, obtain the unique $a \in \mathbb{R}_{\leq 0}$ such that
\[
X^{*} \triangleq (a, H(a), 0) = (a, a + L, 0)
\]
Then, $\mathcal{E} \cap \mathcal{A}(L) = \{ X^{*} \}$.

Fix $p \in \mathcal{O}_{X_0}^{+\infty}$. Fix $\varepsilon \in \mathbb{R}_{>0}$. Note that $\mathcal{O}_{X_0}^{+\infty} \subseteq \mathcal{E} \cap \text{cl} \: \mathcal{N}_{nw}$ with $\mathcal{O}_{X_0}^{+\infty}$ being a compact set by Proposition \ref{thm:convergence_E}. Obtain $T_2 \in \mathbb{R}_{>T_1}$ such that $d(X(t), \mathcal{A}(L)) < \varepsilon/2$ for all $t \in \mathbb{R}_{>T_2}$. Obtain $t \in \mathbb{R}_{>T_2}$ such that $d(p, X(t)) < \varepsilon/2$. Then, by the triangle inequality,
\[
d(p, \mathcal{A}(L)) \leq d(p, X(t)) + d(X(t), \mathcal{A}(L)) < \frac{\varepsilon}{2} + \frac{\varepsilon}{2} = \varepsilon
\]
Thus, by generalization, $d(p, \mathcal{A}(L)) < \varepsilon$ for all $\varepsilon \in \mathbb{R}_{\geq 0}$. Therefore, by analysis, $d(p, \mathcal{A}(L)) = 0$. Since $\mathcal{A}(L)$ is a closed subset of $\mathbb{R}^3$, $p \in \mathcal{A}(L)$. Thus, by generalization, $\mathcal{O}_{X_0}^{+\infty} \subseteq \mathcal{A}(L)$. Thence, 
\[
\mathcal{O}_{X_0}^{+\infty} = \mathcal{O}_{X_0}^{+\infty} \cap \mathcal{A}(L) \subseteq \mathcal{E} \cap \mathcal{A}(L) =  \{ X^{*} \}
\]
Therefore, $\lim_{t \rightarrow +\infty} X(t) = X^{*}$. 

The proof is similar if $X$ is in $\mathcal{N}_{se}$ eventually.
\end{proof}

\begin{theorem}\label{thm:main}
Suppose that $X : \mathbb{R}_{\geq 0} \longrightarrow \mathbb{R}^3$ is a solution of the IVP given by Eq. \eqref{eq:main} with $X_0 \triangleq X(0)$. Then, there exists $X^{*} \in \mathcal{E} \cap \mathcal{S}_{X_0}$ such that $\lim_{t \rightarrow +\infty} X(t) = X^{*}$. 
\end{theorem}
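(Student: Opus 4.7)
The plan is to perform a case analysis based on the geometric behavior of the trajectory in the $(x,z)$-plane, leveraging the asymptotic information already established by Lemmas \ref{thm:Mij_frequently}, \ref{thm:N_ne_nw_sw_se}, \ref{thm:N_ne_sw}, and \ref{thm:N_nw_se}. The dichotomy I would use is whether $X$ is frequently in $\mathcal{M}_{xz}$ or eventually avoids $\mathcal{M}_{xz}$.

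First, I would consider the case where $X$ is frequently in $\mathcal{M}_{xz}$. Since $\mathcal{M}_{xz} = \mathcal{M}_x^{+} \cup \mathcal{M}_x^{-} \cup \mathcal{M}_z^{+} \cup \mathcal{M}_z^{-}$ is a finite union, an elementary pigeonhole argument shows that $X$ must be frequently in at least one of these four half-axis sets. Lemma \ref{thm:Mij_frequently} then yields $\lim_{t \to +\infty} X(t) = 0$. Setting $X^{*} \triangleq 0$, we have $X^{*} \in \mathcal{E}$ (which can be verified directly from the definitions) and $X^{*} \in \mathcal{S}_{X_0}$ because $\mathcal{V}(X^{*}) = 0 \leq \mathcal{V}(X_0)$ by Lemma \ref{thm:pd}.

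Next, I would treat the complementary case where $X$ is eventually not in $\mathcal{M}_{xz}$. Lemma \ref{thm:N_ne_nw_sw_se} furnishes an index $i \in \{ ne, nw, sw, se \}$ such that $X$ is eventually in $\mathcal{N}_i$. If $i \in \{ ne, sw \}$, then Lemma \ref{thm:N_ne_sw} gives $\lim_{t \to +\infty} X(t) = 0$, and we conclude as in the previous case. If $i \in \{ nw, se \}$, then Lemma \ref{thm:N_nw_se} directly provides $X^{*} \in \mathcal{E} \cap \mathcal{S}_{X_0}$ with $\lim_{t \to +\infty} X(t) = X^{*}$.

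I do not expect a serious obstacle here, since all the substantive analytic work has been carried out in the preceding lemmas; the only care required is to ensure that every subcase exhausts the possible asymptotic behaviors and that the limit point is indeed shown to lie in $\mathcal{E} \cap \mathcal{S}_{X_0}$ rather than merely in $\mathcal{E}$. The inclusion in $\mathcal{S}_{X_0}$ follows in either case from the monotone decrease of $\mathcal{V}$ along solutions (Lemma \ref{thm:dV_leq_0}) together with the continuity of $\mathcal{V}$, which forces any limit point to satisfy $\mathcal{V}(X^{*}) \leq \mathcal{V}(X_0)$.
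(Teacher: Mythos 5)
Your proposal is correct and follows essentially the same route as the paper's proof: the same dichotomy (frequently in $\mathcal{M}_{xz}$ versus eventually not in $\mathcal{M}_{xz}$) combined with Lemmas \ref{thm:Mij_frequently}, \ref{thm:N_ne_nw_sw_se}, \ref{thm:N_ne_sw}, and \ref{thm:N_nw_se}. Your explicit pigeonhole step and the verification that $0 \in \mathcal{E} \cap \mathcal{S}_{X_0}$ merely spell out details the paper leaves implicit.
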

\begin{proof}
Suppose that $X$ is in $\mathcal{M}_{xz}$ frequently. Then, $\lim_{t \rightarrow +\infty} X(t) = 0 \in \mathcal{E} \cap \mathcal{S}_{X_0}$ by Lemma \ref{thm:Mij_frequently}. Suppose that $X$ is not in $\mathcal{M}_{xz}$ eventually. Then, by Lemma \ref{thm:N_ne_nw_sw_se}, there exists $i \in \{ ne, nw, sw, se \}$ such that $X$ is in $\mathcal{N}_i$ eventually. Suppose that $X$ is in $\mathcal{N}_{ne}$ eventually or $X$ is in $\mathcal{N}_{sw}$ eventually. Then, by Lemma \ref{thm:N_ne_sw}, $\lim_{t \rightarrow +\infty} X(t) = 0 \in \mathcal{E} \cap \mathcal{S}_{X_0}$. Suppose that $X$ is in $\mathcal{N}_{nw}$ eventually or $X$ is in $\mathcal{N}_{se}$ eventually. Then, by Lemma \ref{thm:N_nw_se}, there exists $X^{*} \in \mathcal{E} \cap \mathcal{S}_{X_0}$ such that $\lim_{t \rightarrow +\infty} X(t) = X^{*}$. 
\end{proof}

\section{The Bouc-Wen Oscillator}\label{sec:BW}

In the context of plasticity, arguably, the most well-known specialization of the Duhem-models is the Bouc-Wen model \cite{bouc_forced_1967, bouc_modemathematique_1971, wen_method_1976}. The Bouc-Wen model is a general parameterizable rate-independent differential model of hysteresis. It can be described by the following system of differential equations with an input and an output \cite{ikhouane_systems_2007}
\begin{equation}
\begin{cases}
\dot{x} = v \\
\dot{z} = D^{-1} A v - D^{-1} \beta \abs{z}^{n - 1} z \abs{v} - D^{-1} \gamma \abs{z}^n v \\
F = -\alpha k x - (1 - \alpha) D k z \\
\begin{matrix} x(0) = x_0 & z(0) = z_0  \end{matrix}
\end{cases}
\end{equation}
where $x, z \in \mathbb{R}$ are state variables, $v \in \mathbb{R}$ is an input variable, $F \in \mathbb{R}$ is an output variable, $A, \beta, \gamma \in \mathbb{R}$, $\alpha \in (0, 1) \subseteq \mathbb{R}$, $k \in \mathbb{R}_{>0}$, $D \in \mathbb{R}_{>0}$, $n \in \mathbb{R}_{>1}$, $\beta \neq - \gamma$, $x_0, z_0 \in \mathbb{R}$ are parameters. Recent surveys in Refs. \cite{ikhouane_systems_2007, ismail_hysteresis_2009, heredia-perez_state---art_2025} provide further general information about the model.

According to Ref. \cite{ikhouane_dynamic_2007}, the Bouc-Wen model is of class I if and only if $A \in \mathbb{R}_{>0}$, $\gamma \in (-\beta, \beta]$. The authors of Ref. \cite{ikhouane_dynamic_2007} describe other four distinct classes of the Bouc-Wen model that depend on the model parameters. However, these classes are seldom relevant for the applications related to mechanical oscillations: ``class I is the only one that is BIBO stable, is compatible with the free motion
of the real systems described by the Bouc–Wen model, is passive and is compatible with the laws of thermodynamics'' \cite{ikhouane_dynamic_2007}. 

In Ref. \cite{ikhouane_dynamic_2007}, the authors study a model of a point mass attached to a rigid surface via a viscous damper and a Bouc-Wen elastoplastic element. The dynamics of the system are given by the following IVP:
\begin{equation}\label{eq:BWO_original}
\begin{cases}
\dot{x} = v \\
\dot{z} = D^{-1} A v - D^{-1} \beta \abs{z}^{n - 1} z \abs{v} - D^{-1} \gamma \abs{z}^n v \\
\dot{v} = -\alpha \frac{k}{m} x - (1 - \alpha) D \frac{k}{m} z - \frac{b}{m} v \\
\begin{matrix} x(0) = x_0 & z(0) = z_0 & v(0) = v_0 \end{matrix}
\end{cases}
\end{equation}
Here, $x, z, v \in \mathbb{R}$ are state variables, $m \in \mathbb{R}_{>0}$, $b \in \mathbb{R}_{\geq 0}$, and $v_0 \in \mathbb{R}$ are additional model parameters. In Ref. \cite{ikhouane_dynamic_2007}, it is shown that if the Bouc-Wen model belongs to class I or class II, then there exists $\bar{b} \in \mathbb{R}_{\geq 0}$ such that $\bar{b} \leq b$ implies that $x_{\infty}, z_{\infty} \in \mathbb{R}$ such that $\lim_{t \rightarrow +\infty} x(t) = x_{\infty}$, $\lim_{t \rightarrow +\infty} z(t) = z_{\infty}$, $\lim_{t \rightarrow +\infty} v(t) = 0$, and $\alpha x_{\infty} + (1 - \alpha) D  z_{\infty} = 0$. 

The Bouc-Wen oscillator is a special case of the Duhem oscillator that was considered in this study. To show this, introduce a new state variable $Z \triangleq z/(A D^{-1})$. Then, after rescaling and renaming (i.e., replacing the symbol $Z$ with $z$), Eq. \eqref{eq:BWO_original} becomes
\begin{equation}\label{eq:BWO_rescaled}
\begin{cases}
\dot{x} = v \\
\dot{z} = v - A^{n - 1} D^{-n} \beta \abs{z}^{n - 1} z \abs{v} - A^{n - 1} D^{-n} \gamma \abs{z}^n v \\
\dot{v} = -\alpha \frac{k}{m} x - (1 - \alpha) A \frac{k}{m} z - \frac{b}{m} v \\
\begin{matrix} x(0) = x_0 & z(0) = z_0/(A D^{-1}) & v(0) = v_0 \end{matrix}
\end{cases}
\end{equation}
Define $f_1, f_2, g_1, g_2, h_1, h_2 : \mathbb{R} \longrightarrow \mathbb{R}$ and $c : \mathbb{R}^3 \longrightarrow \mathbb{R}$ as
\begin{equation}
f_1(z) \triangleq - A^{n - 1} D^{-n} \beta \abs{z}^{n - 1} z - A^{n - 1} D^{-n} \gamma \abs{z}^n
\end{equation}
\begin{equation}
f_2(z) \triangleq A^{n - 1} D^{-n} \beta \abs{z}^{n - 1} z - A^{n - 1} D^{-n} \gamma \abs{z}^n
\end{equation}
\begin{equation}
g_1(v) \triangleq \frac{v + \abs{v}}{2}
\end{equation}
\begin{equation}
g_2(v) \triangleq \frac{v - \abs{v}}{2}
\end{equation}
\begin{equation}
h_1(x) \triangleq \alpha \frac{k}{m}x
\end{equation}
\begin{equation}
h_2(z) \triangleq (1 - \alpha) A \frac{k}{m} z
\end{equation}
\begin{equation}
c(x,z,v) \triangleq \frac{b}{m}v
\end{equation}
Then, Eq. \eqref{eq:BWO_rescaled} can be written as 
\begin{equation}\label{eq:BW_main}
\begin{cases}
\dot{x} = v\\
\dot{z} = v + f_1(z) g_1(v) + f_2(z) g_2(v) \\
\dot{v} = - h_1(x) - h_2(z) - c(x, z, v)
\end{cases}
\end{equation}
which is equivalent to Eq. \eqref{eq:main}. It can be verified that the functions $f_1$, $f_2$, $g_1$, $g_2$, $h_1$, $h_2$, $c$ satisfy the conditions necessary for the application of Theorem \ref{thm:main} provided that $A \in \mathbb{R}_{>0}$, $\gamma \in (-\beta, \beta]$, and $b \in \mathbb{R}_{\geq 0}$ (the Bouc-Wen model of class I). In this case, by Theorem \ref{thm:main}, there exist $x_{\infty}, z_{\infty} \in \mathbb{R}$ such that $\lim_{t \rightarrow +\infty} x(t) = x_{\infty}$, $\lim_{t \rightarrow +\infty} z(t) = z_{\infty}$, $\lim_{t \rightarrow +\infty} v(t) = 0$, and $\alpha x_{\infty} + (1 - \alpha) A z_{\infty} = 0$. This is equivalent to $\lim_{t \rightarrow +\infty} x(t) = x_{\infty}$, $\lim_{t \rightarrow +\infty} z(t) = z_{\infty}$, $\lim_{t \rightarrow +\infty} v(t) = 0$, and $\alpha x_{\infty} + (1 - \alpha) D  z_{\infty} = 0$ in the original coordinates (under the $z_{\infty} \mapsto A D^{-1} z_{\infty}$ transformation).

\section{Conclusions and Future Work}\label{sec:conclusions}

The article described certain analytical properties of an unforced mechanical oscillator with a Duhem-type viscoelastoplastic hysteretic element. These properties include convergence of each solution to an equilibrium point, generalizing a result that was previously presented in Ref. \cite{ikhouane_dynamic_2007}.

Future work may include generalization of the results presented in this article to other (more abstract) Duhem-type models, an investigation of the stability of individual equilibrium points, estimation of the rate of convergence of the solutions, and investigation of the behavior of the system under the influence of external disturbances.

\section*{Acknowledgment} 

The authors would like to acknowledge their families, colleagues, and friends. Special thanks go to the members of staff of Auburn University Libraries for their assistance in finding rare and out-of-print research articles and research monographs. The authors would also like to acknowledge the professional online communities, instructional websites, and various online service providers, especially \url{https://www.adobe.com/acrobat/online/pdf-to-word.html}, \url{https://archive.org/}, \url{https://automeris.io}, \url{https://capitalizemytitle.com}, \url{https://www.matweb.com}, \url{https://www.overleaf.com}, \url{https://pgfplots.net}, \url{https://proofwiki.org/}, \url{https://www.reddit.com}, \url{https://scholar.google.com}, \url{https://stackexchange.com}, \url{https://stringtranslate.com}, \url{https://www.wikipedia.org}. We also note that the results of some of the calculations that are presented in this article were performed with the assistance of the software Wolfram Mathematica \cite{wolfram_research_inc_mathematica_2023}. Other software that was used to produce this article included Adobe Acrobat Reader, Adobe Digital Editions, DiffMerge, Git, GitLab, Google Chrome, Google Gemini (Google Gemini was used as an assistant; no significant parts of the article were written by AI), Grammarly (the use of Grammarly was restricted to the identification and correction of spelling, grammar, and punctuation errors), Jupyter Notebook, LibreOffice, macOS Monterey, Mamba, Microsoft Outlook, Preview, Safari, TeX Live/MacTeX, Texmaker, and Zotero.

\section*{Funding Data}

The present work did not receive any specific funding. However, the researchers receive financial support from Auburn University for their overall research activity.

\appendix

\section{Notation, Conventions, Foundations}\label{sec:NCF}

Essentially all of the definitions and results that are employed in this article are standard in the fields of set theory, general topology, analysis, ordinary differential equations, and nonlinear systems/control. They can be found in a number of textbooks and monographs on these subjects (e.g., see Ref. \cite{takeuti_introduction_1982}, Refs. \cite{kelley_general_1955, morris_topology_2020, baldwin_math_2024}, Refs. \cite{bloch_real_2010, shurman_calculus_2016, ziemer_modern_2017}, Refs. \cite{chicone_ordinary_1999, schaeffer_ordinary_2016}, Refs. \cite{lasalle_extensions_1960, yoshizawa_stability_1966, yoshizawa_stability_1975, sontag_mathematical_1998, sastry_nonlinear_1999, khalil_nonlinear_2002, haddad_nonlinear_2011}, respectively). 

\begin{definition}
$\in$ denotes the set membership relation, $\subseteq$ denotes the subset relation, $\subset$ denotes the proper subset relation, $\cup$ denotes the binary set union operation, $\cap$ denotes the binary set intersection operation, $\setminus$ denotes the binary set difference operation, $\mathcal{P}$ denotes the power set operation, $\emptyset$ denotes the empty set, $(a_1, \ldots, a_n)$ denotes an $n$-tuple, $\{ a_1, \ldots, a_n \}$ denotes an unordered collection of elements.\footnote{It should be noted that some of the syntactic constructions may carry different semantics depending on the context. For example, $(a, b)$ may be used as a pair or as an interval. It is hoped that the context of the discussion will always make the meaning of a given syntactic construction apparent.}
\end{definition}

\begin{definition}
By convention, a topological space cannot be empty. Suppose $X \neq \emptyset$ and $\tau \subseteq \mathcal{P} X$ is a topology on $X$. $\mathsf{cl} A$ denotes the closure of $A \subseteq X$; if $Y \subseteq X$ and $Y \neq \emptyset$, then $\tau | Y$ will denote the subspace topology of $\tau$ on $Y$; the sets $A \subseteq X$ and $B \subseteq X$ are separated if and only if $\mathsf{cl} A \cap B = A \cap \mathsf{cl} B = \emptyset$; a set $C \subseteq X$ is clopen if and only if it is open and closed; $A \subseteq X$ is connected if and only if it is not a union of two nonempty separated sets; $(X, \tau)$ is a connected topological space if and only if $X$ is a connected set.  
\end{definition}

It should be noted that different definitions of a connected set and a connected topological space are employed in some of the cited literature. The following technical lemmas establish a connection between the two commonly used definitions (these results are not used directly, and the proofs were deemed to be sufficiently simple to be omitted):
\begin{lem}
Suppose $(X, \tau)$ is a topological space. Then, $(X, \tau)$ is connected if and only if the only clopen sets in $(X, \tau)$ are $\emptyset$ and $X$.
\end{lem}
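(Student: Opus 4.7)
The plan is to prove both directions by contraposition, exploiting the definition of separated sets given in the paper.

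For the forward direction ($\Rightarrow$), I would assume there exists a clopen set $C \subseteq X$ with $C \neq \emptyset$ and $C \neq X$, and exhibit a separation of $X$. Set $D \triangleq X \setminus C$, which is nonempty (since $C \neq X$) and also clopen (as the complement of a clopen set). Since both $C$ and $D$ are closed, $\mathsf{cl}\, C = C$ and $\mathsf{cl}\, D = D$. Combining this with $C \cap D = \emptyset$ gives $\mathsf{cl}\, C \cap D = C \cap D = \emptyset$ and $C \cap \mathsf{cl}\, D = C \cap D = \emptyset$, so $C$ and $D$ are separated in the sense of the paper. Since $X = C \cup D$ is a union of two nonempty separated sets, $(X, \tau)$ is not connected.

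For the reverse direction ($\Leftarrow$), I would assume $(X, \tau)$ is not connected, obtain nonempty separated sets $A, B$ with $X = A \cup B$, and show that $A$ is a clopen set distinct from $\emptyset$ and $X$. The key observation is that $\mathsf{cl}\, A \cap B = \emptyset$ together with $\mathsf{cl}\, A \subseteq X = A \cup B$ forces $\mathsf{cl}\, A \subseteq A$; since $A \subseteq \mathsf{cl}\, A$ always holds, $A$ is closed. The symmetric argument applied to $B$ shows $B$ is closed, so $A = X \setminus B$ is open, hence clopen. Finally $A \neq \emptyset$ by hypothesis, and $A \neq X$ since $B \neq \emptyset$ and $A \cap B \subseteq A \cap \mathsf{cl}\, B = \emptyset$.

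There is no serious obstacle here; the whole argument amounts to unpacking the definitions. The only mild subtlety is the symmetric role of $A$ and $B$ in the paper's definition of ``separated'' (both $\mathsf{cl}\, A \cap B$ and $A \cap \mathsf{cl}\, B$ must vanish), so I would be careful to verify both conditions in the forward direction and to use both in the reverse direction to conclude that $A$ and $B$ are simultaneously closed. Once that bookkeeping is handled, the equivalence follows in a few lines.
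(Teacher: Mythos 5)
Your argument is correct: both directions are a straightforward and accurate unpacking of the paper's definitions of separated sets and connectedness, including the needed check that $A$ and $B$ are disjoint (via $A \cap \mathsf{cl}\, B = \emptyset$) before writing $A = X \setminus B$. The paper itself omits the proof of this lemma, describing it as sufficiently simple, and your proof is exactly the standard argument the authors evidently had in mind, so there is nothing to compare beyond noting that you have filled in the omitted details correctly.
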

\begin{lem}\label{thm:connected_set_alt_def}
Suppose $(X, \tau)$ is a topological space and $Y \subseteq X$. Then, $Y$ is a connected set in $(X, \tau)$ if and only if either $Y = \emptyset$ or $(Y, \tau | Y)$ is a connected topological space.
\end{lem}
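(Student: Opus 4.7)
The plan is to split the proof into two cases according to whether $Y = \emptyset$. When $Y = \emptyset$, the equivalence is immediate: by the paper's definition, the empty set cannot be written as a union of two nonempty separated subsets (having no nonempty subsets at all), so $Y$ is vacuously a connected subset of $(X,\tau)$; and the right-hand disjunct is trivially satisfied by $Y=\emptyset$. This case handles the gap created by the convention that a topological space cannot itself be empty, which is why $(Y, \tau \mid Y)$ must be excluded here.

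For $Y \neq \emptyset$, the key observation is that separations of $Y$ in the ambient space $(X,\tau)$ are in one-to-one correspondence with separations of $(Y,\tau\mid Y)$ as a topological space. The pivotal technical fact I would first establish (or invoke as a standard result) is the closure identity
\[
\mathsf{cl}_{\tau\mid Y}(A) \;=\; Y \cap \mathsf{cl}_{\tau}(A)
\qquad \text{for every } A \subseteq Y.
\]
With this identity in hand, I would then translate the separation condition in the subspace topology into the corresponding condition in the ambient topology.

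For the forward direction, I would argue by contrapositive. Suppose $(Y, \tau\mid Y)$ is not a connected topological space. Then $Y = A \cup B$ where $A, B$ are nonempty subsets of $Y$ satisfying $\mathsf{cl}_{\tau\mid Y}(A) \cap B = A \cap \mathsf{cl}_{\tau\mid Y}(B) = \emptyset$. Because $A, B \subseteq Y$, the closure identity yields
\[
\mathsf{cl}_{\tau}(A) \cap B \;=\; Y \cap \mathsf{cl}_{\tau}(A) \cap B \;=\; \mathsf{cl}_{\tau\mid Y}(A) \cap B \;=\; \emptyset,
\]
and symmetrically $A \cap \mathsf{cl}_{\tau}(B) = \emptyset$. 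Hence $Y = A \cup B$ is a separation in $(X,\tau)$ as well, contradicting connectedness of $Y$ as a subset.

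The reverse direction is essentially the same calculation run backward: a separation $Y = A \cup B$ with $A, B$ nonempty and $\mathsf{cl}_{\tau}(A) \cap B = A \cap \mathsf{cl}_{\tau}(B) = \emptyset$ immediately gives, via the closure identity, $\mathsf{cl}_{\tau\mid Y}(A) \cap B = \emptyset$ and $A \cap \mathsf{cl}_{\tau\mid Y}(B) = \emptyset$, so $(Y,\tau\mid Y)$ is not connected. I do not anticipate a substantive obstacle; the only points requiring care are bookkeeping about the empty-set convention and a clean invocation of the closure identity in both directions.
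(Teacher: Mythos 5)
Your proof is correct. Note that the paper deliberately omits a proof of this lemma (it states that the proofs of Lemmas A.1 and A.2 ``were deemed to be sufficiently simple to be omitted''), so there is no authorial argument to compare against; your argument --- handling $Y=\emptyset$ separately to respect the convention that a topological space is nonempty, and then using the standard identity $\mathsf{cl}_{\tau\mid Y}(A) = Y \cap \mathsf{cl}_{\tau}(A)$ to put separations of $Y$ in $(X,\tau)$ in bijection with separations of $(Y,\tau\mid Y)$ --- is exactly the routine argument the authors had in mind when they skipped it, and both directions check out against the paper's definitions of ``separated'' and ``connected.''
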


The proof of the following lemma was deemed to be sufficiently simple to be omitted:
\begin{lem}\label{thm:con_sep}
Suppose $(X, \tau)$ is a topological space. Suppose that $A \subseteq X$ and $B \subseteq X$ are separated, $C \subseteq A \cup B$ is connected. Then, $C \subseteq A$ or $C \subseteq B$. 
\end{lem}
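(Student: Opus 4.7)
The plan is to proceed by contradiction, using the definition of connectedness that was given in the excerpt (a set is connected iff it is not the union of two nonempty separated sets). Suppose, toward a contradiction, that $C \not\subseteq A$ and $C \not\subseteq B$. Then the sets $C \cap A$ and $C \cap B$ are both nonempty, and since $C \subseteq A \cup B$ we have the decomposition $C = (C \cap A) \cup (C \cap B)$. The goal is to show that this decomposition exhibits $C$ as a union of two nonempty separated sets, contradicting the assumed connectedness.

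The key step is to verify separation of $C \cap A$ and $C \cap B$. Using monotonicity of the closure operator, $\mathsf{cl}(C \cap A) \subseteq \mathsf{cl}(A)$, hence
\[
\mathsf{cl}(C \cap A) \cap (C \cap B) \subseteq \mathsf{cl}(A) \cap B = \emptyset,
\]
where the last equality is part of the hypothesis that $A$ and $B$ are separated. The symmetric calculation gives $(C \cap A) \cap \mathsf{cl}(C \cap B) = \emptyset$. Thus $C \cap A$ and $C \cap B$ are separated.

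Since $C \cap A$ and $C \cap B$ are nonempty, separated, and union to $C$, the set $C$ is not connected, contradicting the hypothesis. Therefore $C \subseteq A$ or $C \subseteq B$.

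I do not anticipate any real obstacle here; the argument is purely a manipulation of the definitions of \emph{separated} and \emph{connected} as given in the appendix. The only subtlety worth flagging is that one must use the ambient closure $\mathsf{cl}$ in $(X,\tau)$ rather than a subspace closure, which is automatic since separation in the excerpt is defined with respect to $(X,\tau)$.
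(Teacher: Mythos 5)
Your proof is correct. The paper itself omits the proof of Lemma \ref{thm:con_sep} (it is ``deemed to be sufficiently simple to be omitted''), and your argument --- decomposing $C$ as $(C \cap A) \cup (C \cap B)$, checking nonemptiness from $C \not\subseteq A$ and $C \not\subseteq B$, and using monotonicity of closure to inherit separation from $A$ and $B$ --- is exactly the standard argument the authors had in mind, matching the paper's definitions of \emph{separated} and \emph{connected} precisely.
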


\begin{definition}
$\mathbb{Z}$ is the set of all integers; $\mathbb{R}$ is the set of all real numbers; an interval of real numbers $I \subseteq \mathbb{R}$ is non-degenerate if it has a non-empty interior; $\mathbb{K}_{>a} \triangleq (a, +\infty) \cap \mathbb{K}$, $\mathbb{K}_{<a} \triangleq (-\infty, a) \cap \mathbb{K}$, $\mathbb{K}_{\geq a} \triangleq [a, +\infty) \cap \mathbb{K}$, and $\mathbb{K}_{\leq a} \triangleq (-\infty, a] \cap \mathbb{K}$ for any $a \in \mathbb{R}$ with $\mathbb{K} \subseteq \mathbb{R}$; $\mathbb{R}^n$ with $n \in \mathbb{Z}_{\geq 1}$ is the set of $n$-tuples of real numbers (augmented with the structure of the Euclidean space); if $X = (x_1, \ldots, x_n) \in \mathbb{R}^n$ with $n \in \mathbb{Z}_{\geq 1}$, then $X_i \triangleq x_i$ for all $i \in \{ 1, \ldots, n \}$; $f : X \longrightarrow Y$ denotes a function with the domain $X$ and the codomain $Y$; given $f : X \longrightarrow Y$, $f(A)$ denotes the image of $f$ under the set $A$; if $f : X \longrightarrow Y$, then $\mathcal{G}(f) \subseteq X \times Y$ is the graph of $f$ given by $\mathcal{G}(f) \triangleq \{ (x, y) \in X \times Y : y = f(x) \}$; if $f : X \longrightarrow \mathbb{R}^n$ with $n \in \mathbb{Z}_{\geq 1}$, then $f_i : X \longrightarrow \mathbb{R}$ is given by $f_i(x) \triangleq (f(x))_i$ for all $x \in X$ and $i \in \{ 1, \ldots, n \}$; unless stated otherwise, the topology of a subset of $\mathbb{R}^n$ with $n \in \mathbb{Z}_{\geq 1}$ is always the subspace topology of the standard topology on $\mathbb{R}^n$; given $A \subseteq \mathbb{R}$, $\inf A \in \mathbb{R} \cup \{ -\infty, +\infty \}$ denotes the infimum of $A$ and $\sup A \in \mathbb{R} \cup \{ -\infty, +\infty \}$ denotes the supremum of $A$; given a sequence $\{ x_i \in \mathbb{R}^n \}_{i \in \mathbb{Z}_{\geq 1}}$ with $n \in \mathbb{Z}_{\geq 1}$, $\lim_{i \rightarrow +\infty} x_i$ denotes the limit of $x$, provided that it exists; $\langle \cdot, \cdot \rangle : \mathbb{R}^n \times \mathbb{R}^n \longrightarrow \mathbb{R}$ with $n \in \mathbb{Z}_{\geq 1}$ is the canonical inner product on $\mathbb{R}^n$; $\lVert \cdot \rVert_p : \mathbb{R}^n \longrightarrow \mathbb{R}_{\geq 0}$ with $n \in \mathbb{Z}_{\geq 1}$ and $p \in \mathbb{R}_{\geq 1} \cup \{+\infty\}$ is the $p$-norm on $\mathbb{R}^n$; given $n \in \mathbb{Z}_{\geq 1}$ and $p \in \mathbb{R}_{\geq 1} \cup \{+\infty\}$, $d_p : \mathbb{R}^n \times \mathbb{R}^n \longrightarrow \mathbb{R}_{\geq 0}$ given by $d_p(x, y) \triangleq \lVert x - y \rVert_p$ is the metric induced by the $p$-norm; $d \triangleq d_p$ for all $p \in \mathbb{R}_{\geq 1} \cup \{+\infty\}$; assuming that $n \in \mathbb{Z}_{\geq 1}$, $p \in \mathbb{R}_{\geq 1} \cup \{+\infty\}$, $a \in \mathbb{R}^n$, and $r \in \mathbb{R}_{>0}$, $\mathbb{B}_p(a, r) \triangleq \{\ x \in \mathbb{R}^n : d_p(x, a) < r \}$ is an open $p$-ball in $\mathbb{R}^n$ centered at $a$ with the radius $r$; $f : \mathbb{R}^n \longrightarrow \mathbb{R}^n$ with $n \in \mathbb{Z}_{\geq 1}$ is locally Lipschitz if and only if for every $x \in \mathbb{R}^n$ there exists an open set $U \subseteq \mathbb{R}^n$ such that $x \in U$ and there exists $L \in \mathbb{R}_{>0}$ such that $d(f(y),f(z)) \leq L d(y, z)$ for all $y, z \in U$; given a differentiable function $f : X \longrightarrow Y$ such that $X \subseteq \mathbb{R}$ and $Y \subseteq \mathbb{R}^n$ with $n \in \mathbb{Z}_{\geq 1}$, $df/dx$ and $\partial f$ may be used to denote the derivative of the function; the overdot notation $\dot{x} \triangleq (dx/dt)$ may be used to represents the derivative of a differentiable function $x : \mathbb{R} \longrightarrow \mathbb{R}^n$ with $n \in \mathbb{Z}_{\geq 1}$ with respect to the time variable in the context of mechanics; given a differentiable function $f : X \longrightarrow Y$ such that $X \subseteq \mathbb{R}^n$ and $Y \subseteq \mathbb{R}$ with $n \in \mathbb{Z}_{\geq 1}$, $\partial_i f$ denotes the $i$-th partial derivative of the function for $i \in \{ 1, \ldots, n \}$.
\end{definition}

\begin{lem}\label{thm:fgx}
Suppose that $f : \mathbb{R} \longrightarrow \mathbb{R}$ is a continuous strictly decreasing function such that $f(0) = 0$. Suppose that $g : \mathbb{R} \longrightarrow \mathbb{R}$ is a continuous strictly increasing function such that $g(a) = 0$ and $g(0) = b$ for some $a \in \mathbb{R}_{\leq 0}$ and $b \in \mathbb{R}_{\geq 0}$. Then, there is a unique $x \in \mathbb{R}$ such that $f(x) = g(x)$. Moreover, $x \in \mathbb{R}_{\leq 0}$.
\end{lem}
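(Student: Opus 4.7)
The plan is to reduce the problem to a standard intermediate value theorem argument applied to the auxiliary function $h \triangleq f - g$. Since $f$ is continuous and strictly decreasing and $g$ is continuous and strictly increasing, $h$ is continuous and strictly decreasing on $\mathbb{R}$. Strict monotonicity of $h$ immediately delivers uniqueness of any zero, so the main task is to locate a zero of $h$ in $\mathbb{R}_{\leq 0}$.

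To produce that zero, I would evaluate $h$ at the two distinguished points furnished by the hypotheses: $0$ and $a$. At the right endpoint, $h(0) = f(0) - g(0) = 0 - b = -b$, which is nonpositive since $b \in \mathbb{R}_{\geq 0}$. At the left endpoint, $h(a) = f(a) - g(a) = f(a) - 0 = f(a)$, and since $a \in \mathbb{R}_{\leq 0}$ and $f$ is strictly decreasing with $f(0) = 0$, one has $f(a) \geq 0$. Thus $h(a) \geq 0 \geq h(0)$, and by continuity of $h$ the intermediate value theorem (applied on the compact interval $[a, 0]$) yields some $x \in [a, 0]$ with $h(x) = 0$, i.e. $f(x) = g(x)$. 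Because $[a, 0] \subseteq \mathbb{R}_{\leq 0}$, this $x$ satisfies the sign conclusion as well.

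Uniqueness is then immediate: if $f(x_1) = g(x_1)$ and $f(x_2) = g(x_2)$ with $x_1 \neq x_2$, then $h(x_1) = h(x_2) = 0$, contradicting strict monotonicity of $h$. There is essentially no obstacle here; the only mildly delicate point is handling the degenerate cases $a = 0$ or $b = 0$, in which the interval $[a, 0]$ collapses to $\{0\}$ or one of the endpoint inequalities becomes an equality, but in both cases the argument still produces (and forces) $x = 0$ as the solution, still in $\mathbb{R}_{\leq 0}$.
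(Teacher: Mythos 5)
Your proof is correct and follows essentially the same route as the paper: define $h = f - g$, obtain uniqueness from strict monotonicity of $h$, and locate a zero in $[a,0]$ via the intermediate value theorem. The only difference is cosmetic — the paper splits explicitly into the cases $a=0$, $b=0$, and $a<0 \wedge b>0$, while you absorb the degenerate cases into a single IVT application with weak inequalities.
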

\begin{proof}
Define the continuous function $h : \mathbb{R} \longrightarrow \mathbb{R}$ as 
\[
h(x) \triangleq f(x) - g(x)
\] 
for all $x \in \mathbb{R}$. Note that $h$ is a strictly decreasing continuous function. Then, $x = h^{-1}(0) \in \mathbb{R}$ is the unique point such that $h(x) = 0$ or, equivalently, $f(x) = g(x)$. It remains to show that $x \in \mathbb{R}_{\leq 0}$.

Suppose that $a = 0$. Then, $g(0) = 0$. Since $f(0) = 0$ by assumptions of the lemma, $x = 0 \in \mathbb{R}_{\leq 0}$ is such that $f(x) = g(x)$. Suppose that $b = 0$. Then, $g(0) = 0$. Since $f(0) = 0$ by assumptions of the lemma, $x = 0 \in  \mathbb{R}_{\leq 0}$ is such that $f(x) = g(x)$.

Suppose that $a < 0$ and $b > 0$. Note that $h(a) = f(a) - g(a) = f(a) > 0$. Note also that $h(0) = f(0) - g(0) = -b < 0$. By the Intermediate Value Theorem (e.g., see Theorem 3.5.2 in Ref. \cite{bloch_real_2010}), obtain $x \in (a, 0) \subseteq \mathbb{R}_{\leq 0}$ such that $h(x) = 0$. 
\end{proof}
\begin{lem}\label{thm:inf_A_U_B}
Suppose that $A, B \subseteq \mathbb{R}$ are bounded from below and $\inf A \leq \inf B$. Then, $\inf (A \cup B) = \inf A$.
\end{lem}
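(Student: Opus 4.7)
The plan is to establish the equality $\inf(A \cup B) = \inf A$ by proving the two inequalities $\inf(A \cup B) \leq \inf A$ and $\inf A \leq \inf(A \cup B)$ separately. Both sets are bounded from below, and since $A \subseteq A \cup B$, the union is likewise bounded from below, so all three infima exist as real numbers.

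For the first inequality, I would invoke the monotonicity of the infimum with respect to set inclusion: because $A \subseteq A \cup B$, every lower bound of $A \cup B$ is in particular a lower bound of $A$, so $\inf(A \cup B) \leq \inf A$.

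For the reverse inequality, I would show that $\inf A$ is a lower bound of $A \cup B$ and then conclude by the defining property of the infimum. To that end, fix an arbitrary $x \in A \cup B$. If $x \in A$, then $\inf A \leq x$ by definition. If $x \in B$, then $\inf B \leq x$, and combining this with the hypothesis $\inf A \leq \inf B$ yields $\inf A \leq x$. In either case $\inf A \leq x$, so $\inf A$ is a lower bound of $A \cup B$, whence $\inf A \leq \inf(A \cup B)$.

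There is no genuine obstacle here; the result is a routine consequence of the definition of infimum together with monotonicity, and the only subtlety is being careful to use the hypothesis $\inf A \leq \inf B$ exactly once, in the case $x \in B$. Combining the two inequalities produces the desired equality.
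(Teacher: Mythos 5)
Your proof is correct and follows essentially the same route as the paper: the inequality $\inf(A\cup B)\leq\inf A$ by monotonicity, and the reverse inequality from the hypothesis $\inf A\leq\inf B$. The only cosmetic difference is that you show $\inf A$ is a lower bound of $A\cup B$ directly, whereas the paper argues by contradiction, picking an element of $A\cup B$ strictly below $\inf A$; both are valid and equally routine.
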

\begin{proof}
It is apparent that $\inf (A \cup B) \leq \inf A$. Suppose that 
\[
c \triangleq \inf A \cup B < \inf A \triangleq a
\]
Obtain $b \in A \cup B$ such that $b \in [c, a)$. Then, $b < \inf A \leq \inf B$, which results in a contradiction. Thus, $\inf (A \cup B) = \inf A$.
\end{proof}

\begin{definition}
Consider the following system of ordinary differential equations
\begin{equation}\label{eq:sys}
\dot{x} = f(x)\\
\end{equation}
where $f : \mathbb{R}^n \longrightarrow \mathbb{R}^n$ with $n \in \mathbb{Z}_{\geq 1}$ is a locally Lipschitz continuous state function. Equation \eqref{eq:sys} augmented with an initial condition $x(0) = x_0 \in \mathbb{R}^n$ shall be referred to as an Initial Value Problem (IVP) associated with the system given by Eq. \eqref{eq:sys}. A differentiable function $x : I \longrightarrow \mathbb{R}^n$ with $I \subseteq \mathbb{R}$ being a non-degenerate interval such that $0 \in I$ is a solution of the IVP associated with the system given by Eq. \eqref{eq:sys} with the initial condition $x_0 \in \mathbb{R}^n$ if $x(0) = x_0$ and $\dot{x}(t) = f(x(t))$ for all $t \in I$. 
\end{definition}
The following definitions can be found in Ref. \cite{bhat_nontangency-based_2003} and Ref. \cite{haddad_nonlinear_2011}:
\begin{definition}
For the remainder of this definition, suppose that the system given by Eq. \eqref{eq:sys} has a unique solution defined on $\mathbb{R}_{\geq 0}$ for every initial condition. Suppose that $x : \mathbb{R}_{\geq 0} \longrightarrow \mathbb{R}^n$ is a solution of an IVP associated with the system given by Eq. \eqref{eq:sys} with the initial condition $x(0) = z \in \mathbb{R}^n$. Then, $\mathcal{O}_z^{+} \triangleq \{ x(t) : t \in \mathbb{R}_{\geq 0} \}$ is the positive orbit of $z$. A set $U \subseteq \mathbb{R}^n$ is positively invariant with respect to the system given by Eq. \eqref{eq:sys} if and only if for every solution $x : \mathbb{R}_{\geq 0} \longrightarrow \mathbb{R}^n$ of the IVP with $x(0) = z \in U$, $x(t) \in U$ for all $t \in \mathbb{R}_{\geq 0}$. A set $U \subseteq \mathbb{R}^n$ is negatively invariant with respect to the system given by Eq. \eqref{eq:sys} if and only if for every $z \in U$ and $T \in \mathbb{R}_{\geq 0}$ there exists a solution $x : [0, T] \longrightarrow U$ of the IVP with $x(T) = z$. A set $U \subseteq \mathbb{R}^n$ is invariant with respect to the system given by Eq. \eqref{eq:sys} if and only if it is positively invariant and negatively invariant with respect to the system given by Eq. \eqref{eq:sys}. Suppose again that $x : \mathbb{R}_{\geq 0} \longrightarrow \mathbb{R}^n$ is a solution of an IVP associated with the system given by Eq. \eqref{eq:sys} with the initial condition $x(0) = z \in \mathbb{R}^n$. Then, $p \in \mathbb{R}^n$ is a positive limit point of $z$ if and only if there exists a nondecreasing sequence $\{ t_n \}_{n \in \mathbb{Z}_{\geq 1}}$ of positive real numbers such that $\lim_{n \rightarrow +\infty} t_n = +\infty$ and $\lim_{n \rightarrow +\infty} x(t_n) = p$. Furthermore, $\mathcal{O}_z^{+\infty} \subseteq \mathbb{R}^n$ shall be used to denote the positive limit set of $z$, that is, the set of all positive limit points of $z$. $\lim_{t \rightarrow +\infty} x (t) = A \subseteq \mathbb{R}^n$ if and only if for every $\varepsilon \in \mathbb{R}_{>0}$ there exists $T \in \mathbb{R}_{>0}$ such that $\inf_{p \in A} d(x(t), p) < \varepsilon$ for all $t \in \mathbb{R}_{>T}$. A continuous and strictly increasing function $\alpha : \mathbb{R}_{\geq 0} \longrightarrow \mathbb{R}_{\geq 0}$ is of class $\mathcal{K}_{\infty}$ if and only if $\alpha(0) = 0$ and $\lim_{x \rightarrow +\infty} \alpha (x) = +\infty$.
\end{definition}
The following definition can be found in Ref. \cite{yoshizawa_stability_1975}:
\begin{definition}
The solutions of the system given by Eq. \eqref{eq:sys} are said to be equibounded if and only if for all $\alpha \in \mathbb{R}_{>0}$ there exists $\beta \in \mathbb{R}_{>0}$ such that $\lVert x(t) \rVert_2 < \beta$ for all $t \in [0, T)$ for every solution $x : [0, T) \longrightarrow \mathbb{R}^n$ with $T \in \mathbb{R}_{>0} \cup \{ +\infty \}$ starting from the initial condition $x(0) = x_0 \in \mathbb{R}^n$ such that $\lVert x_0 \rVert_2 \leq \alpha$.
\end{definition}

\bibliographystyle{asmejour} 

\bibliography{template.bib} 

@article{bouc_modemathematique_1971,
	title = {Modèle {Mathématique} d’{Hystérésis}},
	volume = {24},
	number = {1},
	journal = {Acustica},
	author = {Bouc, R.},
	year = {1971},
	pages = {16--25},
	file = {art00004.pdf:/Users/mihailsmilehins/Zotero/storage/XCV7SNHS/art00004.pdf:application/pdf;IngentaConnect Full Text PDF:/Users/mihailsmilehins/Zotero/storage/52R2DWK2/Bouc - 1971 - A Mathematical Model for Hysteresis.pdf:application/pdf;Snapshot:/Users/mihailsmilehins/Zotero/storage/F52QU64S/1570291225556669696.html:text/html},
}

@article{heredia-perez_state---art_2025,
	title = {A {State}-of-the-{Art} {Review} of the {Bouc}-{Wen} {Class} {Model} of {Hysteresis}: {Origin}, {Evolution} and {Current} {State}},
	doi = {10.1007/s11831-025-10301-z},
	abstract = {This paper presents a comprehensive review of the Bouc-Wen model for hysteresis, analyzing its historical development, mathematical formulations, and capacity to represent complex behaviors such as asymmetry, degradation, pinching, and two newly identified response patterns: s-shaped and flag-shaped curves. The phenomenological nature of the model is examined, highlighting the absence of a rigorous mathematical and physical foundation in many of its variations. A systematic survey of the available literature is conducted to assess the physical compatibility of the Bouc-Wen formulation and its derivatives, identifying cases where modifications align with fundamental mechanical principles and others that remain purely empirical. Additionally, trends, challenges, and gaps in existing implementations are discussed, providing insights into the model’s evolution over the past two decades. Commentary on biaxial developments, stochastic dynamic implementation, and parameter identification of the model is presented. The review concludes by outlining potential research directions, emphasizing the need for enhanced physical consistency, improved parameter identification strategies, broader applications in structural and mechanical systems, and further generalization of the model through multi-representation capability.},
	journal = {Archives of Computational Methods in Engineering},
	author = {Heredia-Pérez, Michael and Alvarez, Diego A. and Bedoya-Ruiz, Daniel},
	year = {2025},
	keywords = {Biomechanical Analysis and Modeling, Computer Modelling, Computer-Aided Engineering (CAD, CAE) and Design, Engineering Mathematics, Mathematical Modeling and Industrial Mathematics, Model Theory},
	file = {Full Text PDF:/Users/mihailsmilehins/Zotero/storage/EUNSTWVU/Heredia-Pérez et al. - 2025 - A State-of-the-Art Review of the Bouc-Wen Class Mo.pdf:application/pdf},
}

@techreport{baldwin_math_2024,
	address = {Auburn, AL},
	title = {{MATH} 7500: {Topology} {I} ({Lecture} {Notes})},
	institution = {Auburn University},
	author = {Baldwin, Stewart L.},
	year = {2024},
}

@book{ziemer_modern_2017,
	address = {Cham, The Swiss Confederation},
	edition = {2},
	series = {Graduate {Texts} in {Mathematics}},
	title = {Modern {Real} {Analysis}},
	isbn = {978-3-319-64628-2},
	number = {278},
	publisher = {Springer International Publishing},
	author = {Ziemer, William P. and Torres, Monica},
	editor = {Axler, Sheldon and Ribet, Kenneth},
	year = {2017},
	file = {Ziemer - 2017 - Modern Real Analysis.pdf:/Users/mihailsmilehins/Zotero/storage/VPC8VC9A/Ziemer - 2017 - Modern Real Analysis.pdf:application/pdf},
}

@book{shurman_calculus_2016,
	address = {Cham, The Swiss Confederation},
	series = {Undergraduate {Texts} in {Mathematics}},
	title = {Calculus and {Analysis} in {Euclidean} {Space}},
	isbn = {978-3-319-49312-1},
	publisher = {Springer International Publishing},
	author = {Shurman, Jerry},
	editor = {Axler, S. and Ribet, K.},
	year = {2016},
	file = {Shurman - 2016 - Calculus and Analysis in Euclidean Space.pdf:/Users/mihailsmilehins/Zotero/storage/HZR237Q8/Shurman - 2016 - Calculus and Analysis in Euclidean Space.pdf:application/pdf},
}

@book{ikhouane_systems_2007,
	address = {Chichester, The United Kingdom of Great Britain and Northern Ireland},
	title = {Systems with {Hysteresis}: {Analysis}, {Identification} and {Control} using the {Bouc}—{Wen} {Model}},
	isbn = {978-0-470-03236-7},
	publisher = {John Wiley \& Sons},
	author = {Ikhouane, Fayçal and Rodellar, José},
	year = {2007},
	file = {Ikhouane and Rodellar - 2007 - Systems with Hysteresis Analysis, Identification .pdf:/Users/mihailsmilehins/Zotero/storage/ARM3VXW5/Ikhouane and Rodellar - 2007 - Systems with Hysteresis Analysis, Identification .pdf:application/pdf},
}

@book{yoshizawa_stability_1975,
	address = {New York, NY},
	series = {Applied {Mathematical} {Sciences}},
	title = {Stability {Theory} and the {Existence} of {Periodic} {Solutions} and {Almost} {Periodic} {Solutions}},
	isbn = {978-0-387-90112-1},
	number = {14},
	publisher = {Springer-Verlag New York},
	author = {Yoshizawa, T.},
	year = {1975},
	file = {Yoshizawa - 1975 - Stability Theory and the Existence of Periodic Sol.pdf:/Users/mihailsmilehins/Zotero/storage/CA3FJUEE/Yoshizawa - 1975 - Stability Theory and the Existence of Periodic Sol.pdf:application/pdf},
}

@book{morris_topology_2020,
	title = {Topology {Without} {Tears}},
	publisher = {Sidney A. Morris},
	author = {Morris, Sidney A.},
	year = {2020},
	file = {Morris - TOPOLOGY WITHOUT TEARS.pdf:/Users/mihailsmilehins/Zotero/storage/NVVKZXT9/Morris - TOPOLOGY WITHOUT TEARS.pdf:application/pdf},
}

@article{wen_method_1976,
	title = {Method for {Random} {Vibration} of {Hysteretic} {Systems}},
	volume = {102},
	doi = {10.1061/JMCEA3.0002106},
	number = {2},
	journal = {Journal of the Engineering Mechanics Division},
	author = {Wen, Yi-Kwei},
	year = {1976},
	pages = {249--263},
	file = {Wen - 1976 - Method for Random Vibration of Hysteretic Systems.pdf:/Users/mihailsmilehins/Zotero/storage/GEEVWEX2/Wen - 1976 - Method for Random Vibration of Hysteretic Systems.pdf:application/pdf},
}

@book{brokate_hysteresis_1996,
	address = {New York, NY},
	series = {Applied {Mathematical} {Sciences}},
	title = {Hysteresis and {Phase} {Transitions}},
	volume = {121},
	isbn = {978-1-4612-8478-9},
	doi = {10.1007/978-1-4612-4048-8},
	publisher = {Springer-Verlag New York},
	author = {Brokate, Martin and Sprekels, Jürgen},
	editor = {Marsden, J. E. and Sirovich, L. and John, F.},
	year = {1996},
	file = {PDF:/Users/mihailsmilehins/Zotero/storage/8SI57ZHK/Brokate and Sprekels - 1996 - Hysteresis and Phase Transitions.pdf:application/pdf},
}

@book{visintin_differential_1994,
	address = {Heidelberg},
	series = {Applied {Mathematical} {Sciences}},
	title = {Differential {Models} of {Hysteresis}},
	volume = {111},
	isbn = {978-3-642-08132-3},
	publisher = {Springer},
	author = {Visintin, Augusto},
	editor = {John, F. and Marsden, J. E. and Sirovich, L.},
	year = {1994},
	file = {Visintin - 1994 - Differential Models of Hysteresis.pdf:/Users/mihailsmilehins/Zotero/storage/SJIJQUUI/Visintin - 1994 - Differential Models of Hysteresis.pdf:application/pdf},
}

@book{krasnoselskii_systems_1989,
	address = {Heidelberg},
	title = {Systems with {Hysteresis}},
	isbn = {978-3-642-64782-6},
	publisher = {Springer-Verlag},
	author = {Krasnosel’skiǐ, Mark A. and Pokrovskiǐ, Aleksei V.},
	translator = {Niezgódka, Marek},
	year = {1989},
	file = {Krasnosel’skiǐ and Pokrovskiǐ - 1989 - Systems with Hysteresis.pdf:/Users/mihailsmilehins/Zotero/storage/TRDMLQHI/Krasnosel’skiǐ and Pokrovskiǐ - 1989 - Systems with Hysteresis.pdf:application/pdf},
}

@book{mayergoyz_mathematical_1991,
	address = {New York, NY},
	title = {Mathematical {Models} of {Hysteresis}},
	isbn = {978-1-4612-7767-5},
	publisher = {Springer-Verlag New York},
	author = {Mayergoyz, I. D.},
	year = {1991},
	file = {Mayergoyz - 1991 - Mathematical Models of Hysteresis.pdf:/Users/mihailsmilehins/Zotero/storage/5N86JUCN/Mayergoyz - 1991 - Mathematical Models of Hysteresis.pdf:application/pdf},
}

@book{mickens_truly_2010,
	address = {Hackensack, NJ},
	title = {Truly {Nonlinear} {Oscillations}: {Harmonic} {Balance}, {Parameter} {Expansions}, {Iteration}, {And} {Averaging} {Methods}},
	isbn = {978-981-4291-65-1},
	abstract = {This unique book provides a concise presentation of many of the fundamental strategies for calculating approximations to the oscillatory solutions of “truly nonlinear” (TNL) oscillator equations. The volume gives a general overview of the author's work on harmonic balance, iteration and combined linearization-averaging methods. However, full discussions are also presented on parameter expansion procedures and a first-order averaging technique for TNL oscillators. The calculational basis of each method is clarified by applying them to a set of standard TNL oscillator equations. This allows a direct comparison to be made among the various methods.The book is self-contained and therefore suitable for both classroom use and self-study by students and professionals who desire to learn, understand, and apply these technique to the field of nonlinear oscillations.},
	publisher = {World Scientific Publishing},
	author = {Mickens, Ronald E.},
	year = {2010},
	keywords = {Mathematics / Discrete Mathematics, Technology \& Engineering / Mechanical, Mathematics / Differential Equations / General},
}

@article{dahl_solid_1968,
	title = {A {Solid} {Friction} {Model}. {Technical} {Report} {TOR}-0158 (3107-18)-1},
	volume = {18},
	number = {1},
	journal = {The Aerospace Corporation},
	publisher = {El Segundo, CA},
	author = {Dahl, Phil R},
	year = {1968},
	pages = {1--24},
}

@misc{wolfram_research_inc_mathematica_2023,
	address = {Champaign, IL},
	title = {Mathematica, {Version} 13.3},
	url = {https://www.wolfram.com/mathematica},
	author = {{Wolfram Research Inc}},
	year = {2023},
	annote = {Champaign, IL, 2023},
}

@article{bhat_nontangency-based_2003,
	title = {Nontangency-{Based} {Lyapunov} {Tests} for {Convergence} and {Stability} in {Systems} {Having} a {Continuum} of {Equilibria}},
	volume = {42},
	doi = {10.1137/S0363012902407119},
	number = {5},
	journal = {SIAM Journal on Control and Optimization},
	author = {Bhat, Sanjay P. and Bernstein, Dennis S.},
	year = {2003},
	pages = {1745--1775},
	file = {Bhat and Bernstein - 2003 - Nontangency-Based Lyapunov Tests for Convergence a.pdf:/Users/mihailsmilehins/Zotero/storage/X463GD6I/s0363012902407119.pdf:application/pdf},
}

@book{khalil_nonlinear_2002,
	address = {Upper Saddle River, NJ},
	title = {Nonlinear {Systems}},
	isbn = {978-0-13-228024-2},
	publisher = {Prentice Hall},
	author = {Khalil, Hassan K.},
	year = {2002},
	keywords = {Mathematics / Mathematical Analysis, Technology \& Engineering / Electrical, Science / Chaotic Behavior in Systems, Technology / Engineering / Electrical},
}

@article{lasalle_extensions_1960,
	title = {Some {Extensions} of {Liapunov}'s {Second} {Method}},
	volume = {7},
	doi = {10.1109/TCT.1960.1086720},
	number = {4},
	journal = {IRE Transactions on Circuit Theory},
	author = {LaSalle, J. P.},
	year = {1960},
	pages = {520--527},
	file = {LaSalle - 1960 - Some Extensions of Liapunov's Second Method.pdf:/Users/mihailsmilehins/Zotero/storage/ARYBW8MQ/LaSalle - 1960 - Some Extensions of Liapunov's Second Method.pdf:application/pdf},
}

@article{kellett_compendium_2014,
	title = {A {Compendium} of {Comparison} {Function} {Results}},
	volume = {26},
	doi = {10.1007/s00498-014-0128-8},
	number = {3},
	journal = {Mathematics of Control, Signals, and Systems},
	author = {Kellett, Christopher M.},
	year = {2014},
	pages = {339--374},
	file = {Kellett - 2014 - A compendium of comparison function results.pdf:/Users/mihailsmilehins/Zotero/storage/C69BHHEA/Kellett - 2014 - A compendium of comparison function results.pdf:application/pdf},
}

@book{guler_foundations_2010,
	address = {New York, NY},
	series = {Graduate {Texts} in {Mathematics}},
	title = {Foundations of {Optimization}},
	isbn = {978-0-387-34431-7},
	number = {258},
	publisher = {Springer New York},
	author = {Güler, Osman},
	year = {2010},
	file = {Güler - 2010 - Foundations of Optimization.pdf:/Users/mihailsmilehins/Zotero/storage/LT2IB8MD/Güler - 2010 - Foundations of Optimization.pdf:application/pdf},
}

@book{schaeffer_ordinary_2016,
	address = {New York, NY},
	series = {Texts in {Applied} {Mathematics}},
	title = {Ordinary {Differential} {Equations}: {Basics} and {Beyond}},
	isbn = {978-1-4939-6387-4},
	number = {65},
	publisher = {Springer Science+Business Media},
	author = {Schaeffer, David G. and Cain, John W.},
	editor = {Bell, J. and Keller, J. and Kohn, R. and Newton, P. and Peskin, C. and Pego, R. and Ryzhik, L. and Singer, A. and Stevens, A. and Stuart, A. and Witelski, T. and Wright, S.},
	year = {2016},
	file = {Schaeffer and Cain - 2016 - Ordinary Differential Equations Basics and Beyond.pdf:/Users/mihailsmilehins/Zotero/storage/SZDKAFGH/Schaeffer and Cain - 2016 - Ordinary Differential Equations Basics and Beyond.pdf:application/pdf},
}

@book{sastry_nonlinear_1999,
	address = {New York, NY},
	series = {Interdisciplinary {Applied} {Mathematics}},
	title = {Nonlinear {Systems}: {Analysis}, {Stability}, and {Control}},
	isbn = {978-1-4419-3132-0},
	number = {10},
	publisher = {Springer Science+Business Media},
	author = {Sastry, Shankar},
	editor = {Marsden, J. E. and Sirovich, L. and Wiggins, S.},
	year = {1999},
	file = {Sastry - 1999 - Nonlinear Systems.pdf:/Users/mihailsmilehins/Zotero/storage/JL2SU6PY/Sastry - 1999 - Nonlinear Systems.pdf:application/pdf},
}

@book{takeuti_introduction_1982,
	address = {New York, NY},
	edition = {2},
	series = {Graduate {Texts} in {Mathematics}},
	title = {Introduction to {Axiomatic} {Set} {Theory}},
	isbn = {978-1-4613-8170-9},
	number = {1},
	publisher = {Springer-Verlag New York},
	author = {Takeuti, Gaisi and Zaring, Wilson M.},
	editor = {Halmos, P. R. and Gehring, F. W. and Moore, C. C.},
	year = {1982},
	file = {Takeuti and Zaring - 1982 - Introduction to Axiomatic Set Theory.pdf:/Users/mihailsmilehins/Zotero/storage/VXLWTCIB/Takeuti and Zaring - 1982 - Introduction to Axiomatic Set Theory.pdf:application/pdf},
}

@book{yoshizawa_stability_1966,
	address = {Tokyo, Japan},
	series = {Publications of the {Mathematical} {Society} of {Japan}},
	title = {Stability {Theory} by {Liapunov}'s {Second} {Method}},
	number = {9},
	publisher = {The Mathematical Society of Japan},
	author = {Yoshizawa, T.},
	year = {1966},
}

@book{kelley_general_1955,
	address = {Mineola, NY},
	title = {General {Topology}},
	publisher = {Van Nostrand Reinhold Company. Reprint, Dover Publications Inc., 2017.},
	author = {Kelley, John L.},
	year = {1955},
	keywords = {Mathematics / Topology},
	file = {9780486820668.epub:/Users/mihailsmilehins/Zotero/storage/EV9BRL8M/9780486820668.epub:application/epub+zip},
}

@book{sontag_mathematical_1998,
	address = {New York, NY},
	edition = {2},
	series = {Texts in {Applied} {Mathematics}},
	title = {Mathematical {Control} {Theory}: {Deterministic} {Finite} {Dimensional} {Systems}},
	isbn = {978-0-387-98489-6},
	publisher = {Springer Science+Business Media},
	author = {Sontag, Eduardo D.},
	editor = {Marsden, J. E. and Sirovich, L. and Golubitsky, M. and Jäger, W.},
	year = {1998},
	file = {Sontag - 1998 - Mathematical Control Theory Deterministic Finite .pdf:/Users/mihailsmilehins/Zotero/storage/K25S4NV9/Sontag - 1998 - Mathematical Control Theory Deterministic Finite .pdf:application/pdf},
}

@book{bloch_real_2010,
	address = {New York, NY},
	title = {The {Real} {Numbers} and {Real} {Analysis}},
	isbn = {978-0-387-72176-7},
	publisher = {Springer Science+Business Media},
	author = {Bloch, Ethan D.},
	year = {2010},
	file = {Bloch - 2011 - The Real Numbers and Real Analysis.pdf:/Users/mihailsmilehins/Zotero/storage/7QD2MF4W/Bloch - 2011 - The Real Numbers and Real Analysis.pdf:application/pdf},
}

@book{haddad_nonlinear_2011,
	address = {Princeton, NJ},
	title = {Nonlinear {Dynamical} {Systems} and {Control}: {A} {Lyapunov}-{Based} {Approach}},
	isbn = {978-1-4008-4104-2},
	publisher = {Princeton University Press},
	author = {Haddad, Wassim M. and Chellaboina, VijaySekhar},
	year = {2011},
	keywords = {Mathematics / General, Mathematics / Applied, Mathematics / Probability \& Statistics / Stochastic Processes, Mathematics / Number Systems},
}

@inproceedings{bouc_forced_1967,
	address = {Prague, Czechoslovakia},
	title = {Forced {Vibration} of {Mechanical} {Systems} with {Hysteresis}},
	booktitle = {Proceedings of the {Fourth} {Conference} on {Nonlinear} {Oscillations}},
	publisher = {Academia Publishing House of the Czechoslovak Academy of Sciences},
	author = {Bouc, R.},
	editor = {Gonda, Ján and Jelínek, František},
	year = {1967},
	pages = {315},
	file = {Bouc - 1967 - Forced Vibration of Mechanical Systems with Hyster.pdf:/Users/mihailsmilehins/Zotero/storage/F4LX4SPM/Bouc - 1967 - Forced Vibration of Mechanical Systems with Hyster.pdf:application/pdf},
}

@book{chicone_ordinary_1999,
	address = {New York, NY},
	series = {Texts in {Applied} {Mathematics}},
	title = {Ordinary {Differential} {Equations} {With} {Applications}},
	isbn = {978-0-387-98535-0},
	number = {34},
	publisher = {Springer-Verlag New York},
	author = {Chicone, Carmen Charles},
	editor = {Marsden, J. E. and Sirovich, L. and Golubitsky, M. and Jäger, W.},
	year = {1999},
	keywords = {Differential equations},
	file = {Chicone - 1999 - Ordinary differential equations with applications.pdf:/Users/mihailsmilehins/Zotero/storage/EYG2BWWU/Chicone - 1999 - Ordinary differential equations with applications.pdf:application/pdf},
}

@article{ikhouane_dynamic_2007,
	title = {Dynamic {Properties} of the {Hysteretic} {Bouc}-{Wen} {Model}},
	volume = {56},
	doi = {10.1016/j.sysconle.2006.09.001},
	abstract = {The Bouc-Wen model, widely used in structural and mechanical engineering, gives an analytical description of a smooth hysteretic behavior. In practice, the Bouc-Wen model is mostly used within the following black-box approach: given a set of experimental input–output data, how to adjust the Bouc-Wen model parameters so that the output of the model matches the experimental data. It may happen that a Bouc-Wen model presents a good matching with the experimental real data for a speciﬁc input, but does not necessarily keep signiﬁcant physical properties which are inherent to the real data, independently of the exciting input. This paper presents a characterization of the different classes of Bouc-Wen models in terms of their bounded input-bounded output stability property, and their capability for reproducing physical properties inherent to the true system they are to model.},
	number = {3},
	journal = {Systems \& Control Letters},
	author = {Ikhouane, Fayçal and Mañosa, Víctor and Rodellar, José},
	year = {2007},
	pages = {197--205},
	file = {PDF:/Users/mihailsmilehins/Zotero/storage/PPGVLIFQ/Ikhouane et al. - 2007 - Dynamic properties of the hysteretic Bouc-Wen model.pdf:application/pdf},
}

@article{ismail_hysteresis_2009,
	title = {The {Hysteresis} {Bouc}-{Wen} {Model}, a {Survey}},
	volume = {16},
	doi = {10.1007/s11831-009-9031-8},
	abstract = {Structural systems often show nonlinear behavior under severe excitations generated by natural hazards. In that condition, the restoring force becomes highly nonlinear showing signiﬁcant hysteresis. The hereditary nature of this nonlinear restoring force indicates that the force cannot be described as a function of the instantaneous displacement and velocity. Accordingly, many hysteretic restoring force models were developed to include the time dependent nature using a set of differential equations. This survey contains a review of the past, recent developments and implementations of the Bouc-Wen model which is used extensively in modeling the hysteresis phenomenon in the dynamically excited nonlinear structures.},
	number = {2},
	journal = {Archives of Computational Methods in Engineering},
	author = {Ismail, Mohammed and Ikhouane, Fayçal and Rodellar, José},
	year = {2009},
	pages = {161--188},
	file = {PDF:/Users/mihailsmilehins/Zotero/storage/JAY63WWW/Ismail et al. - 2009 - The Hysteresis Bouc-Wen Model, a Survey.pdf:application/pdf},
}

@article{ikhouane_erratum_2018,
	title = {Erratum to: {A} {Survey} of the {Hysteretic} {Duhem} {Model}},
	volume = {25},
	doi = {10.1007/s11831-017-9235-2},
	number = {4},
	journal = {Archives of Computational Methods in Engineering},
	author = {Ikhouane, Fayçal},
	year = {2018},
	pages = {1129--1129},
	file = {PDF:/Users/mihailsmilehins/Zotero/storage/L44BENMQ/Ikhouane - 2018 - Erratum to A Survey of the Hysteretic Duhem Model.pdf:application/pdf},
}

@article{ikhouane_survey_2018,
	title = {A {Survey} of the {Hysteretic} {Duhem} {Model}},
	volume = {25},
	doi = {10.1007/s11831-017-9218-3},
	abstract = {The Duhem model is a simulacrum of a complex and hazy reality: hysteresis. Introduced by Pierre Duhem to provide a mathematical representation of thermodynamical irreversibility, it is used to describe hysteresis in other areas of science and engineering. Our aim is to survey the relationship between the Duhem model as a mathematical representation, and hysteresis as the object of that representation.},
	number = {4},
	journal = {Archives of Computational Methods in Engineering},
	author = {Ikhouane, Fayçal},
	year = {2018},
	pages = {965--1002},
	file = {PDF:/Users/mihailsmilehins/Zotero/storage/3PA4LKY2/Ikhouane - 2018 - A Survey of the Hysteretic Duhem Model.pdf:application/pdf},
}

@article{naser_consistency_2013,
	title = {Consistency of the {Duhem} {Model} with {Hysteresis}},
	volume = {2013},
	doi = {10.1155/2013/586130},
	abstract = {The Duhem model, widely used in structural, electrical, and mechanical engineering, gives an analytical description of a smooth hysteretic behavior. In practice, the Duhem model is mostly used within the following black-box approach: given a set of experimental input-output data, how to tune the model so that its output matches the experimental data. It may happen that a Duhem model presents a good match with the experimental real data for a specific input but does not necessarily keep significant physical properties which are inherent to the real data, independent of the exciting input. This paper presents a characterization of different classes of Duhem models in terms of their consistency with the hysteresis behavior.},
	journal = {Mathematical Problems in Engineering},
	author = {Naser, Mohammad Fuad Mohammad and Ikhouane, Fayçal},
	year = {2013},
	pages = {1--16},
	file = {PDF:/Users/mihailsmilehins/Zotero/storage/8LWLZURQ/Mohammad Naser and Ikhouane - 2013 - Consistency of the Duhem Model with Hysteresis.pdf:application/pdf},
}

@article{naser_characterization_2013,
	title = {Characterization of the {Hysteresis} {Duhem} {Model}},
	volume = {46},
	doi = {10.3182/20130703-3-FR-4039.00008},
	abstract = {The Duhem model, widely used in structural, electrical and mechanical engineering, gives an analytical description of a smooth hysteretic behavior. In practice, the Duhem model is mostly used within the following black-box approach: given a set of experimental input-output data, how to tune the model so that its output matches the experimental data. It may happen that a Duhem model presents a good match with the experimental real data for a speciﬁc input, but does not necessarily keep signiﬁcant physical properties which are inherent to the real data, independently of the exciting input. This paper presents a characterization of diﬀerent classes of Duhem models in terms of their consistency with the hysteresis behavior.},
	number = {12},
	journal = {IFAC Proceedings Volumes},
	author = {Naser, Mohammad Fuad Mohammad and Ikhouane, Fayçal},
	year = {2013},
	pages = {29--34},
	file = {PDF:/Users/mihailsmilehins/Zotero/storage/GBPWJMSW/Fuad Mohammad Naser and Ikhouane - 2013 - Characterization of the Hysteresis Duhem Model.pdf:application/pdf},
}

@article{ouyang_absolute_2014,
	title = {Absolute {Stability} {Analysis} of {Linear} {Systems} {With} {Duhem} {Hysteresis} {Operator}},
	volume = {50},
	doi = {10.1016/j.automatica.2014.04.028},
	abstract = {In this paper, we investigate the stability of positive and negative feedback interconnections of a linear system and a Duhem hysteresis operator. We provide sufficient conditions on the linear plant and on the Duhem operator which are based on the counterclockwise (CCW) or clockwise (CW) input–output property of the plant and hysteresis operator. We show the application of our main result in the design of a linear controller to stabilize a simple mechanical system driven by a hysteretic actuator, such as, piezo-actuator or smart material-based actuator.},
	number = {7},
	journal = {Automatica},
	author = {Ouyang, Ruiyue and Jayawardhana, Bayu},
	year = {2014},
	keywords = {Hysteresis, Nonlinear systems, Dissipativity, Absolute stability},
	pages = {1860--1866},
	file = {ScienceDirect Full Text PDF:/Users/mihailsmilehins/Zotero/storage/AU2KFF22/Ouyang and Jayawardhana - 2014 - Absolute stability analysis of linear systems with Duhem hysteresis operator.pdf:application/pdf;ScienceDirect Snapshot:/Users/mihailsmilehins/Zotero/storage/TW286VAJ/S0005109814001721.html:text/html},
}

@article{jayawardhana_stability_2012,
	title = {Stability of {Systems} {With} the {Duhem} {Hysteresis} {Operator}: {The} {Dissipativity} {Approach}},
	volume = {48},
	doi = {10.1016/j.automatica.2012.06.069},
	abstract = {In this paper, we discuss the dissipativity property of the counterclockwise Duhem operator. Sufficient conditions on the functions which define the Duhem operator are given such that the Duhem operator has counterclockwise input–output dynamics. In particular, an explicit construction of the storage functions satisfying the counterclockwise dissipation inequality is given. The constructed storage function is also related to the underlying anhysteresis function which is commonly used to describe hysteresis in magnetic materials. The results can thus facilitate analysis of systems with the counterclockwise Duhem operator via the dissipativity approach.},
	number = {10},
	journal = {Automatica},
	author = {Jayawardhana, Bayu and Ouyang, Ruiyue and Andrieu, Vincent},
	year = {2012},
	keywords = {Hysteresis, Nonlinear systems, Dissipative systems, Passive systems},
	pages = {2657--2662},
	file = {ScienceDirect Full Text PDF:/Users/mihailsmilehins/Zotero/storage/NRHUH8Y5/Jayawardhana et al. - 2012 - Stability of systems with the Duhem hysteresis operator The dissipativity approach.pdf:application/pdf;ScienceDirect Snapshot:/Users/mihailsmilehins/Zotero/storage/8MUUKLGW/S0005109812003354.html:text/html},
}

@inproceedings{jayawardhana_sufficient_2009,
	title = {Sufficient {Conditions} for {Dissipativity} on {Duhem} {Hysteresis} {Model}},
	doi = {10.1109/CDC.2009.5400790},
	abstract = {This paper presents sufficient conditions for dissipativity on the Duhem hysteresis model. The result of this paper describes the dissipativity property of several standard hysteresis models, including the backlash and Prandtl operator. It also allows the curve in the hysteresis diagram (the phase plot between the input and the output) to have negative gradient.},
	booktitle = {Proceedings of the 48h {IEEE} {Conference} on {Decision} and {Control} ({CDC}) held jointly with 2009 28th {Chinese} {Control} {Conference}},
	author = {Jayawardhana, Bayu and Andrieu, Vincent},
	month = dec,
	year = {2009},
	keywords = {Nonlinear systems, Stability analysis, Sufficient conditions, Mathematical model, Energy loss, Magnetic hysteresis, Switches, Engines, Gears, Inductors},
	pages = {4378--4383},
	file = {Snapshot:/Users/mihailsmilehins/Zotero/storage/9JY34YHN/5400790.html:text/html;Submitted Version:/Users/mihailsmilehins/Zotero/storage/SZ7HIXK4/Jayawardhana and Andrieu - 2009 - Sufficient conditions for dissipativity on Duhem hysteresis model.pdf:application/pdf},
}

@inproceedings{jayawardhana_dissipativity_2011,
	title = {Dissipativity of {General} {Duhem} {Hysteresis} {Models}},
	doi = {10.1109/CDC.2011.6160799},
	abstract = {In this paper, we discuss the dissipativity property of Duhem hysteresis models. Under some sufficient conditions on the functions which defines the Duhem model, an explicit construction of the storage functions is given which takes into account the data on the anhysteresis function. We present an example on the semi-linear Duhem model.},
	booktitle = {2011 50th {IEEE} {Conference} on {Decision} and {Control} and {European} {Control} {Conference}},
	author = {Jayawardhana, Bayu and Ouyang, Ruiyue and Andrieu, Vincent},
	month = dec,
	year = {2011},
	keywords = {Hysteresis, Friction, Stability analysis, Magnetic hysteresis, Magnetic materials, Inductors, Magnetic domains},
	pages = {3234--3239},
	file = {Full Text:/Users/mihailsmilehins/Zotero/storage/UHCEQA3Z/Jayawardhana et al. - 2011 - Dissipativity of general Duhem hysteresis models.pdf:application/pdf;Snapshot:/Users/mihailsmilehins/Zotero/storage/IH7CF32U/6160799.html:text/html},
}

@inproceedings{ouyang_stability_2012,
	title = {Stability {Analysis} and {Controller} {Design} for a {Linear} {System} {With} {Duhem} {Hysteresis} {Nonlinearity}},
	doi = {10.1109/ACC.2012.6314778},
	abstract = {In this paper, we investigate the stability of feedback interconnections between a linear system and a Duhem hysteresis operator, where the linear system satisfies either counter-clockwise (CCW) or clockwise (CW) input-output dynamics [1], [13]. More precisely, depending on the input-output dynamics of each system, we present sufficient conditions on the linear system that guarantee the stability of the closed-loop systems. Based on these results we introduce a control design methodology for stabilizing a linear plant with a counterclockwise Duhem hysteresis operator.},
	booktitle = {2012 {American} {Control} {Conference} ({ACC})},
	author = {Ouyang, Ruiyue and Jayawardhana, Bayu},
	month = jun,
	year = {2012},
	keywords = {Hysteresis, Stability analysis, Control design, Linear systems, Clocks, Closed loop systems, Negative feedback},
	pages = {1676--1681},
	file = {Full Text:/Users/mihailsmilehins/Zotero/storage/JD33KA4P/Ouyang and Jayawardhana - 2012 - Stability analysis and controller design for a linear system with Duhem hysteresis nonlinearity.pdf:application/pdf;Snapshot:/Users/mihailsmilehins/Zotero/storage/578IF98P/6314778.html:text/html},
}

@article{oh_modeling_2003,
	series = {13th {IFAC} {Symposium} on {System} {Identification} ({SYSID} 2003), {Rotterdam}, {The} {Netherlands}, 27-29 {August}, 2003},
	title = {Modeling and {Identification} of {Rate}-{Independent} {Hysteresis} {Using} a {Semilinear} {Duhem} {Model}},
	volume = {36},
	doi = {10.1016/S1474-6670(17)34978-9},
	abstract = {In this paper we consider a semilinenr Duhem model. The input-output map of the model is rate-independent, thus yielding persistent phase shift (that is. hysteresis) at arbitrarily low frequency. For the semilinear Duhem model we reparameterize the response in terms of the control input, and we provide sufficient conditions for convergence to a hysteresis map. A constrained least squares method is developed to identify the hysteresis map using the semilinear Duhem model.},
	number = {16},
	journal = {IFAC Proceedings Volumes},
	author = {Oh, JinHyoung and Bernstein, Dennis S.},
	month = sep,
	year = {2003},
	keywords = {Hysteresis, Identification, Modelling, Non linearity},
	pages = {1537--1542},
	file = {ScienceDirect Full Text PDF:/Users/mihailsmilehins/Zotero/storage/VNRLWN6N/Oh and Bernstein - 2003 - Modeling and Identification of Rate-Independent Hysteresis Using a Semilinear Duhem Model1.pdf:application/pdf;ScienceDirect Snapshot:/Users/mihailsmilehins/Zotero/storage/PXDK8NJM/S1474667017349789.html:text/html},
}

@inproceedings{oh_analysis_2003,
	title = {Analysis of the {Semilinear} {Duhem} {Model} for {Rate}-{Independent} {Hysteresis}},
	volume = {6},
	doi = {10.1109/CDC.2003.1272285},
	abstract = {In this paper we consider a generalized Duhem model for hysteresis and provide a sufficient condition for the existence of hysteresis in the input-output map. Hysteresis is defined to be persistent phase shift at arbitrarily low frequency. The rate dependence of the hysteresis is also discussed, and we develop a rate-independent semilinear Duhem model with provable convergence properties. We also discuss the reversal behavior and orientation of the hysteretic map.},
	booktitle = {42nd {IEEE} {International} {Conference} on {Decision} and {Control}},
	author = {Oh, JinHyoung and Bernstein, D.S.},
	year = {2003},
	keywords = {Computational modeling, Magnetic hysteresis, Aerodynamics, Springs, Frequency, Kinematics, Magnetic flux, Magnetic materials, Soft magnetic materials, Terminology},
	pages = {6236--6241},
	file = {Full Text PDF:/Users/mihailsmilehins/Zotero/storage/YGYZN8U4/Oh and Bernstein - 2003 - Analysis of the semilinear Duhem model for rate-independent hysteresis.pdf:application/pdf},
}

@article{oh_semilinear_2005,
	title = {Semilinear {Duhem} {Model} for {Rate}-{Independent} and {Rate}-{Dependent} {Hysteresis}},
	volume = {50},
	issn = {1558-2523},
	doi = {10.1109/TAC.2005.847035},
	abstract = {The classical Duhem model provides a finite-dimensional differential model of hysteresis. In this paper, we consider rate-independent and rate-dependent semilinear Duhem models with provable properties. The vector field is given by the product of a function of the input rate and linear dynamics. If the input rate function is positively homogeneous, then the resulting input-output map of the model is rate independent, yielding persistent nontrivial input-output closed curve (that is, hysteresis) at arbitrarily low input frequency. If the input rate function is not positively homogeneous, the input-output map is rate dependent and can be approximated by a rate-independent model for low frequency inputs. Sufficient conditions for convergence to a limiting input-output map are developed for rate-independent and rate-dependent models. Finally, the reversal behavior and orientation of the rate-independent model are discussed.},
	number = {5},
	journal = {IEEE Transactions on Automatic Control},
	author = {Oh, JinHyoung and Bernstein, D.S.},
	year = {2005},
	keywords = {Linear systems, Magnetic hysteresis, Aerodynamics, hysteresis, Nonlinear dynamical systems, Springs, Frequency, Magnetic materials, Terminology, Duhem, Magnetic fields, Magnetic flux density, rate dependence},
	pages = {631--645},
	file = {Full Text PDF:/Users/mihailsmilehins/Zotero/storage/TYZSBYHS/Oh and Bernstein - 2005 - Semilinear Duhem model for rate-independent and rate-dependent hysteresis.pdf:application/pdf},
}

@inproceedings{padthe_counterclockwise_2005,
	title = {Counterclockwise {Dynamics} of a {Rate}-{Independent} {Semilinear} {Duhem} {Model}},
	doi = {10.1109/CDC.2005.1583456},
	abstract = {Counterclockwise hysteresis maps are known to be dissipative in the energy sense as well as in the system-theoretic sense. In a recent paper, Angeli studied feedback interconnections of counterclockwise systems, where counterclockwise refers to the net signed area of the input-output map, which need not be a simple closed curve. In the present paper we apply this notion to the study of hysteretic models. In particular, we give conditions under which the semilinear Duhem model is counterclockwise.},
	booktitle = {Proceedings of the 44th {IEEE} {Conference} on {Decision} and {Control}},
	author = {Padthe, A.K. and Oh, JinHyoung and Bernstein, D.S.},
	month = dec,
	year = {2005},
	keywords = {Damping, Magnetic hysteresis, Aerodynamics, Electromagnetics, Magnetic materials, Thermal stresses, Magnetic analysis, Optical feedback, Optical materials, Optical sensors},
	pages = {8000--8005},
	file = {Full Text PDF:/Users/mihailsmilehins/Zotero/storage/I3PIB9V6/Padthe et al. - 2005 - Counterclockwise Dynamics of a Rate-Independent Semilinear Duhem Model.pdf:application/pdf},
}

@inproceedings{lacy_hysteretic_2000,
	title = {Hysteretic {Systems} and {Step}-{Convergent} {Semistability}},
	volume = {6},
	doi = {10.1109/ACC.2000.877000},
	abstract = {Hysteresis is usually characterized as a memory-dependent relationship between inputs and outputs. While various operator models have been proposed, it is often convenient for engineering applications to approximate hysteretic behavior by means of finite-dimensional differential models. In the present paper we show that step-convergent semistable systems (that is, semistable systems with convergent step response) give rise to multiple-valued maps under quasi-static operation. By providing a connection between semistability and hysteresis, our goal is to provide a class of differential models for representing hysteretic behavior.},
	booktitle = {Proceedings of the 2000 {American} {Control} {Conference}},
	author = {Lacy, S.L. and Bernstein, D.S. and Bhat, S.P.},
	year = {2000},
	keywords = {Adaptive control, Control nonlinearities, Feedback loop, Limit-cycles, Lyapunov method, Magnetic hysteresis, Magnetic materials, Mathematical model, Thermal stresses, Thermostats},
	pages = {4139--4143},
	file = {Full Text PDF:/Users/mihailsmilehins/Zotero/storage/MFBRDVQK/Lacy et al. - 2000 - Hysteretic Systems and Step-Convergent Semistability.pdf:application/pdf},
}

@article{chua_generalized_1972,
	title = {A {Generalized} {Hysteresis} {Model}},
	volume = {19},
	doi = {10.1109/TCT.1972.1083416},
	abstract = {A new mathematical model of hysteresis, based upon the tng can be reduced to insignificant amounts beyond an upper threshold previous work of Chua and Stromsmoe, is introduced. The model not only or can be eliminated all together. Other behavior with frequency variaprovides for hysteresis loops at frequencies down to dc but allows for tion can be provided once the model is understood.},
	number = {1},
	journal = {IEEE Transactions on Circuit Theory},
	author = {Chua, L. and Bass, S.},
	year = {1972},
	pages = {36--48},
	file = {PDF:/Users/mihailsmilehins/Zotero/storage/95QEYW58/Chua and Bass - 1972 - A Generalized Hysteresis Model.pdf:application/pdf},
}

@article{danskin_theory_1966,
	title = {The {Theory} of {Max}-{Min}, with {Applications}},
	volume = {14},
	doi = {10.1137/0114053},
	number = {4},
	journal = {SIAM Journal on Applied Mathematics},
	author = {Danskin, John M.},
	year = {1966},
	file = {PDF:/Users/mihailsmilehins/Zotero/storage/GAC3LQT3/Danskin - 2006 - The Theory of Max-Min, with Applications.pdf:application/pdf;Snapshot:/Users/mihailsmilehins/Zotero/storage/AZDY7EVV/0114053.html:text/html},
}

@article{chua_mathematical_1971,
	title = {Mathematical {Model} for {Dynamic} {Hysteresis} {Loops}},
	volume = {9},
	doi = {10.1016/0020-7225(71)90046-2},
	abstract = {A new mathematical model is used to represent systems characterized by dynamic hysteresis loops. The model is completely specified by three nonlinear functions: a restoring function f , a dissipation function g, and a weighting function h. Simple procedures are given for constructing these functions so the resulting model will simulate a given loop exactly. The model is shown to exhibit many important hysteretic properties, such as the presence of minor loops and an increase in area of the loops with increasing frequency. Both are observed in practice. Two examples are given to illustrate the utility and flexibility of the model. The first is from the field of biophysics and the second is a nonlinear iron-core inductor.
Résumé
Un nouveau modèle mathématique est utilisé pour représenter des systèmes caractérisés par des boucles d'hystérésis dynamiques. Le modèle est complètement défini par trois fonctions non-linéaires: une fonction de restauration f, une fonction de dissipation g, et une fonction de poids h. Des processus simples sont indiqués pour la construction de ces fonctions de telle sorte que le modèle résultant simule exactement une boucle donnée. On montre que le modèle présente de nombreuses propriétés importantes d'hystérésis, telles que la présence de boucles secondaires et un accroissement de l'aire des boucles avec un accroissement de la fréquence. Ces deux propriétés sont observées expérimentalement. Deux exemples sont donnés pour illustrer l'utilité et la souplesse de ce modèle. Le premier est pris dans le champ de la bio-physique et le second est une inductance à noyau de fer nonlinéaire.
Zusammenfassung
Ein neues mathematisches Modell wird zur Darstellung von Systemen verwendet, die durch dynamische Hystereseschleifen gekennzeichnet sind. Das Modell ist vollständig durch drei nichtlineare Funktionen spezifiziert: eine Rückstellfunktion f, eine Zerstreuungsfunktion g und eine Belastungsfunktion h. Einfache Verfahrensweisen für den Bau dieser Funktionen werden gegeben, so dass das resultierende Modell eine gegebene Schleife genau simuliert. Es wird gezeigt, dass das Modell viele wichtige hysteretische Eigenschaften aufweist, wie die Anwesenheit kleinerer Schleifen und eine Zunahme in der Schleifenfläche mit zunehmender Frequenz. Beide werden in der Praxis beobachtet. Zwei Beispiele werden gegeben um die Nützlichkeit und Geschmeidigkeit des Modells zu illustrieren. Das erste kommt vom Feld der Biophysik und das zweite ist eine nicht-lineare Induktionsspule mit einem eisernen Kern.
Sumàrio
Si usa un nuovo modello matematico per rappresentare sistemi caratterizzati da circuiti d'isteresi dinamica. Il modello è completamente specificato da tre funzioni non lineari: una funzione ristoratrice f, una di dissipazione g e una di peso h. Si danno semplici procedure per la costruzione di queste funzioni in modo che il modello risultante simuli esattamente un dato circuito. Si dimostra che il modello presente moite proprietà importanti d'isteresi, quali la presenza di circuiti secondari e un aumento nella superficie dei circuiti proporzionale all'aumento della frequenza. Entrambi vengono osservati in pratica. Si danno inoltre due esempi per illustrare l'utilità e la flessibilità del modello. Il primo è tratto dal campo della biofisica e il secondo è un induttore non lineare ad anima di ferro.
Реферат
ПpeдлoЖeнa нoвaя мaтeмaтичecкaя мoдeль, пpeдcтaвляющaя cиcтeмы, чapaктepиЖoвaнныe динaмичecкими нeтлями гиcтepиЖa. Moдeль Жaдaнa пoлнocтью тpeмя нeлинeйными фyнкциями: вocтaнoвляющaя фyнкция f, диccинaциoннaя фyнкция g, вecoвaя фyнкция h. Дaны пpocтыe мeтoды для пocтpoeния этич фyнкций пpи тoчнoм мoдeлиpoвaнии дaннoй нeтли. ПoкaЖaнo, чтo мoдeль oблaдaeт мнoгими вaЖными гиcтepeЖными cвoйcтвaми : нaличиe втopocтeпeнныч пeтлeй и yвeличeниe пoвepчнocти пeтли c yвeличeниeм чacтoты, кaк в пpaктикe. Двa пpимepa нaглядиo иллycтpиpyют пoлeЖнocтю и гибкocтю мoдeля. oдин вЖят иЖ биoфиЖики, дpyгoй явлaeтcя нeлинeйным индyктopoм c ЖeлeЖным cepдeчникoм.},
	number = {5},
	journal = {International Journal of Engineering Science},
	author = {Chua, Leon O. and Stromsmoe, Keith A.},
	year = {1971},
	pages = {435--450},
	file = {ScienceDirect Full Text PDF:/Users/mihailsmilehins/Zotero/storage/DMSVXY75/Chua and Stromsmoe - 1971 - Mathematical model for dynamic hysteresis loops.pdf:application/pdf;ScienceDirect Snapshot:/Users/mihailsmilehins/Zotero/storage/RQRC3X46/0020722571900462.html:text/html},
}

@article{pokrovskii_theory_1973,
	title = {On the {Theory} of {Hysteresis} {Nonlinearities}},
	volume = {210},
	number = {6},
	journal = {Proceedings of the USSR Academy of Sciences},
	author = {Pokrovskii, Aleksei Vadimovich},
	year = {1973},
	pages = {1284--1287},
	file = {PDF:/Users/mihailsmilehins/Zotero/storage/XXSMT38E/Pokrovskii - 1973 - On the theory of hysteresis nonlinearities.pdf:application/pdf},
}

@article{coleman_constitutive_1986,
	title = {A {Constitutive} {Relation} for {Rate}-{Independent} {Hysteresis} in {Ferromagnetically} {Soft} {Materials}},
	volume = {24},
	doi = {10.1016/0020-7225(86)90023-6},
	abstract = {Constitutive relations of the general form, dB/dt = α¦dH/dt ¦[ƒ(H) − B] + (dH/dt)g(H), with ƒ and g piecewise linear functions, appear to give a good description of rate-independent hysteresis in ferromagnetically soft materials of the isoperm type. When the time-course of the magnetic field H and an initial value B0 of the flux B are given, such relations become differential equations for B(t) that can be integrated explicitly. Expressions are derived here for various curves of interest to experimenters; among these are the initial magnetization curve, hysteresis loops with specified extremal values of H, and the boundary of the accessible region of the (H, B)-plane. The hysteresis loss associated with a general loop is found as a function of the extremal values of H. The stability of hysteresis loops and the limiting behavior of B when H oscillates in specified ways are discussed. Also treated are the restrictions placed on ƒ and g by the requirement that dB/dt always agree in sign with dH/dt.},
	number = {6},
	journal = {International Journal of Engineering Science},
	author = {Coleman, Bernard D. and Hodgdon, Marion L.},
	year = {1986},
	pages = {897--919},
	file = {ScienceDirect Full Text PDF:/Users/mihailsmilehins/Zotero/storage/ZZV376PJ/Coleman and Hodgdon - 1986 - A constitutive relation for rate-independent hysteresis in ferromagnetically soft materials.pdf:application/pdf;ScienceDirect Snapshot:/Users/mihailsmilehins/Zotero/storage/HWFWHSD5/0020722586900236.html:text/html},
}

@article{coleman_class_1987,
	title = {On a {Class} of {Constitutive} {Relations} for {Ferromagnetic} {Hysteresis}},
	volume = {99},
	doi = {10.1007/BF00282052},
	number = {4},
	journal = {Archive for Rational Mechanics and Analysis},
	author = {Coleman, Bernard D. and Hodgdon, Marion L.},
	year = {1987},
	keywords = {Electromagnetism, Neural Network, Complex System, Nonlinear Dynamics, Constitutive Relation},
	pages = {375--396},
	file = {Full Text PDF:/Users/mihailsmilehins/Zotero/storage/5EHAQKGL/Coleman and Hodgdon - 1987 - On a class of constitutive relations for ferromagnetic hysteresis.pdf:application/pdf},
}

@article{everett_general_1954,
	title = {A {General} {Approach} to {Hysteresis}. {Part} 2: {Development} of the {Domain} {Theory}},
	volume = {50},
	doi = {10.1039/TF9545000187},
	abstract = {No abstract available},
	number = {0},
	journal = {Transactions of the Faraday Society},
	author = {Everett, D. H. and Smith, F. W.},
	year = {1954},
	pages = {187--197},
	file = {Full Text PDF:/Users/mihailsmilehins/Zotero/storage/PJ7HKTVD/Everett and Smith - 1954 - A general approach to hysteresis. Part 2 Development of the domain theory.pdf:application/pdf},
}

@article{anderssen_global_1998,
	title = {The {Global} {Behavior} of {Elastoplastic} and {Viscoelastic} {Materials} with {Hysteresis}-{Type} {State} {Equations}},
	volume = {58},
	doi = {10.1137/S0036139996300885},
	number = {2},
	journal = {SIAM Journal on Applied Mathematics},
	author = {Anderssen, Robert S. and Götz, Ivan G. and Hoffmann, Karl-Heinz},
	year = {1998},
	pages = {703--723},
	file = {Full Text PDF:/Users/mihailsmilehins/Zotero/storage/G25S9462/Götz et al. - 1998 - The Global Behavior of Elastoplastic and Viscoelastic Materials with Hysteresis-Type State Equations.pdf:application/pdf},
}

@article{visintin_continuity_1983,
	title = {Continuity {Properties} of a {Class} of {Hysteresis} {Functionals}},
	volume = {32},
	number = {2},
	journal = {Atti del Seminario Matematico e Fisico dell’Università di Modena},
	author = {Visintin, Augusto},
	year = {1983},
	pages = {232--247},
}

@article{vasquez-beltran_modeling_2023,
	title = {Modeling and {Analysis} of {Duhem} {Hysteresis} {Operators} {With} {Butterfly} {Loops}},
	volume = {68},
	doi = {10.1109/TAC.2023.3238177},
	abstract = {In this work, we study and analyze a class of Duhem hysteresis operators that can exhibit butterﬂy loops. We study ﬁrst the consistency property of such operator, which corresponds to the existence of an attractive periodic solution when the operator is subject to a periodic input signal. Subsequently, we study the two deﬁning functions of the Duhem operator such that the corresponding periodic solutions can admit a butterﬂy input–output phase plot. We present a number of examples where the Duhem butterﬂy hysteresis operators are constructed using two zero-level set curves that satisfy some mild conditions.},
	number = {10},
	journal = {IEEE Transactions on Automatic Control},
	author = {Vasquez-Beltran, Marco Augusto and Jayawardhana, Bayu and Peletier, Reynier F.},
	year = {2023},
	pages = {5977--5990},
	file = {PDF:/Users/mihailsmilehins/Zotero/storage/RYM4HHLB/Vasquez-Beltran et al. - 2023 - Modeling and Analysis of Duhem Hysteresis Operators With Butterfly Loops.pdf:application/pdf},
}

@article{canudas_de_wit_new_1995,
	title = {A {New} {Model} for {Control} of {Systems} {With} {Friction}},
	volume = {40},
	doi = {10.1109/9.376053},
	abstract = {In this paper we propose a new dynamic model for friction. The model captures most of the friction behavior that has been observed experimentally. This includes the Stribeck effect, hysteresis, spring-like characteristics for stiction, and varying break-away force. Properties of the model that are relevant to control design are investigated by analysis and simulation. New control strategies, including a friction observer, are explored, and stability results are presented.{\textless}{\textgreater}},
	number = {3},
	journal = {IEEE Transactions on Automatic Control},
	author = {Canudas de Wit, C. and Olsson, H. and Astrom, K.J. and Lischinsky, P.},
	year = {1995},
	keywords = {Hysteresis, Friction, Analytical models, Stability, Control design, Control system synthesis, Control systems, Limit-cycles, Councils, Predictive models},
	pages = {419--425},
	file = {Full Text PDF:/Users/mihailsmilehins/Zotero/storage/HZELU6PS/Canudas de Wit et al. - 1995 - A new model for control of systems with friction.pdf:application/pdf},
}

@book{mielke_rate-independent_2015,
	address = {New York, NY},
	series = {Applied {Mathematical} {Sciences}},
	title = {Rate-{Independent} {Systems}: {Theory} and {Application}},
	isbn = {978-1-4939-2706-7},
	abstract = {This monograph provides both an introduction to and a thorough exposition of the theory of rate-independent systems, which the authors have been working on with a lot of collaborators over 15 years. The focus is mostly on fully rate-independent systems, first on an abstract level either with or even without a linear structure, discussing various concepts of solutions with full mathematical rigor. Then, usefulness of the abstract concepts is demonstrated on the level of various applications primarily in continuum mechanics of solids, including suitable approximation strategies with guaranteed numerical stability and convergence. Particular applications concern inelastic processes such as plasticity, damage, phase transformations, or adhesive-type contacts both at small strains and at finite strains. A few other physical systems, e.g. magnetic or ferroelectric materials, and couplings to rate-dependent thermodynamic models are considered as well. Selected applications are accompaniedby numerical simulations illustrating both the models and the efficiency of computational algorithms. In this book, the mathematical framework for a rigorous mathematical treatment of "rate-independent systems" is presented in a comprehensive form for the first time. Researchers and graduate students in applied mathematics, engineering, and computational physics will find this timely and well written book useful.},
	number = {193},
	publisher = {Springer Science+Business Media},
	author = {Mielke, Alexander and Roubíček, Tomáš},
	year = {2015},
	keywords = {Mathematics / Mathematical Analysis, Science / Physics / Mathematical \& Computational, Science / Mechanics / Solids, Technology \& Engineering / Manufacturing, Science / Physics / General, Mathematics / Differential Equations / General},
	file = {bok%3A978-1-4939-2706-7:/Users/mihailsmilehins/Zotero/storage/BNIFDQMI/bok%3A978-1-4939-2706-7.pdf:application/pdf},
}

@article{logemann_class_2008,
	title = {A {Class} of {Differential}-{Delay} {Systems} {With} {Hysteresis}: {Asymptotic} {Behaviour} of {Solutions}},
	volume = {69},
	doi = {10.1016/j.na.2007.05.025},
	abstract = {A class of differential-delay systems with hysteresis is considered. Conditions ensuring boundedness of solutions and related asymptotic and integrability properties are expressed in terms of data associated with the linear component of the overall system and a Lipschitz constant associated with the hysteretic component.},
	number = {1},
	journal = {Nonlinear Analysis: Theory, Methods \& Applications},
	author = {Logemann, H. and Ryan, E. P. and Shvartsman, I.},
	year = {2008},
	keywords = {Asymptotic behaviour, Differential-delay equations, Hysteresis operators, Stability},
	pages = {363--391},
	file = {ScienceDirect Full Text PDF:/Users/mihailsmilehins/Zotero/storage/X8JFI39K/Logemann et al. - 2008 - A Class of Differential-Delay Systems With Hysteresis Asymptotic Behaviour of Solutions.pdf:application/pdf;ScienceDirect Snapshot:/Users/mihailsmilehins/Zotero/storage/LDWRFYVJ/S0362546X07003744.html:text/html},
}

@article{logemann_systems_2003,
	title = {Systems {With} {Hysteresis} in the {Feedback} {Loop}: {Existence}, {Regularity} and {Asymptotic} {Behaviour} of {Solutions}},
	volume = {9},
	doi = {10.1051/cocv:2003007},
	abstract = {An existence and regularity theorem is proved for integral equationsof convolution type which contain hysteresis nonlinearities. Onthe basis of this result, frequency-domain stability criteria arederived for feedback systems with a linear infinite-dimensionalsystem in the forward path and a hysteresis nonlinearity in thefeedback path. These stability criteria are reminiscent of theclassical circle criterion which applies to static sector-boundednonlinearities. The class of hysteresis operators underconsideration contains many standard hysteresis nonlinearitieswhich are important in control engineering such as backlash (orplay), plastic-elastic (or stop) and Prandtl operators. Whilst themain results are developed in the context of integral equations ofconvolution type, applications to well-posed state space systemsare also considered.},
	journal = {ESAIM: Control, Optimisation and Calculus of Variations},
	author = {Logemann, Hartmut and Ryan, Eugene P.},
	year = {2003},
	keywords = {Absolute stability, asymptotic behaviour, frequency-domainstability criteria, hysteresisinfinite-dimensional systems, integralequations, regularity of solutions.},
	pages = {169--196},
	file = {Full Text:/Users/mihailsmilehins/Zotero/storage/P5QPIX8Y/Logemann and Ryan - 2003 - Systems With Hysteresis in the Feedback Loop Existence, Regularity and Asymptotic Behaviour of Solu.pdf:application/pdf},
}

@article{barabanov_absolute_1979,
	title = {Absolute {Stability} of {Control} {Systems} {Having} {One} {Hysteresis}-{Like} {Nonlinearity}},
	volume = {40},
	number = {12},
	journal = {Automation and Remote Control},
	author = {Barabanov, N. E. and Yakubovich, V. A.},
	year = {1979},
	pages = {5--12},
	file = {PDF:/Users/mihailsmilehins/Zotero/storage/6FG5PY2S/Barabanov and Yakubovich - 1979 - Absolute Stability of Control Systems Having One Hysteresis-Like Nonlinearity.pdf:application/pdf},
}

@article{yakubovich_method_1965,
	title = {The {Method} of {Matrix} {Inequalities} in the {Stability} {Theory} of {Nonlinear} {Control} {Systems}. {II}. {Absolute} {Stability} in a {Class} of {Nonlinearities} {With} a {Condition} on the {Derivative}.},
	volume = {26},
	language = {ru},
	number = {4},
	journal = {Automation and Remote Control},
	author = {Yakubovich, V. A.},
	year = {1965},
	pages = {577--590},
	file = {PDF:/Users/mihailsmilehins/Zotero/storage/27HT4GBB/Yakubovich - 1965 - The Method of Matrix Inequalities in the Stability Theory of Nonlinear Control Systems. II. Absolute.pdf:application/pdf},
}

\end{document}